\numberwithin{algorithm}{chapter}
\begin{document}
\frontmatter    

\title  {Analysis of String Sorting using Heapsort}
\authors  {Igor Stassiy}

\maketitle


\setstretch{1.3}  


\pagestyle{fancy}  
\fancyhead[RE,LO]{\sffamily\bfseries\nouppercase{\rightmark}}
\fancyhead[LE,RO]{\thepage}
\vspace*{1cm}
\textbf{\large Hilfsmittelerkl\"arung}\\[1em]
Hiermit versichere ich, die vorliegende Arbeit selbst\"andig verfasst und keine anderen als die angegebenen Quellen und Hilfsmittel benutzt zu haben.
\\[0.3cm]

\textbf{\large Non-plagiarism Statement}\\[1em]
Hereby I confirm that this thesis is my own work and that I have documented all sources used.

Saarbr\"ucken, den 03.\ M\"arz 2014,\\[1.5cm]
\hspace*{1cm}(Igor Stassiy)\\[2cm]

\textbf{\large Einverst\"andniserkl\"arung}\\[1em]
Ich bin damit einverstanden, dass meine (bestandene) Arbeit in beiden Versionen
in die Bibliothek der Informatik aufgenommen und damit ver\"offentlicht wird.
\\[0.3cm]

\textbf{\large Declaration of Consent}\\[1em]
Herewith I agree that my thesis will be made available through the library of the Computer Science Department at Saarland University.

Saarbr\"ucken, den  03.\ M\"arz 2014,\\[1.5cm]
\hspace*{1cm}(Igor Stassiy)
\clearpage
\pagestyle{empty}  

\null\vfill\vfill\vfill\vfill\vfill
\textit{To my grandparents}

\begin{flushright}
- Igor
\end{flushright}

\vfill\vfill\vfill\vfill\vfill\vfill\null
\clearpage  

\pagestyle{empty}

\mbox{}
\clearpage
\setstretch{1.3}  

\addtotoc{Abstract}  
\abstract{
\addtocontents{toc}{\vspace{1em}}  

In this master thesis we analyze the complexity of sorting a set of strings.
It was shown that the complexity of sorting strings can be naturally expressed
in terms of the prefix trie induced by the set of strings. The model of
computation takes into account symbol comparisons and not just
comparisons between the strings. The analysis of upper and lower bounds for some
classical algorithms such as Quicksort and Mergesort in terms of such a model was shown.
Here we extend the analysis to another classical algorithm - Heapsort. We also
give analysis for the version of the algorithm that uses Binomial heaps as a
heap implementation.

\clearpage  
}

\pagestyle{empty}
\mbox{}
\clearpage
\setstretch{1.3}  

\acknowledgements{
\addtocontents{toc}{\vspace{1em}}  
I would like to thank my supervisor Prof. Dr. Raimund Seidel for suggesting the
question researched in this thesis and for numerous meetings that helped me move 
forward with the topic. With his expert guidance, more often than not, our
discussions ended in a fruitful step forwards towards completing the results to
be presented in this thesis.

I would also like to thank Thatchaphol Saranurak for
useful discussions on Binomial heapsort and for his valuable suggestions. My
gratitude also goes to Dr. Victor Alvarez for agreeing to review this thesis. }
\clearpage


\pagestyle{fancy}  

\tableofcontents

\setstretch{1.5}  
\clearpage  



\setstretch{1.3}  

\addtocontents{toc}{\vspace{2em}}  

\mainmatter   
\pagestyle{fancy}  


\chapter{Introduction}

\newtheorem{myob}{Observation}
\newtheorem{myde}{Definition}
\newtheorem{mythe}{Theorem}
\newtheorem{myle}{Lemma}
\newtheorem{myco}{Corollary}
\newtheorem{mycl}{Claim}

Sorting is arguably one of the most common routines used both in daily life
and in computer science. Efficiency or computation complexity of this procedure
is usually measured in terms of the number of ``key'' comparisons made by the algorithm.

While mainly the problem of sorting is considered solved, up to date there
are publications improving the efficiency of sorting when used on real-world
data or when used in a different computational model than the standard ``key
comparison'' model ~\cite{DBLP:conf/focs/HanT02,DBLP:conf/stoc/Han02}.

It is well known that the best one can achieve in the ``key comparison'' model 
of computation is $O(n\log(n))$ comparisons when sorting $n$ keys (more
precisely the lower bound is $\log_2(n!)$).
However when the keys to be sorted are strings, this complexity measure doesn't adequately 
reflect the running time of the sorting algorithm. The reason for this is that
comparing two strings lexicographically takes time proportional to the length of
their longest common prefix plus one and the model of counting key comparisons simply doesn't
take this fact into account. 

A more appropriate model of counting comparisons is to count the total number of
symbol comparisons made by the algorithm. It is however not clear what fraction
of time will long keys be compared, hence taking a long time, and how many
comparisons will involve short keys, hence taking only short time. 

\section{Previous work}

The core of the problem boils down to choosing parameters of the input
strings that well describe the running time of the algorithm on this input
instance ~\cite{DBLP:conf/soda/Seidel10}. If we choose only $n$ - the number of
strings, then by adding a large common prefix $P$ to all of the strings, we can force a large number of
symbol comparisons, as comparing two strings would require at least $|P|+1$
operations. If chosen just $m$ - the total length of all the strings, adding
a common suffix $S$ to all the strings the number of symbol comparisons stays
the same, while $m$ increases by $n S$

The problem of parametrization of sorting was dealt with in two ways. The first
was developing specialized algorithms for sorting strings. Some of these
algorithms perform $O(m+n\log(n))$ comparisons and can be argued to be worst-case optimal
\cite{DBLP:conf/soda/BentleyS97}. The other approach was to assume that the
input strings are generated from a random distribution and analyze standard
key-comparison algorithms using this assumption
~\cite{DBLP:conf/icalp/ValleeCFF09}.

Both approaches are unacceptable in certain cases: the first one - since
often standard sorting algorithms are used for string sorting (such as the sort
command in UNIX or in programming languages like C++, Java). The second approach
is unsatisfying because often the input data does not come from a random distribution.

In ~\cite{DBLP:conf/soda/Seidel10} it is argued that when sorting a set of $n$
strings $S$ using the lexicographical comparison, the entire prefix structure of
the strings (or the prefix trie) should be used to parametrize the complexity of 
sorting. Hence we can apply the term ``data-sensitive'' sorting as the running
time depends on the instance $S$.

The following definitions will help to arrive to the precise problem statement:

\begin{myde}
For a set of strings $S$ the prefix trie $T(S)$ is a trie that has all the
prefixes of strings in $S$ as its nodes. A string $w$ is a child of another
string $v$ iff $v$ is one symbol shorter than $w$ and $v$ is a prefix of $w$.
\end{myde}

\begin{myde}
Denote $\vartheta(w)$, the thickness of the node $w$ - the number of leaves
under the node $w$.
\end{myde}

Intuitively $\vartheta(w)$ is the number of strings that
have $w$ as a prefix. Note that the prefix trie $T(S)$ is not used as a
data structure but merely as a parametrization for the running time of the
algorithm. 

\begin{myde} 
Define the \emph{reduced trie} $\hat{T}(S)$ as the trie induced by the node
set \\ $\{w \in P(S) \: | \: \vartheta(w) > 1 \}$.
\end{myde}

\begin{figure}[h]
    \centering    
    \includegraphics[scale=0.5]{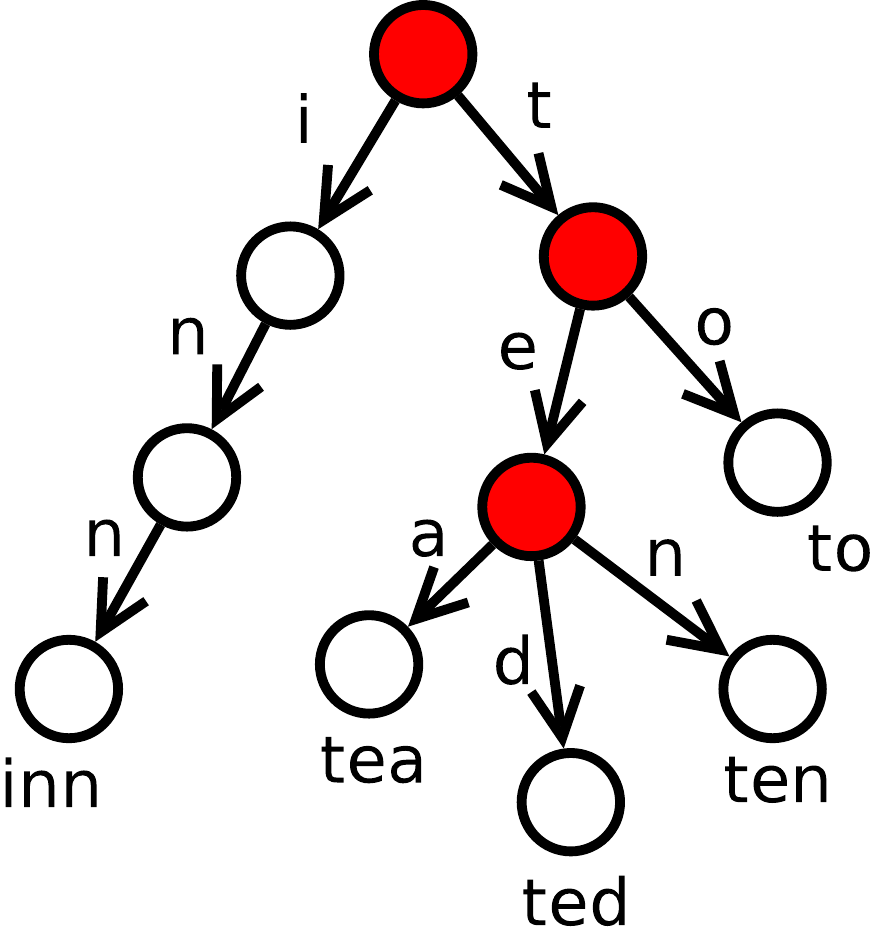}    
    \caption{A reduced trie example}
    \label{fig:reducedtrie}
\end{figure}

A \emph{reduced trie} simply discards strings which are suffixes of only
themselves. In ~\cite{DBLP:conf/soda/Seidel10} it is suggested that the vector
$(\vartheta(w))_{w \in P(S)}$ is a good parametrization choice and seems to
determine the complexity of sorting.

Here are the results presented in ~\cite{DBLP:conf/soda/Seidel10} that
are relevant to the results that are to be shown in this thesis: Denote $H_n =
\sum_{1 \leq i \leq n}1/i \approx 1+\log(n)$.
\begin{mythe}
\label{seideltheorem}
Let $Q(n) = 2(n+1)H_n-4n$ be the expected number of key comparisons performed by 
Quicksort on a set of $n$ keys.
Assume Quicksort is applied to string set $S$ employing the standard string
comparison procedure. Then the expected number of symbol comparisons performed
is exactly
\begin{equation}
\sum_{w \in P(S)} Q(\vartheta(w))
\end{equation}
\end{mythe}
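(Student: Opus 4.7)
The plan is to apply linearity of expectation after decomposing each symbol comparison according to which prefix-trie node it ``probes''. For distinct strings $s_i, s_j \in S$ with longest common prefix of length $\ell_{ij}$, one lexicographic comparison performs exactly $\ell_{ij}+1$ symbol comparisons, namely the comparisons of characters at positions $1, 2, \ldots, \ell_{ij}+1$. The combinatorial observation driving the proof is that $\ell_{ij}+1$ equals the number of trie nodes $w \in P(S)$ that are a prefix of both $s_i$ and $s_j$ (the prefixes of lengths $0, 1, \ldots, \ell_{ij}$). Each symbol comparison can therefore be charged to a unique such node $w$.

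Let $X_{ij}$ denote the random number of times Quicksort on $S$ compares $s_i$ and $s_j$ as keys; each such key comparison triggers the full lexicographic routine. The total number of symbol comparisons is then
\[
\sum_{i<j} X_{ij}(\ell_{ij}+1) \;=\; \sum_{i<j} X_{ij} \sum_{\substack{w \in P(S) \\ w \text{ prefix of } s_i,\, s_j}} 1 \;=\; \sum_{w \in P(S)}\; \sum_{\substack{i<j \\ s_i, s_j \in S_w}} X_{ij},
\]
where $S_w = \{s \in S : w \text{ is a prefix of } s\}$, so $|S_w| = \vartheta(w)$. Taking expectations and swapping sums, the theorem reduces to showing that for every fixed $w$,
\[
\mathbb{E}\!\left[\sum_{s_i,s_j \in S_w,\, i<j} X_{ij}\right] \;=\; Q(\vartheta(w)).
\]

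The key step I would use to establish this identity is the standard ``restriction'' property of Quicksort: if one looks only at the key comparisons performed among elements of a set $T \subseteq S$ that is contiguous in the sorted order of $S$, then these comparisons have the same joint distribution as those performed by Quicksort run on $T$ alone. Because strings sharing the prefix $w$ occupy a contiguous block in the lexicographic order, $S_w$ satisfies this hypothesis, so the inner sum has the same distribution as the number of key comparisons made by Quicksort on $\vartheta(w)$ keys, whose expectation is $Q(\vartheta(w))$ by the classical analysis. Summing over $w \in P(S)$ then yields the stated formula exactly (not merely in expectation up to lower-order terms), since the decomposition of each symbol comparison into a node-charge is deterministic.

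The main obstacle is the justification of the restriction property. I would prove it by induction on $|S|$: when the first pivot $p$ is chosen uniformly, either $p \in S_w$, in which case $p$ partitions $S_w$ into two subsets that recurse independently and the standard Quicksort recurrence for $S_w$ is reproduced, or $p \notin S_w$, in which case all of $S_w$ lies entirely in one of the two sides (again by contiguity in sorted order) and no cross-comparison involving two members of $S_w$ is incurred at this level. Conditioning on the position of the first pivot falling inside $S_w$, the induced distribution on the split of $S_w$ is exactly uniform, yielding the recurrence $Q(n) = (n-1) + \tfrac{2}{n}\sum_{k<n} Q(k)$ whose solution is $2(n+1)H_n - 4n$. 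Once this restriction lemma is in hand, the theorem follows by the decomposition above without any further calculation.
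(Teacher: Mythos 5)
Your proposal is correct and follows essentially the same route the paper takes: the charging of each symbol comparison to the trie node given by the common prefix read so far, combined with your ``restriction property'' for order-consecutive subsets, is precisely the content of Lemma~\ref{seidellemma} --- strong $f(t)$-faithfulness with $f = Q$ --- which the paper identifies as the essence of the proof of Theorem~\ref{seideltheorem}. Your inductive justification of the restriction lemma via the first pivot landing inside $S_w$ is the standard argument for why Quicksort is strongly $Q(t)$-faithful, so nothing further is needed.
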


The expression above is minimized when the trie $T(S)$ is highly balanced. Based
on this fact one can show that the expected number of comparisons is at least
$n\log^2(n)/\log(a)$. A similar result holds for Mergesort by simply
substituting $Q$ by $R$, where $R(n) = n\log_2(n)$. Note that this already
implies that it is \textit{suboptimal} to use a generic algorithm to sort
strings (as there are asymptotically better results
~\cite{DBLP:conf/soda/BentleyS97}), however this does not discourage the
analysis of such a parametrization, on the contrary, as most of the time exactly
the generic algorithms are used.
An advantage of such a parametrization is that it allows to compare sorting
algorithms relative to each other on a given input instance.

The essence of the proof of the theorem \ref{seideltheorem} (see proof of
theorem 1.1 in ~\cite{DBLP:conf/soda/Seidel10}) above is captured by the following lemma:
Call a comparison algorithm $f(t)$-\emph{faithful}, when sorting a set $X$ of
randomly permuted keys $x_1 < x_2 \ldots < x_N$, for any $0 < t \leq N$ and for
any ``range'' $X_{(i, i+t]} = \{x_{i+1}, \ldots, x_{i+t}\}$ of $t$
order-consecutive elements of $X$ the expected number of comparisons of the 
elements of $X_{(i, i+t]}$ is at most $f(t)$. Call it \emph{strongly} $f(t)$
faithful if this expected number of comparisons among elements $X_{(i, i+t]}$ is
\emph{exactly} $f(t)$.
\begin{myle}
If an $f(t)$-faithful (strongly $f(t)$ faithful) sorting algorithm is used to
sort a set S of randomly permuted strings using the standard string comparison
procedure, then the expected number of symbol comparisons is at most 
\begin{equation}
\sum_{w \in P(S)} f(\vartheta(w))
\end{equation}
\label{seidellemma}
\end{myle}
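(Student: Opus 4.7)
The plan is to charge each individual symbol comparison to a unique node of the prefix trie $P(S)$ and then to apply $f$-faithfulness locally at each such node.

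First I would set up the charging. In the standard lexicographic comparison of two strings $x_i, x_j$, a symbol comparison that inspects the $(k+1)$-th character takes place only after the two strings have been confirmed to agree on their first $k$ characters. I charge such an inspection to the node $w \in P(S)$ that equals this common prefix of length $k$. Dually, a single string comparison between $x_i$ and $x_j$ whose longest common prefix is $u$ contributes exactly one symbol inspection at every prefix of $u$, so it is charged to a node $w$ precisely when $w$ is a prefix of both $x_i$ and $x_j$. Letting
\begin{equation*}
S_w \;=\; \{\, x \in S \,:\, w \text{ is a prefix of } x \,\}
\end{equation*}
and $C(w)$ denote the number of symbol comparisons charged to $w$, this identifies $C(w)$ with the number of string comparisons the algorithm performs between pairs drawn from $S_w$: whenever both strings lie in $S_w$ their common prefix includes $w$, so the comparison has not terminated before position $|w|+1$ and hence must inspect it.

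Next I would invoke $f(t)$-faithfulness. By definition $|S_w|=\vartheta(w)$, and since strings sharing a prefix form a contiguous block in lexicographic order, $S_w$ is an order-consecutive range within the sorted set $S$. A uniformly random ordering of $S$ induces a uniformly random ordering of its elements, and the sorting algorithm is oblivious to key structure, so the $f(t)$-faithful property applies with $X = S$ and the range $S_w$, giving $\mathbb{E}[C(w)] \leq f(\vartheta(w))$. Summing over $w \in P(S)$ and using linearity of expectation yields the stated bound; in the strongly faithful case the inequality becomes equality throughout.

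The step most worth verifying carefully is the double-accounting identity between $C(w)$ and string comparisons restricted to $S_w$: one must check that every such string comparison contributes exactly one symbol inspection at $w$, never zero and never more than one. Both facts follow directly from the definition of lexicographic comparison, but spelling them out explicitly is, I think, the only real content of the argument --- the rest is bookkeeping followed by a single appeal to faithfulness.
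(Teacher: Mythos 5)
Your proof is correct and is essentially the intended argument: the thesis itself does not reprove this lemma but defers to Seidel's paper, and the charging scheme you describe (one symbol inspection per common prefix $w$, so that comparisons charged to $w$ are exactly the string comparisons within the order-consecutive block $S_w$ of size $\vartheta(w)$, followed by one appeal to $f$-faithfulness and linearity of expectation) is precisely that proof. The double-counting identity you flag as the crux is indeed the only substantive step, and your verification of it is sound.
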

The lemma \ref{seidellemma} allows us to abstract from analysis of sorting
strings and instead focus on sorting general keys (in case of strings -
symbols).
The intuition behind the theorem, is that the algorithm does few comparisons overall if for a range of
the elements that appear consecutive in the order few comparisons is done.

\section{Problem statement}
The motivation for this thesis is to give a similar analysis for Heapsort.
Ideally we would like to get a result similar to the one in the theorem
\ref{seideltheorem}, namely, the expected number of symbol comparisons when
using Heapsort on a set of $n$ string $S$ to be of the form:

\begin{equation}
\sum_{w \in P(S)} B(\vartheta(w))
\end{equation}

Where $B(k)$ is the time required for Heapsort to sort $k$ keys. Note that the
result would only be interesting if the functions $Q(k), B(k)$ and $R(k)$ differ
by a constant factor. As only then we can compare the algorithms relative to
each other in an obvious way. Interest to Heapsort is well deserved since
it is an in-place asymptotically worst-case optimal algorithm for sorting in a
``key-comparison'' based model, unlike most versions of Quicksort and Mergesort.
Although the algorithm is asymptotically optimal, we will focus on getting the
right constants in the analysis.

In order to show this, we need to demonstrate that for any subrange of the
consecutive in-order elements few comparisons is done (ideally $n\log_2(n)$).

\section{Overview}

This thesis is divided into two main parts: analysis for Heapsort using the
standard binary heap data structure \ref{chapter:1} and the one with a binomial
heap~\cite{Vuillemin:1978:DSM:359460.359478} as a heap implementation
\ref{chapter:2}. In both section algorithms will be
broken into stages. Both parts will consider the stages of building the heap and
the actual sorting stage. In the beginning of each part we will show why
randomization is needed in order to achieve useful bounds on the number of
comparisons.

For \textit{Binary Heapsort} we will firstly consider a different model of
building the heap and prove bounds in this model, but later in the chapter the
standard way of building the heap will be analyzed, and tight coupling between
the two models will lead to the main result.

\chapter{Heapsort analysis}\label{chap2}
\label{chapter:1}
\nocite{*} 
\section{Overview}

Let us firstly recall the original Heapsort algorithm
~\cite{williams1964algorithm}.
The algorithm sorts a set, stored in an array, of $n$ keys using the binary heap
data structure.
Throughout this thesis, assume that all the heaps are \textit{max}-heaps and
that all logarithms are base $2$, unless otherwise mentioned.

We assume that the heap is implemented as an array with indexes starting from
$1$. The navigation on the heap can be accomplished with the following
procedures:

\begin{algorithmic}[1]
\Function{Left}{$a$}
    \State{\Return{2$a$}}
\EndFunction
\end{algorithmic}

\begin{algorithmic}[1]
\Function{Right}{$a$}
    \State{\Return{2$a$+1}}
\EndFunction
\end{algorithmic}

\begin{algorithmic}[1]
\Function{Parent}{$a$}
    \State{\Return{$a$/2}}
\EndFunction
\end{algorithmic}

\begin{myde}
The \textit{heap order} is maintained when the value at a node is larger or
equal to the values at the left and right children.
\end{myde}
Call a subheap the set of all the
elements under a particular node. As a subroutine to the algorithm, the
procedure \textit{MaxHeapify} is called on a node that has both its left and
right subheaps have heap order, but possibly not the itself node and its children. 
The procedure restores the heap order under a particular node:

\begin{algorithmic}[1]
\Function{MaxHeapify}{$A$,$i$}
    \label{alg:maxheapify}
    \State{$l \gets Left(i)$}
    \State{$r \gets Right(i)$}
    \If{$l \leq A.size \: {\bf and} \: A[l] > A[i]$}
      \State{$largest \gets l$}   
    \Else
      \State{$largest \gets r$}
    \EndIf
    
    \If{$r \leq A.size \: {\bf and} \: A[r] > A[largest]$}
      \State{$largest \gets r$}
    \EndIf
    \If{$largest \neq i$}
      \State{${\bf swap} \: A[i] \: with \: A[largest]$}
      \State{$MaxHeapify(A, largest)$}
    \EndIf    
\EndFunction
\end{algorithmic}

Having the \textit{MaxHeapify} procedure at hand, the \textit{BuildHeap} procedure
is straightforward: \textit{MaxHeapify} is called on all the nodes, starting from
the leaves of the heap in a bottom-up fashion. 

\begin{algorithmic}[1]
\Function{BuildHeap}{$A$}
    \State{$A.heap\-size = A.size$}
    \For{\textt{i = A.size \: \bf{downto} 1}}
      \State{$MaxHeapify(A, i)$}
    \EndFor
\EndFunction
\end{algorithmic}

Once the heap is built, we can run the actual Heapsort algorithm by
\textit{Popping} the root of the heap and recursively \textit{Sifting up} the
larger of roots of the left and right subheaps using the \textit{MaxHeapify}
procedure.

\begin{algorithmic}[1]
\Function{Heapsort}{$A$}
    \label{alg:heapsort}
    \State{$BuildHeap(A)$}
    \State{$A.heap\-size = A.size$}
    \For{\textt{i = A.size \: \bf{downto} 2}}
      \State{{\bf swap} A[1] \: with \: A[i]}
      \State{A.heapsize = A.heapsize-1}
      \State{$MaxHeapify(A, i)$}
    \EndFor
\EndFunction
\end{algorithmic}

In this thesis, we will give analysis for a version of the \textit{Heapsort}
algorithm with the Floyd's improvement ~\cite{DBLP:journals/cacm/Floyd64a}. In
~\cite{DBLP:journals/jal/SchafferS93} it is shown that this version is
average-case optimal even up to constants, unlike the original version of the 
algorithm. 

The modification of the algorithm lies in the \textit{MaxHeapify} procedure. Call
a node \textit{unstable} if it is currently being \textit{sifted down} (see
element $i$ in \ref{alg:maxheapify}). In the original version of the
algorithm, the children of the \textit{unstable} node are compared and the
larger one is compared to the \textit{unstable}, hence yielding $2$
comparisons each time the \textit{unstable} element is moved one level down in
the heap.
Since (see line $5$ in \ref{alg:heapsort}) the \textit{unstable} element
is taken from the very bottom of the heap, it is likely that it will have to be
\textit{sifted} all the way \textit{down} the heap.

In the Floyd's modification of the algorithm, the root node is not replaced with
the right-most element, but instead after the maximal element is popped a
\textit{hole} is formed. During the \textit{SiftDown} the larger of the two children of the
\textit{hole} is promoted up. Once the \textit{hole} is at the bottom-most level
of the heap, we put the \textit{right-most} heap element to the \textit{hole}
position, and possibly \textit{sift} it up to restore the \textit{heap order}.

\setlength\columnsep{20pt}
\begin{multicols}{2}
\begin{algorithmic}[1]
\Function{SiftDown}{$A$,$i$}
    \label{alg:siftdown}
    \State{$l \gets Left(i)$}
    \State{$r \gets Right(i)$}
    \If{$r \leq A.heapsize \: {\bf and} \: A[l] < A[r]$}
      \State{${\bf swap} \: A[r] \: {\bf and} \: A[i]$}
      \State{${\bf return} \: SiftDown(A, r)$}
    \Else
      \If{$l \leq A.heapsize$}
        \State{${\bf swap} \: A[l] \: {\bf and} \: A[i]$}
        \State{${\bf return} \: SiftDown(A, l)$}
      \EndIf
      \State{${\bf return} \:i$}
    \EndIf    
\EndFunction
\end{algorithmic}
\columnbreak
\begin{algorithmic}[1]
\Function{SiftUp}{$A$,$i$}
    \label{alg:siftup}
    \If {$i \neq 1$}    
    \State{$p \gets Parent(i)$}
    \If{$A[p] < A[i]$}
      \State{${\bf swap} \: A[p] \: {\bf and} \: A[i]$}
      \State{$SiftUp(A, p)$}   
    \EndIf   
    \EndIf 
\EndFunction
\end{algorithmic}
\end{multicols}

The reason for efficiency of the modified version of the algorithm is that when
the \textit{SiftUp} is called, the element being \textit{sifted up} was a leave
of the heap and hence it is intuitively not likely to go high up the heap. On
the other hand, in the original version of the algorithm we need to make $2$
comparisons per level of recursion and we always need to descend to the leaves
of the heap.

\begin{algorithmic}[1]
\Function{MaxHeapifyFloyd}{$A$,$i$}
    \label{alg:maxheapifyfloyd}
    \State{$j \gets SiftDown(A, i)$}
    \State{${\bf swap} \: A[A.heapsize] \: {\bf and} \: A[j]$}
    \State{$SiftUp(A, j)$}
\EndFunction
\end{algorithmic}

For a more detailed analysis of the algorithm see
~\cite{DBLP:journals/jal/SchafferS93}.

The following part of the chapter is logically divided into two parts.
In the first part, we consider the process of building a heap and count the
number of comparisons this stage of the algorithm makes. In the second
part the sorting part of the algorithm will be analyzed.

\section{Randomization}

In order to analyze performance of the \textit{Heapsort} algorithm,
let us define the setting more rigorously: let
$X= [x_1, x_2, \ldots x_n]$ such that $x_i \leq x_{i+1}$ for all $1 \leq i < n$,
$X_{i,i+r-1} = [x_{i}, \ldots, x_{i+r-1}] \subseteq X$ be an order consecutive
subrange of $X$.

In short, we would like to count the number of comparisons between the
elements of $X_{i, i+r-1}$ (possibly independent of $i$) made by the
\textit{Heapsort} algorithm when executed on $X$. We will frequently use
notation $X_r$ (omitting the index $i$) as most the results apply to any $i$.

When we talk about comparisons, only the comparisons between the elements of the
set $X_r$ are accounted for. For convenience, let us denote the $X_r$ elements
as \textcolor{red}{red} elements and the elements $X \setminus X_r$ as
\textcolor{blue}{blue} elements.

Before diving into analysis, let us consider the following example: let
the set $X = [1, 2, \ldots, n]$ and the set $X_{2\left \lfloor{ \log(n) }\right
\rfloor +1} = [n-2\left \lfloor{ \log(n) }\right \rfloor, \ldots, n]$ for $n =
2^k-1$ for some $k$. In the example below $k$ is $4$.

\begin{figure}[h]
    \centering    
    \includegraphics[scale=0.5]{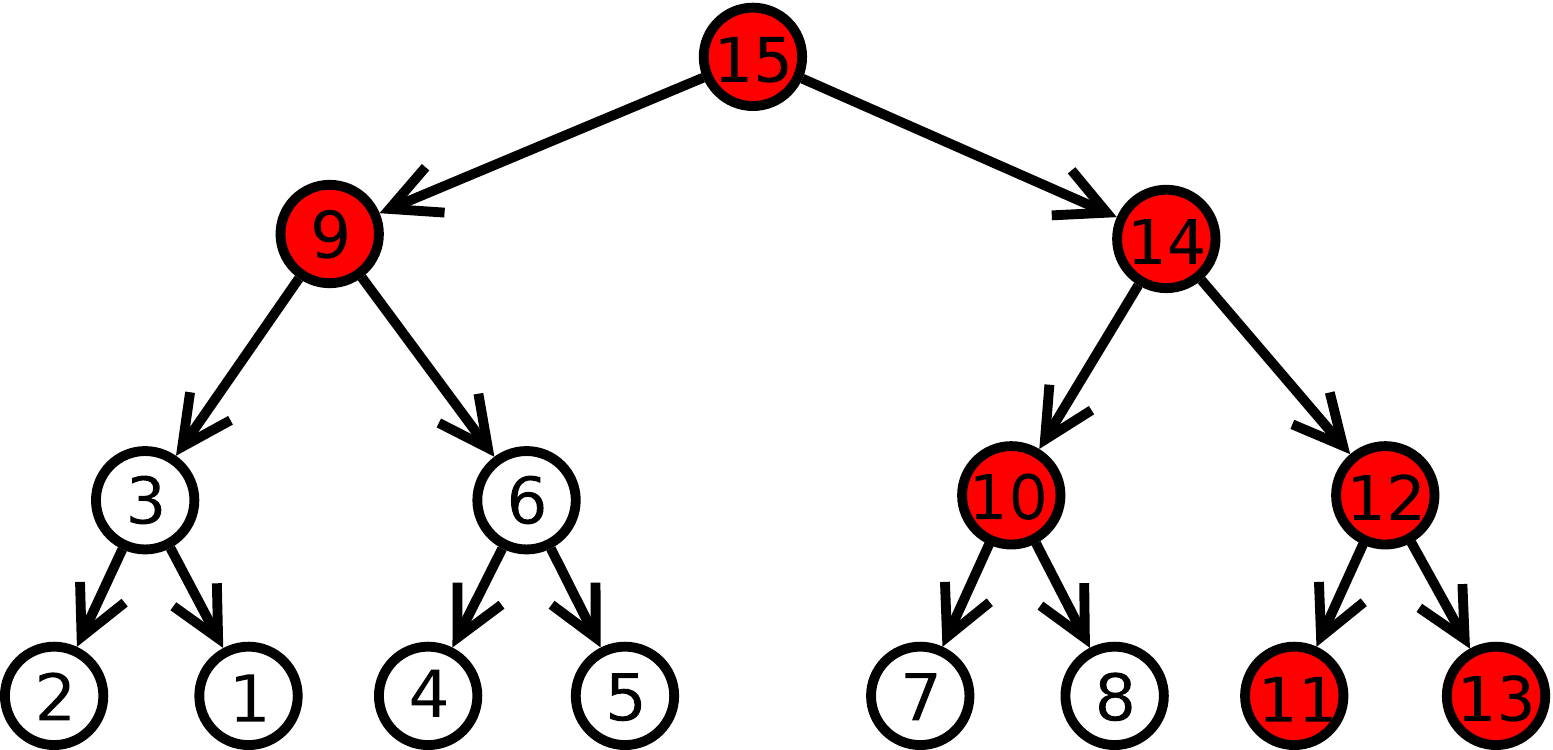}    
    \caption{A heap example. The set $X_r$ is in red.}
    \label{fig:exampleheap}
\end{figure}

It is not hard to see that we can place the elements $X_r$ as shown in the
example above, namely so that the elements of $X_r$ occupy the right spine of
the heap and each one (except for the root) have a left neighbor coming from
$X_r$. 

\begin{mycl}
The number of comparisons among elements of $X_r$ done by Heapsort
is $O(r^2)$.
\end{mycl}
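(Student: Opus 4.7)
The plan is to decompose the red-red comparison count into contributions from the BuildHeap phase and from the extraction loop, and bound each by $O(r^2)$. Crucially, in the example's regime $r = 2\lfloor \log n \rfloor + 1 = \Theta(\log n)$, so $r \log n = \Theta(r^2)$, and a bound of $O(r \log n)$ per phase will suffice.

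For BuildHeap I would appeal to the standard sum-of-heights analysis: only the MaxHeapify calls rooted at an element's $O(\log n)$ ancestors can ever touch it, so any individual element (and in particular any red one) participates in at most $O(\log n)$ comparisons during this phase. Summing over the $r$ red elements bounds the number of comparisons that touch a red element, and hence a fortiori the red-red count, by $O(r \log n)$.

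For the Sort phase I would bound the red-red work per extraction. Floyd's SiftDown compares only the left and right children of the current hole at each level, so a red-red event in SiftDown requires both children to be red at that moment; in the depicted arrangement this holds at every level along the right spine, producing $\Theta(\log n)$ red-red comparisons on the first extraction. The subsequent SiftUp compares parent and child at each level and contributes at most $O(\log n)$ further red-red comparisons by the same height argument. Aggregating over the $O(r)$ extractions during which red elements remain in the heap yields $O(r \log n) = O(r^2)$.

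The main obstacle will be extending the per-extraction bound beyond the literal initial configuration: after an extraction the right-spine-plus-left-sibling pattern is disturbed, and one must argue it cannot be reassembled in a way that forces $\Theta(\log n)$ red-red comparisons extraction after extraction without accounting. I would handle this by a structural charging scheme, observing that at any moment there are at most $O(r)$ adjacent (parent-child or sibling) pairs of red elements in the heap, so each individual sift operation can encounter at most $O(r)$ red-red comparisons, and each such adjacent pair is broken by an intervening swap after only $O(1)$ red-red comparisons. Summed over all relevant sift operations this still gives $O(r \log n)$, which is $O(r^2)$ in the stated regime.
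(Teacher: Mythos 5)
Your argument is a valid proof of the statement as literally written, but it proves the opposite direction from the one the paper needs here. This claim sits inside the example that motivates randomization: the point is that placing $X_r$ along the right spine (each spine node having a red left sibling) \emph{forces} $\Theta(r^2)$ red--red comparisons, so the interesting content is the lower bound $\Omega(r^2)$, not the upper bound. The paper's one-line proof is exactly that computation: every pop sends the hole down the right spine, and at each of the roughly $r/2$ spine levels the two children of the hole are both red, so the successive pops contribute about $r/2$, $r/2$, $\lfloor r/2\rfloor-1,\dots$ red--red comparisons, summing to $\Theta(r^2)$. Since $r=\Theta(\log n)$ in this example, that is asymptotically worse than the $r\log r+O(r)$ target, which is what justifies shuffling the input. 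An $O(r^2)$ upper bound alone cannot play that motivating role, so as a replacement for the paper's proof your argument misses the essential point.

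On the upper bound itself: it is correct, but the ``structural charging scheme'' you introduce for the sort phase is unnecessary. Floyd's \textit{SiftDown} makes one comparison per level (between the two children of the hole) and \textit{SiftUp} at most one per level, so every extraction performs $O(\log n)$ comparisons in total, red--red or otherwise; and since $X_r$ consists of the $r$ largest keys, only the first $r$ extractions can touch a red element. That alone gives $O(r\log n)=O(r^2)$ for the sort phase, and your $O(r\log n)$ BuildHeap bound is fine. If you keep the upper-bound reading, you should add the matching lower-bound computation, since that is what the claim is actually there to establish.
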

\begin{proof}
The Floyd's modification algorithm would have to go to the end of the right
spine of the heap with every pop operation, hence causing $r/2$ operations on
the first \textit{sift down}, $r/2$ on the second and $\left \lfloor{r/2}\right
\rfloor-1$ on the third etc.
\end{proof}

The proposed solution is to randomly shuffle the initial array of the elements
$X$. It is well known that a uniform permutation of an array can be obtained in
$O(|X|)$ time (see \cite{DBLP:books/daglib/0023376} page $124$).
A ``uniform" heap is highly unlikely to have such a degenerate distribution of
the elements. A rigorous definition of a ``uniform'' heap is yet to come further
in the chapter.

\begin{myde}
A uniformly at random permuted array $A_{\sigma}$ is an array such that
every permutation of $A$ occurs equally likely in $A_{\sigma}$.
\end{myde}

Since the we are working with a uniformly random permutation of the initial
array, the bounds on the number of comparisons are \textit{expected}, although the
\textit{Heapsort} is a fully deterministic algorithm.

Another crucial aspect of the analysis is the so called ``randomness
preservation''. Intuitively, this property guarantees that given a
``random'' structure like heap, a removal of an element from the heap leaves the
heap ``random''. That is, after removal of an element, every heap configuration of a
heap is equally likely. We will give a more detailed insight into this property.
In the next chapter we demonstrate how the procedure \textit{BuildHeap} affects
the distribution of the elements in the heap and, in particular, show that the
procedure \textit{BuildHeap} is randomness preserving.

\section{Building the heap}
Let us show the effect of the \textit{BuildHeap} on the randomness of
distribution of elements $X$. Before we proceed we need to define what is a
random heap rigorously:
\begin{myle}
Let us call a heap uniformly
random if every valid heap configuration occurs with same probability.
\end{myle}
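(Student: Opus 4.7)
The statement, though enclosed in a \texttt{myle} (lemma) environment, is definitional rather than propositional: it introduces the terminology ``uniformly random heap'' by declaring that a heap qualifies as uniformly random precisely when every valid heap configuration occurs with equal probability. Since the sentence has the form ``Let us call $X$ $Y$ if $Z$'', it is an act of naming, not an assertion about a state of affairs, and hence there is no mathematical content to substantiate.

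Accordingly, my ``proof proposal'' consists of observing this fact: no derivation is needed because no claim has been made. The only constructive action I would take is editorial, namely noting that the content should properly be typeset in the \texttt{myde} (definition) environment already declared in the introduction, rather than as a lemma. Reading the passage charitably, one can see that the author is here fixing vocabulary in preparation for genuine lemmas to follow, where the term will be used in quantified statements whose truth actually depends on something.

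If one wished to interpret the sentence as an intended but unstated lemma — most plausibly, in view of the narrative at the end of the preceding section, the claim that \textit{BuildHeap} applied to a uniformly random input permutation produces a heap that is uniformly random in the sense just named — then a proof would proceed by induction on the height of the node processed by the bottom-up loop, showing that a single invocation of \textit{MaxHeapify} on a node whose two subheaps are independently uniformly random leaves the resulting combined subheap uniformly random as well. The main obstacle in that hypothetical proof would be the careful bookkeeping of how \textit{MaxHeapify}'s sequence of swaps redistributes the set of underlying configurations; but since the statement as literally worded asks for no such thing, my plan is simply to flag the definitional character of the text and stop there.
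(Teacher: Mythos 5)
You are right that the statement as worded is a definition masquerading as a lemma, and the paper itself treats it that way: the proof that follows attaches to the claim stated in the surrounding prose, namely that \textit{BuildHeap} applied to a uniformly random permutation yields a uniformly random heap with each configuration occurring with probability $1/H_m$ where $H_m = H_k H_{m-1-k}\binom{m-1}{k}$. Your hypothetical reconstruction is essentially the paper's argument — an induction on subheap size analyzing how a single \textit{SiftDown} (the three cases of the unstable element staying put or descending left or right, with probabilities $1/(m+1)$, $k/(m+1)$, $(m-k)/(m+1)$) preserves the uniform distribution — so your diagnosis and sketch both match the paper.
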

The claim is that if the original permutation is uniformly random, then the heap
after running \textit{BuildHeap} is also uniformly random, i.e. any heap
configuration on $X$ occurs with the same probability.

More specifically, we will now shot that the probability of a particular heap on
$m$ elements occurring is $1/H_m$, where $H_m$ is defined as:
\begin{eqnarray}
H_0 & = & H_1 = 1 \\
H_{m} & = & H_{k}H_{m-1-k}{m-1 \choose k}
\end{eqnarray} 
where $k, m-k$ is the number of elements in the left and right heaps
respectively. This would also imply that for a random heap, both its left and
right subheaps are uniformly random as well.

\begin{proof} Firstly, note that during heap construction, an element can only
move on its the path to the root of the heap or descend into the subheap it
was initially placed by the outcome of permutation of $X$.

Suppose an element $x$ is currently being \textit{sifted down}. Let $m$ be the
number of elements strictly below $x$ and $k/(m-k)$ be the number of elements
in the left/right subheaps respectively.
Lets show inductively on the size of the heap that after a \textit{SiftDown}
we obtain a uniformly random heap. The base of the induction is clear.

During a \textit{sift down} of $x$, $3$ cases can occur: element $x$ is swapped with one of its
children (right or left) stays on its position.

\begin{figure}[h]
    \centering    
    \includegraphics[scale=0.5]{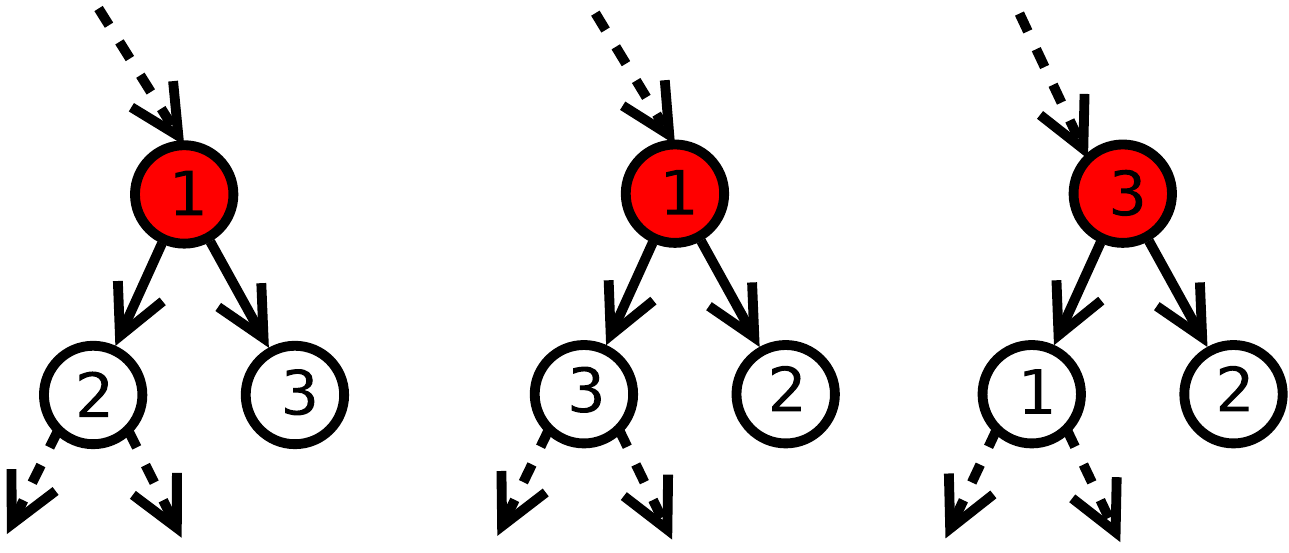}    
    \caption{3 cases leading to a different \textit{sift down} sequence: $x=1,
    \: x= 1, \: x = 3$}.
    \label{fig:siftdowncases}
\end{figure}

The probability that the maximal element in the heap of size $m+1$ is in the
left/right subheaps is $k/(m+1)$ and $(m-k)/(m+1)$ respectively. The probability
that the maximal element is $x$ itself is $1/(m+1)$.
Let $H_k$ be the number of possible heaps of size $k$ on a fixed set of
elements, then the probability of a particular heap occurring is $1/H_k$. Hence
the probability of a particular (left,right) pair of subheaps  is $1/(H_k
H_{m-k}{m \choose k})$, as choosing $k$ elements for the left subheap
completely determines the elements of the right subheap.

By induction hypothesis, it holds that the the probability of a particular
heap on $m+1$ elements (and by construction from procedure \textit{BuildHeap})
is:
\begin{equation*}
\frac{1}{m+1}\frac{1}{(H_{k}H_{m-k} {m \choose
k})}+ \frac{k}{m+1}\frac{1}{(H_{k}H_{m-k}{m \choose
k})}+\frac{m-k}{m+1}\frac{1}{(H_{k}H_{m-k}{m \choose k})}
\end{equation*}
Simplifying the expression gives 
\begin{equation}
H_{m+1} = H_{m-k}H_{k} {m \choose k}
\end{equation} 
which is exactly the induction hypothesis. Hence,
the inductive step holds and the proof is complete. A different proof can be
found in \cite[p.~153]{DBLP:books/aw/Knuth73}.
\end{proof}

Having the previous lemma in hand, we can proceed to analysis of the
asymptotic behavior of the expression describing the number of comparisons made
between elements $X_r$ during the \textit{BuildHeap} procedure.

One of the main results in this thesis is the following lemma:
\begin{myle} (Number of comparisons during construction of the heap) \\
The number of comparisons during construction of the heap between the elements
of subrange $X_r$ is $O(r)$
\end{myle}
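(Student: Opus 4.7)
The plan is to rely on (i) the classical $O(n)$ bound on the total number of comparisons performed by \textsc{BuildHeap} and (ii) the randomness-preservation lemma just proved, so as to show that only an $O(r/n)$ fraction of those comparisons is expected to involve red elements on both sides. Because the total count is already $O(n)$, the statement is trivial when $r=\Theta(n)$; the real work is to control small to moderate $r$.

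First I would recall the standard amortized analysis: a call of \textsc{MaxHeapify} on an internal node $v$ whose subtree has height $h_v$ performs at most $2h_v$ comparisons, and since $\sum_v h_v\le 2n$ in a binary heap on $n$ nodes, the total comparison count of \textsc{BuildHeap} is deterministically $O(n)$. I would then charge each comparison to the \textsc{MaxHeapify} call in which it takes place and group these calls by the internal node $v$ they are invoked on. By randomness preservation applied inductively from the leaves upward, just before \textsc{MaxHeapify}$(v)$ is called, the set $E_v$ of elements in $v$'s subtree is a uniformly random subset of size $m_v$ of the whole input, the two already-built subheaps are uniformly random max-heaps on their respective element sets, and the ``unsifted'' element originally at $v$ fills the root slot.

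The key step is to show that each comparison during the sift has only a small chance of being red-red. A comparison during the sift-down is either between two siblings or between a descending element and the larger of its two children; in either case, by a deferred-decision argument along the path of the sift, one can view the pair of values being inspected as drawn uniformly from the still-undetermined portion of $E_v$, so that the probability that both values belong to $X_r$ is at most $O\!\left((r_v/m_v)^2\right)$, where $r_v=|E_v\cap X_r|$. Putting the pieces together and using the fact that $r_v$ is hypergeometric with mean $rm_v/n$, so $E[r_v^2]\le rm_v/n+(rm_v/n)^2$, the expected red-red count is at most
\begin{equation*}
\sum_{v}\; O(h_v)\cdot E\!\left[(r_v/m_v)^{2}\right]
\;\le\; \sum_{v}\; O(h_v)\left(\frac{r}{nm_v}+\frac{r^2}{n^2}\right),
\end{equation*}
which via the elementary identities $\sum_v h_v=O(n)$ and $\sum_v h_v/m_v=O(n)$ (the latter by grouping the $n/2^{h+1}$ nodes of height $h$ and summing the geometric series $\sum_h h/4^h$) evaluates to $O(r)+O(r^2/n)=O(r)$, since $r\le n$.

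The main obstacle is making the deferred-decision argument rigorous: both the path taken by the sift and the identity of ``the larger of two children'' depend on the actual values and may bias which values are being inspected. A clean route is to expose only the \emph{positions} of the red elements inside $E_v$ (which are uniform by randomness preservation) while deferring the assignment of identities to the blue positions, or, alternatively, to prove the probability bound by symmetrisation over all relabellings of $X_r$ inside $E_v$; reconciling this with the fact that sibling comparisons look at order statistics (e.g.\ the maxima of the two already-built subheaps) rather than generic elements is the technical core of the argument.
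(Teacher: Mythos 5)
There is a genuine gap, and it is exactly at the point you flag as the ``technical core'': the claim that each comparison made during a sift at $v$ is red--red with probability $O\left((r_v/m_v)^2\right)$ is false, not merely hard to make rigorous. The sibling comparisons performed by \textit{MaxHeapify}$(v)$ compare the \emph{roots} of the two already-built subheaps, i.e.\ the maxima of their element sets, and maxima are order statistics that are heavily biased toward the top of the range. Take $X_r$ to be the top $r$ elements of $X$. The probability that the maximum of a uniformly random subset of size $m'$ lies in the top $r$ is $1-\binom{n-r}{m'}/\binom{n}{m'}\approx 1-(1-m'/n)^r$, which is close to $1$ for $m'=\Theta(n)$; already for $r=2$ the very first sibling comparison at the global root compares the largest element with the maximum of the other half, which is the second-largest element with probability about $1/2$ --- a red--red probability of $\Theta(1)$ where your bound predicts $O(1/n^2)$. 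No deferred-decision or relabelling device can remove this bias, because it is a property of the actual distribution, not an artifact of conditioning. Summing the correct per-level probabilities along the sift path at the root gives $\Theta(\log r)$ expected red--red comparisons from that single call, so your per-node contribution $O(h_v)\cdot O\left((r_v/m_v)^2\right)$ undercounts by an unbounded factor.

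The statement itself survives because the correct per-call cost is still small: the paper bounds the expected number of red--red comparisons caused by a single sift-down in a subheap containing $r_v$ red elements by $O(\log r_v)$ (its quantity $C(m,r)\le 16\log r$, proved by induction on the hypergeometric split of the red elements between the two subheaps), and then sums these costs over the recursion via $T(2m+1,r)=\sum_{r'}P_{2m+1,r,r'}\left(T(m,r')+T(m,r-r')\right)+C(2m+1,r)$, showing $T(m,r)\le c_1 r-c_2\log r-c_3$. Your charging framework can be repaired by substituting this per-sift bound for the per-comparison one: only nodes with $r_v\ge 2$ contribute, at level $j$ below the root the subtrees partition the red elements so that $\sum_{v\text{ at level }j}r_v\le r$, and by concavity $\sum_{v\text{ at level }j}\log r_v\le 2^j\log(r/2^j)$, whence $\sum_v\log(1+r_v)=O\left(\sum_{j:2^j\le r}2^j\log(r/2^j)\right)=O(r)$. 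That is essentially the computation the paper's induction performs; the hypergeometric second-moment calculation in your write-up is not needed once the per-sift cost is stated correctly.
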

\begin{proof}
For the sake of simplicity, let us assume that the number of elements in the
heap is of the form $m = 2^k-1$. The following argument extends to heaps of
different size, at the expense of making the proofs rather technical.

\begin{myde} (Expected number of comparisons caused by a single element)
Let $C(m, r)$ be the expected number of comparisons between the
\textcolor{red}{red} elements caused during a \textit{sift down} 
of the \textit{root} of a heap having $m$ elements and containing the
entire $X_r$.
\end{myde}

\begin{myde} (Expected number of comparisons in a heap)
$T(r, m)$ is the expected number of comparisons between the
\textcolor{red}{red} elements performed to build a subheap of size $m$,
given that elements of $X_r$ lie in the subheap
\end{myde}

Having the two definitions at hand, we can relate them as follows:

Let 
\begin{equation}
P_{2m+1, r, r'} = \frac{{2m+1-r \choose m-r'}{r
\choose r'}}{{2m+1 \choose m}}
\end{equation}

be the probability of a particular split of elements of $X_r$ to the left and
the right subheaps, such that there are $r'$ elements in the left heap and
$r-r'$ in the right one, then
\begin{equation}
T(2m+1, r) = \sum_{r \geq r' \geq 0}P_{2m+1, r, r'}(T(m,
r')+T(m, r))+C(2m+1, r)
\end{equation}

One thing to notice is that if the \textit{root} is in $X_r$, then the number of
elements to distribute to left and right subheap is $r-1$ and not $r$, as stated
in the recurrence relation. However, the following observation helps:

\begin{myob}(The number of comparisons is an increasing function in $r$) \\
\begin{equation}
T(2m+1, r) \geq T(2m+1, r') \: for \:  r \geq r'
\end{equation} 
\end{myob}
\begin{proof}
We can restrict counting the comparisons only for $r'$ largest elements among
the $r$ \textcolor{red}{red} elements.
\end{proof}

And hence we can think that there are in fact $r$ elements to distribute to the
subheaps.

To establish $C(m,r)$'s asymptotic behavior it is useful to observe that
\begin{myob}(Elements $X_r$ form connected subheaps)
On a path between two \textcolor{red}{red} elements, such that one is an
ancestor of the other, there are no
\textcolor{blue}{blue} elements. This implies that the \textcolor{red}{red}
elements form connected, possibly disjoint, subheaps after the
\textit{BuildHeap} procedure.
\end{myob}
\begin{proof}
Consider an element $x \in X_r$ in a heap. If there are is an element $x'\in
X_r$ ``under'' $x$, then there cannot be any elements $\bar{x} \in
X\backslash X_r$ on the internal path from $x$ to $x'$ as then this would
contradict the heap order.
\end{proof}

The previous observation suggests that an estimate for $C(m, r)$ is roughly
the ``average'' depth of a subheap formed by elements of $X_r$.  We can
formulate this relation as follows:

\begin{myle}(Expected number of comparisons during a \textit{sift down})
Let 
\begin{equation}
P_{2m+1, r, r'} = \frac{{2m+1-r \choose m-r'}{r
\choose r'}}{{2m+1 \choose m}}
\end{equation}

be the split probability in a heap of size $m$, containing entire $X_r$, such
that the left subheap contains $r'$ elements of $X_r$ and the right subheap
$r-r'$.

The expected number of comparisons created by an element \textit{ sifted down}
$C(m, r)$ can be upper bounded by
\begin{equation}
 C(2m+1, r) \leq \sum_{r > r' \geq 0} P_{2m+1, r-1, r'}(\frac{r'}{r-1}C(r',
 m)+\frac{r-r'-1}{r-1}C(r-r'-1, m))+2
\end{equation}
\hfil and
\begin{equation} 
C(2m+1, r) \leq 16\log(r)
\end{equation}
\end{myle}
\begin{proof}
Let an element $x \in X_r$ be \textit{sifted down}. By the observation above,
the elements of $X_r$ form connected subheaps. On its way down, $x$ will
enter one of those connected subheaps, and possibly cause one extra
comparison of two roots of the subheaps to decide which one to descend
to. The number of such comparisons is at most $O(r)$. When $x$ is
\textit{sifting down}, we need $2$ comparisons to decide if it should be placed lower
than the current position. Note that if $x \not \in X_r$, then at most one
comparison is needed.

We can now bound $C(m, r)$ assuming element $x$ descended to a subheap
containing $r$ elements, as any of the subheaps in which $x$ descends
has size at most $r$. Note that, when descending $x$ will swap with the smaller of its two
children. The probability that the smaller element is in the left or the right
subheap is $r'/(r-1)$ and $(r-r'-1)/(r-1)$ respectively, hence the expression
above.

Let us suppose that $C(r', m') \leq 16\log(r')$
for all $m' \leq m, r' \leq r$ and inductively show that $C(r, m) \leq
16\log(r)$. The base case is clear, as for $r \leq 2$ we can have only $1$
comparison of elements of $X_r$. 

It is not hard to show that $P_{m, r-1, r'}$ is increasing for $r' \leq (r-1)/2$
and decreasing for $r' \geq (r-1)/2$, the opposite holds for
$r'\log(r)+(r-r'-1)\log(r-r'-1)$. Applying the induction hypothesis to
the first equation in the statement of the lemma and the rearrangement
inequality:

\begin{eqnarray}
&&\sum_{r > r' \geq 0} P_{2m+1, r-1, r'}(\frac{r'}{r-1}C(r',
 m)+\frac{r-r'-1}{r-1}C(r-r'-1, m)) \leq \\
&&\sum_{r > r' \geq 0} P_{2m+1, r-1,
 r'}(\frac{16r'}{r-1}\log(r')+\frac{16(r-r'-1)}{r-1}\log(r-r'-1)) \leq \\
&&(\sum_{r > r' \geq 0} P_{2m+1, r-1, r'})(\sum_{r > r' \geq 0}
 \frac{16r'}{(r-1)^2}\log(r')+\frac{16(r-r'-1)}{(r-1)^2}\log(r-r'-1))
 \hspace{3em}
\end{eqnarray}

In the last inequality, there are $r$ terms, but we divide the
expression by $r-1$ instead of $r$, which only makes it larger. In the appendix
we show that, for $r \geq 16$:
\begin{equation}
\sum_{r > r' \geq 0 }r'\log(r') <
\frac{(r-1)^2}{2}\log(r)-\frac{(r-1)^2}{16}
\end{equation}
Notice that $\sum_{r > r' \geq 0} P_{2m+1, r-1, r'} = 1$. Combined with the
bound above, we get that for $r \geq 16$:
\begin{equation}
(12) \leq 16\log(r)-2
\end{equation} 

which is exactly what was needed to prove. Note that, for any $m$ and $r$, $C(m,
r) \leq r$ and so $C(m,r) \leq 16 \log(r)$ for $r < 16$ and the induction step
is complete. It is possible to reduce the constant $16$ further, as the expense
of increasing the complexity of analysis.
The point of the proof was to demonstrate that the number of comparisons can be
bounded by $C \log(r)$ for some constant $C$.
\end{proof}

Returning to the analysis of asymptotics of $T(m, r)$, what we wanted to
show is that $T(m, r) = O(r)$ for all $m$. We can show this inductively by
proving that
\begin{equation}
T(r, 2m+1) \leq c_1r-c_2\log(r)-c_3
\end{equation}

for some constants $c_1, c_2, c_3$, for all $m$ and $r$. Clearly, we can find 
such constants $c_i$ to make the base of the inductive claim hold. Suppose for
all $m' < m, r' < r$, the statement holds, then plugging the inductive
hypothesis to the statement of the lemma gives:

\begin{eqnarray}
T(2m+1, r) =  \sum_{r \geq r' \geq 0}P_{2m+1, r, r'}(T(m,
r')+T(m, r-r'))+C(2m+1, r) & \leq & \\ \sum_{r \geq r' \geq 0}P_{2m+1, r,
r'}(T(m, r')+T(m, r-r'))+16\log(r) & \leq & \\
\sum_{r \geq r' \geq 0}P_{2m+1, r,
r'}(c_1r-c_2(\log(r')+\log(r-r'))-2c_3)+16\log(r) & &
\end{eqnarray}

By definition, $\sum_{r \geq r' \geq 0} P_{2m+1, r, r'} = 1$. Then what we need
to show is that 

\begin{eqnarray}
\sum_{r \geq r' \geq 0}P_{2m+1, r,
r'}(c_1r-c_2(\log(r')+\log(r-r'))-2c_3)+16\log(r) & \leq & \\
c_1 r - c_2\log(r)-c3
\end{eqnarray}

Simplifying the expression gives

\begin{eqnarray}
-P_{2m+1, r, r'}(c_2(\log(r')+\log(r-r'))+2c_3)+16\log(r) & \leq &  \\ 
-c_2 \log(r)-c_3
\end{eqnarray}

Observe that both $P_{2m+1, r, r'}$ and $\log(r')+\log(r-r')$ are increasing for
$r' \in [1, r/2]$ and decreasing for $r' \in (r/2, r]$ hence we can apply the
rearrangement inequality
\begin{eqnarray*} 
\sum_{0 \leq r' \leq r} P_{2m+1, r, r'}(\log(r')+\log(r-r')) & \geq & \\
\frac{(\sum_{0 \leq r' \leq r} P_{2m+1, r, r'})}{r+1}\sum_{0 \leq r' \leq
r}(\log(r')+\log(r-r')) & = & \frac{2\log(r!)}{r+1}
\end{eqnarray*}

It is well known that 

\begin{equation}
e(\frac{n}{e})^n \leq n! \:\: \Rightarrow \:\: n\log(n)-n\log(e)+\log(e) \leq
\log(n!)
\end{equation}

Plugging the lower bound to the inequality $(22)$, leads to

\begin{eqnarray}
16\log(r) & \leq & \frac{2c_2}{r+1} (r\log(r)-r\log(e)+\log(e))-c_2\log(r)+c_3
\\
& = & (c_2 r\log(r)-2c_2 r\log(e)+2c_2\log(e))/(r+1)+c_3
\end{eqnarray}

Taking $c_2 = 16$ and $c_3 = 3c_2$ makes the statement hold and hence the
inductive step is complete. We have finally established that $T(m, r) = O(r)$ 
for any $m$.
\end{proof}

\section{Different model of building heap}

Coming to the second part of analysis, we need to count the number of
comparisons made by \textit{Heapsort} after the heap is built. Let us define the
concept of a ``premature'' heap.

\begin{myde}(Premature heap)
A heap is called \textit{premature} if none of the \textcolor{red}{red} elements
is at the top of the heap. We will call a heap \textit{mature} when it is not
\textit{premature} anymore.
\end{myde}

In case the largest \textcolor{red}{red} element is at the top of the heap, all
the \textcolor{red}{red} elements are still in the heap and form a connected
tree.

Let us count the expected number of comparisons once the heap is \textit{mature}
using a slightly different model of building heap. Namely, when a new element
$x$ is inserted into the current heap
\begin{itemize}
  \item if $x$ is larger than the current $root$, put $root=x$ and reinsert
  $root$
  \item in case $x$ is smaller or equal to $root$, insert $x$ into the left or
  right subheaps of the root with $p=1/2$.
\end{itemize}

We are about to show that the presented model of building the heap is randomness
preserving, unlike the model of building the heap using the
\textit{BuildHeap} procedure.

This model allows one to analyze the number of comparisons ``easily'' and, as
later will be shown, our previous model of heap construction via random
permutation of the initial sequence will have smaller expected number of comparisons of
elements in $X_r$.

Consider a heap on $N$ elements built in this way. Notice that a consequence
of this model is that a certain set of elements has probability $p_k^N = $$N-1
\choose k$ $/2^{N-1}$ of occurring in the left subheap of the root.

\begin{myle} (Preserving randomness) \\
After a \textit{PopMax} operation any heap on $N+1$ elements has the same
distribution as if it was built from scratch by taking all the elements out and
reinserting them. More precisely, the probability that a particular set lies in
the left subheap of the root is $p_k^N = $$N-1 \choose k$ $/2^{N-1}$.
\end{myle}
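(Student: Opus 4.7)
The plan is to establish a clean recursive description of the insertion-model distribution and then check that PopMax preserves it.

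First, I would establish order-independence of the insertion model: for a fixed element set $S$, the distribution of the final heap does not depend on the order in which the elements are inserted. This follows from a swap argument, verifying that exchanging two consecutive insertions leaves the conditional distribution on the final heap unchanged, and then chaining adjacent swaps to reach any desired order. A convenient canonical order is the decreasing one: each new element is smaller than every element already in the heap, so it never triggers the "$x$ larger than the current root" branch and simply descends from the root via independent fair coin flips until it reaches an empty slot.

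Using the decreasing-order viewpoint, a clean recursive characterization of the insertion-model distribution on $|S| = N$ elements is immediate. The root is $\max(S)$ (inserted first); every subsequent element independently chooses left or right at the root with probability $1/2$, then recursively continues in its chosen subheap. Hence each of the $N-1$ non-root elements is placed in the left subheap independently with probability $1/2$, and conditioned on the resulting partition, the two subheaps are independent insertion-model heaps on their respective element sets. In particular, the left subheap has size exactly $k$ with probability $\binom{N-1}{k}/2^{N-1} = p_k^N$.

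With this in hand the lemma follows. After popping the root of the $(N+1)$-element heap, the two exposed subheaps are, by the recursive characterization applied to the original heap, independent insertion-model heaps on disjoint element sets. PopMax then merges them into a single $N$-element heap; viewing the merge as reinserting the elements of one subheap one at a time, in any order, into the other, order-independence guarantees that the resulting heap is distributed exactly as one built from scratch by the insertion model on the $N$ surviving elements. Its root is the max of those $N$ elements and each of the $N-1$ non-root elements lies in the left subheap independently with probability $1/2$, giving $p_k^N = \binom{N-1}{k}/2^{N-1}$ for the probability that the left subheap has size $k$. The main technical obstacle is the order-independence lemma itself: the swap check is routine when neither of the two consecutive insertions triggers a root swap, but when one or both do, a short coupling argument is needed to verify that the cascade of reinsertions produces the same conditional distribution regardless of which element is inserted first. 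Once that is established, PopMax is just a convenient reinsertion sequence and the distributional invariance is automatic.
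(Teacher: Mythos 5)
Your first two steps are fine (and the recursive binomial characterization of the insertion model is arguably a cleaner statement than the paper's), but the last step has a genuine gap: you silently replace the operation actually being analyzed. \textit{PopMax} here is ``delete the root and run the recursive \textit{SiftDown}'', i.e., a deterministic promotion cascade: the larger of the two sub-roots moves up, leaving a hole that is filled recursively inside \emph{that} subheap, while the other subheap is untouched. This is not ``reinserting the elements of one subheap one at a time into the other'': no fresh coin flips are made, and every structural choice is forced by the heap contents. Order-independence of the \emph{insertion} model tells you what a reinsertion-based merge would produce; it says nothing about what the deterministic cascade produces. Two different merge rules applied to the same pair of random subheaps can yield different output distributions, so the distributional equivalence you are assuming is exactly the content of the lemma.

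The missing piece can be supplied either by the paper's route or by finishing your own. The paper conditions on which subheap contains the second-largest element (probability $(k+1)/N$ for the left given it holds $k+1$ elements, $(N-k)/N$ for the right given the left holds $k$), uses that the cascade promotes exactly that element to the root while every other element keeps its side, and then verifies the resulting recurrence $p^{N}_k = \frac{k+1}{N}p^{N+1}_{k+1}+\frac{N-k}{N}p^{N+1}_k$ algebraically. Alternatively, your recursive characterization yields a clean induction on the heap size: the new root is the maximum of the $N$ survivors; each of the other $N-1$ elements retains its original top-level left/right label, and these labels are i.i.d.\ fair coins independent of which side the maximum was on; the subheap not containing the maximum is unchanged, and the one containing it undergoes a recursive \textit{PopMax}, which by the induction hypothesis is again an insertion-model heap on its remaining elements. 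That closes the argument without ever invoking reinsertion.
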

\begin{proof}
Let us call \textit{split} a certain set of the elements in the left subheap
(this set fully identifies the elements in the right subheap). Then $p^N_k$ is
the probability of a particular split on $N$ elements, such that there are $k$ elements in the left subheap. 
Such a split could have
resulted after a \textit{PopMax} operation on a heap with $N+1$ nodes in total
and

\begin{enumerate}
  \item $k$ nodes in the left subheap and the next largest element being in
the right subheap
\item $k+1$ nodes in the left subheap and the next largest element being in
the left subheap
\end{enumerate}

Then, in the first case the probability that the second largest element is in
the right subheap is $(N-k)/N$ similarly for the second case and the left
subheap the probability is $(k+1)/N$.

In
order to demonstrate randomness preservation, what we need to show is that

\begin{equation}
p^{N}_k =
\frac{k+1}{N}p^{N+1}_{k+1}+\frac{N-k}{N}p^{N+1}_k
\end{equation}

Rewriting the above equation, we have to show that $\displaystyle \frac{{N-1
\choose k}}{2^{N-1}} = \frac{k+1}{N}\frac{{N \choose
k+1}}{2^{N}}+\frac{(N-k)}{N}\frac{{N \choose k}}{2^{N}}$, or,
equivalently $\displaystyle \frac{{N-1 \choose k}}{2^{N-1}} = \frac{{N-1 \choose
k}}{2^{N}}+\frac{{N-1 \choose k}}{2^{N}}$, which clearly holds.
\end{proof}

\begin{myle}(Expected number of comparisons once the heap is \textit{mature})
The expected number of comparisons \textcolor{red}{red} elements once the heap
is \textit{mature} can be upper bounded by $r\log r$ under the model just described.
\end{myle}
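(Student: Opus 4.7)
The plan is to leverage the randomness-preservation property proved above: in the alternative building model, every \textit{PopMax} leaves the heap (and hence the subheap of red elements) distributed exactly as a freshly built random heap of the appropriate size. This lets me analyze one \textit{PopMax} in isolation and simply add up the bounds over the $r$ pops that happen after maturity.

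First I would collect two structural observations. (i) Once mature, all remaining reds are strictly larger than every remaining blue, so by heap order the parent of any red must itself be red; the reds therefore occupy a connected subtree rooted at the overall root. (ii) In Floyd's scheme, each level of \textit{SiftDown} performs exactly one comparison---between the two children of the hole---and this is a red-red comparison only when both children are red, which can only happen while the hole is still inside the red subtree. The subsequent \textit{SiftUp} moves the last heap element (a blue leaf, since the reds sit at the top) upward, producing no red-red comparisons at all. Hence the number of red-red comparisons in one \textit{PopMax} is at most the length of the sift-down path within the red subtree.

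Next I would argue that the red subtree is itself a random heap of size $r$ in the described model, and that this invariant is preserved after each \textit{PopMax}. Since in the model every non-root element picks left or right subheap independently with probability $1/2$, regardless of rank, the top-$r$ elements inherit the same independent-choice structure among themselves; restricting to them gives a random heap on $r$ keys. The preservation lemma then guarantees that, after the $i$th mature \textit{PopMax}, the surviving $r-i$ reds still form a random heap of size $r-i$.

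Let $L(k)$ denote the expected length of the sift-down path in a random heap of size $k$; writing $r_L \sim \mathrm{Bin}(k-1, 1/2)$ for the split of the $k-1$ non-root elements and noting that the sift-down descends into the side holding the next-largest element, one obtains the recurrence
\begin{equation*}
L(k) \;=\; 1 + \frac{2}{k-1}\,\mathbb{E}\!\left[r_L\, L(r_L)\right].
\end{equation*}
I would prove inductively that $L(k) \le \log k$ and then sum
\begin{equation*}
\sum_{k=1}^{r} L(k) \;\le\; \sum_{k=1}^{r}\log k \;=\; \log(r!) \;\le\; r\log r .
\end{equation*}
The main obstacle is precisely this induction: the crude estimate $r_L\log r_L \le r_L\log(k-1)$ only yields $L(k) \le 1 + \log(k-1)$, which already suffices for an $O(r\log r)$ bound but loses the leading constant. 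Tightening to $L(k)\le \log k$ requires exploiting the concentration of $r_L$ near $(k-1)/2$ so that $\mathbb{E}[r_L\log r_L]$ is closer to $\tfrac{k-1}{2}\log\tfrac{k-1}{2}$ than to the loose $\tfrac{k-1}{2}\log(k-1)$.
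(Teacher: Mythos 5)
Your reduction is the paper's: once the heap is mature the reds form a connected top subtree, restricting to them gives a uniformly random heap of size equal to the number of surviving reds, randomness preservation lets you treat each \textit{PopMax} in isolation, and your recurrence for $L(k)$ is (up to the boundary term $1-p^k_{k-1}-p^k_0$ versus $1$) identical to the paper's recurrence $C_N = 1-p^N_{N-1}-p^N_0+\sum_k p^N_k\bigl(\tfrac{k}{N-1}C_k+\tfrac{N-1-k}{N-1}C_{N-1-k}\bigr)$. The problem is that everything after that point --- which is the entire analytical content of the lemma --- is missing. You state the target $L(k)\le\log k$, observe that the crude estimate does not close the induction, and defer to an unspecified concentration argument. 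Concentration of $r_L$ is not a refinement here but the whole difficulty: Jensen's inequality gives $\mathbb{E}[r_L\log r_L]\ge\tfrac{k-1}{2}\log\tfrac{k-1}{2}$, i.e.\ it points the wrong way, so you must quantify by how much the expectation exceeds the value at the mean and check that this excess is absorbed by the gap between $\log k$ and $\log(k-1)$, with the errors accumulating over all $\Theta(\log k)$ levels of the recursion. You have not done this, and it is not routine.

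The paper does not attempt the direct induction. It solves the recurrence by passing to the exponential generating function $g(z)=\sum_k C_k z^k/k!$, deriving $g''(z)=e^z-1+e^{z/2}g'(z/2)$, proving separately that $C_k$ is increasing so that $g'\le g''$, iterating the resulting inequality to obtain $h(z)\le\sum_{j\ge0}\bigl(e^z-e^{z(1-1/2^j)}\bigr)$, and extracting the coefficients $H_k=\sum_{j\ge0}\bigl(1-(1-1/2^j)^k\bigr)=\lfloor\log k\rfloor+O(1)$. Note also that what the paper actually establishes is the per-pop bound $\log k+O(1)$ and a total of $r\log r+O(r)$, not your stronger pointwise claim $L(k)\le\log k$; your final step $\sum_{k\le r}\log k=\log(r!)\le r\log r$ leaves roughly $r\log_2 e$ of slack, which is precisely what must be spent to absorb the additive constant per pop --- but to know that the constant fits you have to compute it, which is what the generating-function analysis (and the reference to Schaffer--Sedgewick) is for. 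As it stands the proposal is a correct setup with the key step asserted rather than proved.
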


\begin{proof}
Taking into account one of our previous observations, the elements of $X_r$
occupy a connected tree in the heap. As the root of the heap comes from $X_r$
and elements $X_r$ create a connected tree and hence the number of comparisons
between the \textcolor{red}{red} elements is not affected by the elements from
$X/X_r$.
Intuitively, we can remove all the elements in $X/X_r$ and both distribution of
the elements in $X_r$ and the number of comparisons between the
\textcolor{red}{red} elements will not be affected. Based on these facts we can devise a recursive
expression for the number of comparisons made between the elements of $X_r$:

Without loss of generality (based on the previous observation) we can take
$r=N$. Let $C_N$ be the number of comparisons for a subrange $X_N$ of size $N$,
created by an element \textit{sifted down}. Then by induction we can derive the
following:

\begin{enumerate}
  \item Clearly $C_0, \: C_1, \: C_2 = 0$.
  \item Once an element is $popped$ from the heap, we execute
  the recursive \textit{SiftDown} procedure. The two children of the root are
  compared only if both of the right or the left subheaps are non-empty. This
  happens with probability $1-p^N_{N-1}-p^N_0$.
  \item The probability that an element in the left subheap is larger than
  an element in the right subheap is $k/(N-1-k)$ (similarly we can derive the
  probability for the right subheap).
  \item Putting it all together we have: \\
  $\displaystyle C_N = 1-p^N_{N-1}-p^N_0 + \sum^{N-1}_{k=1} p^N_k (\frac{k}{N-1}
  C_k+\frac{N-1-k}{N-1}C_{N-1-k})$
\end{enumerate}
Because of the symmetry the expression simplifies to the following:
\begin{center}
$\displaystyle C_N = 1-p^N_{N-1}-p^N_0 + \sum^{N-1}_{k=1}$ $N-2 \choose k-1$
$C_k/2^{N-2}$
\end{center}
Instead of working with the recurrence directly, we will use an upper bound on
the $C_N$, namely,
\begin{center}
$\displaystyle C_N \leq 1+\sum^{N-2}_{k=0}$ $N-2 \choose k$
$C_{k+1}/2^{N-2}$
\end{center}
We will analyze the behaviour of this recurrence using the exponential
generating function \\ $\displaystyle g(z) = \sum_{k=0}^\infty C_k
\frac{z^k}{k!}$.
Multiplying the recurrence with $\displaystyle \frac{z^{N-2}}{(N-2)!}$ we have
\begin{eqnarray*}
\frac{C_N}{(N-2)!} z^{N-2}
& = &\frac{z^{N-2}}{(N-2)!}+\sum\limits^{N-2}_{k=0}{N-2 \choose k}
\frac{C_{k+1}}{(N-2)!}(\frac{z}{2})^{N-2} \\
& = & \frac{z^{N-2}}{(N-2)!}+\sum\limits^{N-2}_{k=0}
(\frac{z}{2})^{N-2}\frac{C_{k+1}}{k! (N-2-k)!} \\
& = & \frac{z^{N-2}}{(N-2)!}+\sum^{N-2}_{k=0}
\frac{C_{k+1}}{k!}(\frac{z}{2})^{k} \frac{1}{(N-2-k)!}(\frac{z}{2})^{N-2-k}
\end{eqnarray*}

Denoting $m = N-2$, we have for $m > 0$:

\begin{center}
$\displaystyle \frac{C_{m+2}}{m!}z^{m} = \frac{z^{m}}{m!}+\sum^{m}_{k=0}
\frac{C_{k+1}}{k!}(\frac{z}{2})^{k} \frac{1}{(m-k)!} (\frac{z}{2})^{m-k}$
\end{center}

Now, notice that $\sum^{m+1}_{k=1}
\frac{C_k}{(k-1)!}(\frac{z}{2})^{k-1} = g'(z/2)$ and $\sum^{m}_{k=0}
\frac{C_{k+1}}{k!}(\frac{z}{2})^{k}\frac{1}{(m-k)!}(\frac{z}{2})^{m-k}$ is the
convolution of $g'(\frac{z}{2})$ and $e^{\frac{z}{2}}$. Summing up all terms for
all $m$ on the left and right sides and taking into account that $C_2 = 0$, we
have $g''(z) = e^z-1+e^{z/2} g'(\frac{z}{2})$.

Lets now show that $g'(z) \leq g''(z)$ or equivalently $\sum_{k=0}^\infty
C_{k+1} \frac{z^k}{k!} \leq \sum_{k=0}^\infty C_{k+2} \frac{z^k}{k!}$. This
would clearly follow from the fact that $C_k$ is increasing. 

\begin{myle}
(The sequence $C_k$ is increasing)
\end{myle}

\begin{proof}
Suppose that $C_k$ is increasing for $k < N$. Let us show that $C_N > C_{N-1}$: 
\begin{eqnarray*}
C_N & = & 1+\sum_{k = 0}^{N-2}C_{k+1} \frac{{N-2 \choose k}}{2^{N-2}} \\ 
C_{N-1} & = & 1+\sum_{k = 0}^{N-3}C_{k+1} \frac{{N-3 \choose k}}{2^{N-3}}
\end{eqnarray*}

Subtracting the two expressions we get

\begin{eqnarray*}
C_N-C_{N-1} & = & C_{N-2}+\sum\limits_{k = 0}^{N-3}(\frac{{N-2 \choose
k}}{2^{N-2}}-2 \frac{{N-3 \choose k}}{2^{N-2}})C_{k+1} \\
& = & C_{N-2}+\sum_{k = 0}^{N-3}(\frac{{N-3 \choose k-1}}{2^{N-2}} -\frac{{N-3
\choose k}}{2^{N-2}})C_{k+1}
\end{eqnarray*}

The sum $\sum\limits_{k = 0}^{N-3}({N-3 \choose k-1}-{N-3
\choose k})C_{k+1}$ is non-negative, as ${N-3 \choose k-1}-{N-3
\choose k}$ is symmetric around $\frac{N-3}{2}$ and is positive for $k \geq \lfloor \frac{N-2}{3} \rfloor$ and $\displaystyle C_k$ is
increasing by induction assumption for $\displaystyle k \leq n-2$.
Hence $\displaystyle C_N-C_{N-1} > 0$ and induction is complete.
\end{proof}

Coming back to the generating function identity $g''(z) = e^z-1+e^{z/2}g'(z/2)$,
we showed that we can lower bound $g''(z)$ with $g'(z)$, from which it follows
that the following inequality holds:
\begin{center}
$\displaystyle g'(z) \leq g''(z) = e^z-1+e^{z/2}g'(z/2)$
\end{center}

Let $h(z) = g'(z)$, iterating the above inequality we have: 
\begin{center}
$\displaystyle h(z) \leq e^z-1+e^{z/2}(e^{z/2}-1)+e^{z/2+z/4}(e^{z/4}-1)\ldots =
\sum_{j = 0}^{\infty} e^z-e^{z(1-1/2^j)} $
\end{center}

Thus, the exponential generating function $h(z)$ has coefficients $
\displaystyle H_k = \sum_{j = 0}^{\infty} (1-(1-1/2^j)^k)$. Expanding the
expression using the fact that $\displaystyle (1-\frac{1}{2^j})^k =
\exp(-k/2^j)(1+O(1/k))$ gives:

\begin{eqnarray}
H_k & = & \sum_{0 \leq j < \lfloor \log k \rfloor}
1-e^{-k/2^j}+\sum_{j \geq \lfloor \log k \rfloor}
1-e^{-k/2^j}+o(1) \\
& = & \lfloor \log k \rfloor - \sum_{0 \leq j < \lfloor \log k \rfloor}
e^{-k/2^j}+\sum_{j \geq \lfloor \log k \rfloor}
1-e^{-k/2^j} \\
& = & \lfloor \log k \rfloor - \sum_{j < \lfloor \log k \rfloor}
e^{-k/2^j}+\sum_{j \geq \lfloor \log k \rfloor}
1-e^{-k/2^j}+O(e^{-k})
\end{eqnarray}

As $\sum_{j < 0} e^{-k/2^j} \leq \sum_{j > 0} e^{-kj} = e^{-k}/(1-e^{-k})-1 <
e^{-k}$. Shifting the summations by $\lfloor \log k \rfloor$,

\begin{center}
$\lfloor \log k \rfloor - \sum_{j < 0}
e^{-k/2^{j+\lfloor \log k \rfloor}}+\sum_{j \geq 0}
1-e^{-k/2^{j+\lfloor \log k \rfloor}}+O(e^{-k})$
\end{center}

It is easy to establish that $- \sum_{j < 0}
e^{-k/2^{j+\lfloor \log k \rfloor}}+\sum_{j \geq 0}
1-e^{-k/2^{j+\lfloor \log k \rfloor}} = O(1)$. A more precise analysis of
this function is available in \cite{DBLP:journals/jal/SchafferS93}. \\
\end{proof}

What we have shown is that a \textit{sift down} takes $\log(r)+O(1)$ time,
assuming the split distribution described above, and after each \textit{PopMax}
operation the heap randomness is preserved.
Hence the total number of comparisons is $r\log{r}+O(r)$ under the model
described.

It is seemingly not easy to derive bounds for the number of comparisons of
elements in $X_r$ after the heap is build but before the heap is
\textit{mature}. Perhaps one could count the number of comparisons
after the heap is built but before the heap is \textit{mature}, which would 
complete our analysis. The analysis is however not futile,as we will use details
of the section further in the thesis to show our main result.

\section{Original way of building heap}

Let us count the number of comparisons after the heap is built in a different
way. For now assume that $\forall x \in X_r \:$ is larger than the median of the
entire range $X$. This assumption is implementable in the following way: after
the heap is built, fill the lower most heap level (or possibly even the next
level) with an element $x'$ which is strictly smaller than all the elements in
$X$. What we need to achieve is that the number of additional elements on the
lowermost level of the heap is larger than the size of $X$.

The assumption guarantees that when we possibly place the right most
minimal element of the heap to the empty hole created by the \textit{SiftDown}
and execute \textit{SiftUp}, we do not have additional comparisons of elements
in $X_r$, as the element \textit{sifted up} is smaller than elements in $X_r$.

One might notice that if we introduce $n$ dummy elements to the heap, this would
incur additional comparisons and we would be cheating in a way, as we only
count the comparisons of the \textcolor{red}{red} elements and ignore
comparisons of the dummy elements. However we will show later in this section,
that dummy elements introduce only $O(n)$ more comparisons overall to the
running time of the algorithm, which is ``acceptable''.

\begin{myob}
Consider a node $x$ in the heap. Suppose during a \textit{SiftDown} a
comparison of elements  $left(x),\: right(x) \in X_r$ happens, in which case one
is promoted. Then the total number of the elements $\in X_r$ in the subheaps
rooted at $left(x), \: right(x)$ decreases by one.
\end{myob}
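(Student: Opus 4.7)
The plan is to reason directly from one step of \textit{SiftDown}: fix the instant just before the comparison between $left(x)$ and $right(x)$, and let $R_L$ and $R_R$ denote the number of elements of $X_r$ currently stored in the subheaps rooted at $left(x)$ and $right(x)$ respectively. By hypothesis both child values are \textcolor{red}{red}, so $R_L \geq 1$ and $R_R \geq 1$; the quantity to be tracked is $R_L + R_R$.

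The core of the argument is a one-step case analysis of the promotion triggered by that comparison. Suppose without loss of generality that $left(x)$ is the larger of the two and is therefore promoted to position $x$; the other case is symmetric. This single promotion moves the value stored at $left(x)$ up to $x$ and opens a hole at $left(x)$, while leaving every descendant of $left(x)$ and the entire subheap at $right(x)$ untouched. Hence the subheap rooted at $right(x)$ still contains exactly $R_R$ \textcolor{red}{red} elements, whereas the subheap rooted at $left(x)$ has lost exactly its former root, which was red by assumption, and now contains $R_L-1$ \textcolor{red}{red} elements. The new total is $R_L+R_R-1$, which is the claimed decrease by one.

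The only point requiring a bit of care is what is meant by ``the subheap rooted at $left(x)$'' right after the promotion: I would take this to be the set of heap positions that are descendants of $left(x)$ (inclusive), with the root position currently empty, i.e.\ a hole to be filled by subsequent \textit{SiftDown} activity. Since a hole contributes zero to the count, the bookkeeping above is unaffected. There is no real obstacle in the observation itself, which is an atomic accounting step; the substantive work will come later, when this observation is used as the base case of a potential-style argument that charges each \textcolor{red}{red}--\textcolor{red}{red} comparison along a \textit{SiftDown} path to one unit of decrease in the red-count potential, thereby bounding the total number of such comparisons during the sorting phase.
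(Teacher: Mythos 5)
Your accounting argument is correct: the comparison promotes exactly one of the two red children up to position $x$ (outside both subheaps), leaving the other subheap untouched, so the combined red count over the two subheaps drops by exactly one, and your treatment of the resulting hole as contributing zero is the right convention. The paper states this observation without any proof, treating it as self-evident, and your one-step case analysis is precisely the justification it implicitly relies on before using the observation to bound the number of such comparisons at a node by $p+q-1$.
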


Also, note that such a comparison can happen iff initially there were some
elements $\in X_r$ in subheaps rooted at $left(x), \: right(x)$. Hence, an element
$x$ should be ``splitting'' a subset of $X_r$ into two subsets of size greater
than zero. 
\begin{myob}
There can be at most $r-1$ such positions $x$. Such
positions can be identified by taking the union of all the lowest common
ancestors of the pairs of elements of $X_r$ in the heap.
\end{myob}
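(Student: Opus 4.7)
The plan is to reinterpret the set of ``splitting'' positions as the node set of a combinatorial Steiner-type subtree and then apply a standard binary-tree leaf/branching count.

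First, I would restate the characterization: a node $x$ is a splitting position exactly when both the left subheap and the right subheap of $x$ contain at least one element of $X_r$. This is manifestly equivalent to saying that $x$ is the lowest common ancestor (in the heap tree) of some pair of distinct red elements, since for any two reds $u,v\in X_r$ their LCA is such an $x$, and conversely any splitting $x$ is the LCA of any red in its left subheap together with any red in its right subheap.

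Next, I would introduce the minimal subtree $S$ of the heap that contains all elements of $X_r$, and observe two things: (i) every leaf of $S$ must be red, for otherwise that leaf could be dropped and $S$ would not be minimal; (ii) a node $x$ of the heap belongs to $L$ (the set of splitting positions) if and only if $x$ lies in $S$ and has exactly two children in $S$. Thus $|L|$ equals the number of degree-$2$ internal nodes of the binary tree $S$.

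The final step is the standard counting identity: in any rooted tree in which each node has at most two children, if $n_0$ is the number of leaves and $n_2$ the number of nodes with two children, then $n_2 = n_0 - 1$. Applied to $S$, this gives $|L| = n_2 = n_0 - 1 \leq r - 1$, since by (i) the $n_0$ leaves of $S$ are all red and hence $n_0 \leq r$. The main (minor) obstacle is simply being careful that the claimed identification ``$L$ = branching nodes of $S$'' holds even when some red element is itself an interior node of $S$; this is handled by the observation that such a red interior node either has one child in $S$ (so it does not contribute to $L$) or two children in $S$ (in which case it is the LCA of two reds lying in its two subheaps), so the equivalence goes through. The ``taking the union of all LCAs of pairs'' phrasing in the statement then follows directly from the LCA characterization in the first step.
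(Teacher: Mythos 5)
The paper states this observation without any proof, so there is no in-paper argument to compare yours against; what you have written is a complete justification of a claim the thesis simply asserts. Your route --- identify the splitting positions with the branching nodes of the minimal subtree $S$ of the heap spanning $X_r$, note that every leaf of $S$ must be red by minimality, and invoke the identity $n_2 = n_0 - 1$ for rooted trees of maximum out-degree two --- is correct and yields $|L| \le n_0 - 1 \le r-1$ cleanly. One small imprecision worth noting: the ``manifest equivalence'' in your first step is really only an inclusion. If $u, v \in X_r$ and $u$ is an ancestor of $v$, then $\mathrm{LCA}(u,v) = u$, and $u$ need not have red elements in both of its subheaps; so the set of splitting positions can be a \emph{proper} subset of the set of pairwise lowest common ancestors (for the same reason the paper's phrase ``can be identified by taking the union of all the lowest common ancestors'' is itself only accurate as an upper-bound statement). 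This does not damage your count, because the direction you actually use --- every splitting position is the LCA of a red in its left subheap and a red in its right subheap, hence a branching node of $S$ --- is the correct one, and you explicitly flag and dispose of the ancestor case at the end.
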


Using the observation above we can express the maximum number of comparisons
in terms of ``splits'' of the set $X_r$ across the heap as follows:

\begin{myle}
Consider a node $x$ of the heap. Let it have $p$ and $q$ elements of $X_r$ in
the left and the right subheaps respectively. Then there can be at most $p+q-1$
comparisons of elements in $X_r$ at positions $left(x), \: right(x)$
\end{myle}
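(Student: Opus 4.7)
The plan is to track the pair $(p,q)$ — the number of red elements in the subheaps rooted at $\textit{left}(x)$ and $\textit{right}(x)$ respectively — over the course of the sorting phase, and show it behaves monotonically in a way that forces the bound.

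First, I would establish an invariant: the quantities $p$ and $q$ can only decrease during the algorithm, and the only event that can decrease either is a \textit{SiftDown} that actually passes through position $x$. Sift-downs that don't reach $x$ leave $x$'s entire subtree unchanged, because the sift-down path from the root is monotone and must enter $x$'s subtree through $x$ itself. Sift-downs inside the subtree at $\textit{left}(x)$ (resp.\ $\textit{right}(x)$) only rearrange elements within that subtree, preserving $p$ (resp.\ $q$). The Floyd sift-up phase cannot introduce a new red element into either subheap because of the standing assumption that the element being sifted up is smaller than every element of $X_r$, so it never climbs past a red ancestor.

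Next, I would combine the previous observation of the paper with this invariant. That observation states that whenever a red-on-red comparison occurs at positions $\textit{left}(x), \textit{right}(x)$, one red element is promoted up to $x$ and the combined red count $p+q$ in the two subheaps drops by exactly one. Crucially, the promotion removes the root of exactly one of the two subheaps, so it decreases \emph{either} $p$ or $q$ (not both) by $1$. Meanwhile, a red-on-red comparison requires the roots of both subheaps to be red, i.e.\ $p \geq 1$ \textbf{and} $q \geq 1$.

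So the number of red-on-red comparisons at $x$ is bounded by the length of the longest sequence of states $(p_0, q_0) = (p, q), (p_1, q_1), \ldots$ in which each step decreases exactly one coordinate by $1$ and every visited state satisfies $p_i \geq 1$ and $q_i \geq 1$ (the condition needed to \emph{enable} the step). Such a sequence terminates as soon as one coordinate reaches $0$; the maximum length is achieved by walking down to $(1,1)$ in $p+q-2$ steps and then taking one final step to $(0,1)$ or $(1,0)$, giving $p+q-1$ steps total. This yields the claimed bound.

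The only real subtlety — which I would want to spell out carefully — is the invariance argument: one has to rule out any mechanism that could \emph{increase} $p$ or $q$ mid-run. Sift-down rules this out structurally (elements only flow up the path), and the dummy-element assumption rules out sift-up contributions. With that nailed down, the rest is a short combinatorial game on the pair $(p,q)$.
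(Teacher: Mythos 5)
Your proposal is correct and follows essentially the same route as the paper: it rests on the same observation that each red-on-red comparison at $left(x),\:right(x)$ requires a red element in each subheap and decreases their combined count by one, and then counts down from $p+q$ to the point where one side is exhausted. You simply make explicit the monotonicity invariant (that $p$ and $q$ never increase, and that sift-ups cannot inject red elements thanks to the dummy-element assumption) which the paper's one-line proof leaves implicit.
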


\begin{proof}
Using the observation above, after each comparison, the number of elements of
$X_r$ in the subheaps of $x$ decreases and we need at least one element of
$X_r$ in both right and left subheaps.
\end{proof}

\begin{myob}
The elements $x \in X_r$ stay on their original paths to the root of the heap 
during the \textit{SiftUp} procedure.
\end{myob}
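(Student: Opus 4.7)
The plan is to show that during any SiftUp invocation no element of $X_r$ ever participates in a swap; it will then follow that each $x \in X_r$ retains its position throughout SiftUp and in particular remains on its path to the root.

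I would first pinpoint which value SiftUp actually bubbles upward: in MaxHeapifyFloyd the call $\textit{SiftUp}(A, j)$ is preceded immediately by the statement \texttt{swap A[A.heapsize] with A[j]}, so the element now at position $j$ is whatever occupied the rightmost heap slot $A[\mathrm{heapsize}]$ just before that swap. Next I would argue that this value is always a dummy $x'$. The padding places strictly more than $|X|$ copies of $x' < \min X$ on the bottommost level, so after BuildHeap the entire bottom layer consists of dummies and the $X$-elements form a connected subtree rooted at position $1$ (heap order plus $x' < y$ for every $y \in X$ forces the $X$-positions to be closed under taking parents). Throughout the sorting phase SiftDown only promotes the larger of two children, and since every element of $X$ strictly dominates every dummy, no dummy is ever lifted past an $X$-element. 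Hence the bottom layer retains its dummies and $A[\mathrm{heapsize}]$ is a dummy in every iteration in which an $X_r$ element is still in play; maintaining this invariant across the pops is the most technical ingredient of the argument.

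With $A[j] = x'$ at the start of SiftUp, the swap test $A[\mathrm{parent}(j)] < A[j]$ fails in every case: strictly, when the parent lies in $X$ (since $y > x'$ for every $y \in X$, in particular every $y \in X_r$), and as an equality, when the parent is itself another dummy. Thus SiftUp performs no swap whatsoever, no $X_r$ element moves, and each such element trivially stays on its path to the root, as claimed.
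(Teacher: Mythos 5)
Your proof is correct and follows essentially the same route as the paper: both arguments rest on the dummy-padding assumption, which guarantees that the element moved into the hole and handed to \textit{SiftUp} comes from the bottom level and is therefore smaller than every element of $X_r$, so no element of $X_r$ is displaced. You draw the slightly stronger (and still valid) conclusion that \textit{SiftUp} performs no swap at all, whereas the paper only observes that the sole element ever moved off its root-path is the right-most one, which cannot lie in $X_r$.
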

\begin{proof}
Now the assumption that the elements $x \in X_r$ are larger than the median of
the array $X$ becomes handy. Note that during \textit{SiftUp} procedure, the elements 
only move up the path to the root of the heap and only the right-most element is
possibly moved from its path to the root. With the assumption, elements of $X_r$ do not appear in
the lower-most level of the heap. Hence, the elements $x \in X_r$ stay on their
original paths to the root of the heap. From now on we will assume that the elements $X_r$ do not lie in the lower most
layer of the heap during running of the \textit{Heapsort} algorithm, unless 
explicitly stated otherwise.
\end{proof}

\begin{myde} (Number of comparisons $C(m, r)$) Let $C(m, r)$ be the number of
comparisons between the elements $x \in X_r$ in a heap of size $m$ during the
sorting phase of the algorithm.
\end{myde}
Now, let us classify the comparisons of elements in $X_r$ to make the
computations easier.
\begin{myde} (Good comparisons) Call a comparison ``good'' if it compares the
roots of two subheaps of elements $\in X_r$ and call all the other comparisons
``bad''.
\end{myde}
\begin{myle} (Bounding ``good'' comparisons) \\
There are at most $r-1$ ``good comparisons'', where $r = |X_r|$
\end{myle}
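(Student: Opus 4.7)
A good comparison occurs at a hole $h$ whose two children $A[2h]$ and $A[2h+1]$ are both red; since the hole is not red, these two red children are necessarily roots of two distinct maximal red subheaps, and the comparison compares these two roots. My plan is to bound the total number of such events by $r-1$.

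I would begin with the structural observation already established in earlier discussion: each red element can only travel upward along its original path from its initial heap position to the root. Promotions in the Floyd \textit{SiftDown} push reds one step toward the root, and the bottom-level padding ensures no red is ever the element being sifted up. From this I would deduce that for any two reds $x$ and $y$ with original positions $p_x$ and $p_y$, the only heap positions at which they can simultaneously be siblings are $(2h,\,2h+1)$ where $h=\mathrm{LCA}(p_x,p_y)$; and once they have been siblings there and one of them has been promoted past $h$, they can never be siblings again. In particular, each pair of reds participates in at most one good comparison.

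For the $r-1$ bound itself, I would use a chronological charging argument. To each good comparison I would associate a distinct structural unit \emph{consumed} by the merger --- for example, the red component whose root is promoted into the hole effectively ceases to be a separate component, since its root departs and joins the component on the opposite side of the hole. Combined with the fact that the initial number of red components is at most $r$ and that no genuinely new red components appear after the sort begins (reds only move upward, never appearing at positions that no red had previously been able to reach), a careful accounting of created-versus-consumed components across pops, good comparisons, and non-good red promotions should yield the bound $r-1$. The main obstacle I anticipate is precisely this accounting: a promoted red can leave red descendants below the newly-vacated hole, transiently fragmenting its former component into subcomponents whose roots are now siblings of new holes, and one must show that such fragments cannot generate enough additional merger opportunities to escape the initial $r-1$ budget.
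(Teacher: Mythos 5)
Your overall strategy --- charge each good comparison to the merger of two red components, starting from at most $r$ components --- is exactly the paper's argument, but your write-up does not close it, and the step you leave open is the entire content of the lemma. Moreover, your opening characterization of a good comparison is too generous: when the hole at position $h$ has just been vacated by a \emph{red} element (a promoted red parent, or the popped red root), its two red children are not roots of two previously separate maximal red components --- they were connected through the element that just left $h$, and in the promotion case they still share a red ancestor at $h/2$. If every red--red sibling comparison under a hole were counted as good, the lemma would be false: in the mature phase the parent of two reds is always red (a blue above a red would have to exceed the whole range $X_r$, hence exceed the root), so essentially all of the $\Theta(r\log r)$ sift-down comparisons between reds would qualify. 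The intended classification is that a comparison is good only when it joins two subheaps of $X_r$ that have never before been connected; the post-pop and post-promotion sibling comparisons you describe are exactly the ``bad'' ones, which the paper bounds separately by the $p+q-1$ per-position argument.

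The ``careful accounting'' you defer is resolved by a one-line invariant rather than by bookkeeping over pops, promotions, and transient fragments: consider the partition of $X_r$ into classes of elements that have at some time been joined by an all-red tree path. Heap order guarantees that no blue element ever lies strictly between two reds in ancestor--descendant relation, and whenever a red vacates a position with reds still below it the hole is refilled by a red child (in the mature phase every surviving blue is smaller than every red, so a red child always wins the promotion), so this partition only ever coarsens. A good comparison merges two distinct classes; the fragments you worry about --- red children left under a hole by a departing red --- belong to a single already-existing class, so their re-merger is not good and consumes no budget. With at most $r$ classes initially and each good comparison reducing their number by one, at most $r-1$ good comparisons occur. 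Your pairwise LCA observation is correct but only yields a $\binom{r}{2}$ bound and is not needed.
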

\begin{proof}
Each time a ``good'' comparison happens, two subheap are merged. Once the
subheaps of elements in $X_r$ are merged, they stay connected. We
can merge at most $r$ separate subheaps and hence make at most $r-1$ ``good'' comparisons.
\end{proof}

From now on, we will only count ``bad'' comparisons. As ``good'' comparisons do
not contribute to the asymptotic behavior of the number of comparisons more
than a linear term.

\begin{myob}
Let there be $p$/$q$ elements from $X_r$ in the left/right subheaps of an
element $x$ respectively. Then there can be at most $p+q-1$ comparison at the
node $x$, that is comparing elements at positions $left(x), \: right(x)$.
\end{myob}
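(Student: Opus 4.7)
The plan is to run a short potential-function argument. Let $\phi_L$ and $\phi_R$ denote the number of elements of $X_r$ currently residing in the subheaps rooted at $left(x)$ and $right(x)$, and set $\phi = \phi_L + \phi_R$; just before the sorting phase first touches $x$ we have $\phi = p+q$ by hypothesis.

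First I would establish that $\phi$ is non-increasing throughout the sorting phase. The only mechanism by which a new red element could enter a subheap below $x$ would be the \textit{SiftUp} step that follows placing the current rightmost heap element into the hole created by \textit{SiftDown}. But by the earlier observation, combined with the standing assumption that every element of $X_r$ exceeds the median of $X$, red elements never occupy the lower-most layer of the heap, and hence are never introduced from below during \textit{SiftUp}. So once a red element has left the subheap rooted at $x$, it does not come back.

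Next I would invoke the immediately preceding observation about promotions. A comparison that is actually counted at the positions $left(x), right(x)$ requires both positions to hold an element of $X_r$, which forces $\phi_L \geq 1$ and $\phi_R \geq 1$ and hence $\phi \geq 2$. When such a comparison happens, the winner is promoted up to $x$ (and possibly higher) and leaves the subheap below, so by the preceding observation $\phi$ drops by exactly one. Combining the two steps, $\phi$ begins at $p+q$, strictly decreases by one with each counted comparison at $x$, and can support no further counted comparison once it has dropped below $2$. This yields at most $(p+q)-1$ counted comparisons at node $x$. The only real subtlety is the non-re-entry claim in the first step, which is exactly what the earlier median/dummy-padding setup was arranged to buy us.
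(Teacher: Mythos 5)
Your potential-function argument is exactly the paper's proof: the paper likewise notes that each counted comparison at $x$ decreases the number of $X_r$-elements below $x$ by one and that such a comparison requires at least one $X_r$-element in each subheap, giving the bound $p+q-1$. Your added care about non-re-entry during \textit{SiftUp} is also present in the paper, via the earlier observation that elements of $X_r$ stay on their original root paths under the median assumption, so the two arguments coincide.
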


\begin{myle} (Upper bounding ``bad'' comparisons) The number of comparisons
$C(m, r)$ can be upper bounded by $r \log r+O(r)$
\end{myle}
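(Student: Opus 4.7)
The plan is to set up a recurrence for $C(m, r)$ driven by the hypergeometric split of $X_r$ at the root of a uniformly random heap, and close an inductive ansatz of the form $C(m, r) \leq c_1 r \log r + c_2 r$. Since \textit{BuildHeap} produces a uniformly random heap (as proved earlier in the chapter), the split of red elements between the two subheaps of the root is governed by $P_{m, r, r'}$, the same hypergeometric distribution used in the $T(m, r)$ analysis.

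First, I would decompose the red-red bad comparisons into those occurring strictly inside the left subheap, those strictly inside the right subheap, and those at positions $left(\mathrm{root})$, $right(\mathrm{root})$. Applying the preceding observation that at any node $x$ which splits $p$ red elements left and $q$ red elements right there are at most $p + q - 1$ comparisons at its children positions, and using that at most one of the comparisons at the root's children is ``good'' (the rest being charged as bad), this yields
\begin{equation*}
C(m, r) \leq \sum_{r' = 0}^{r} P_{m, r, r'} \bigl[\, C(m', r') + C(m', r - r') \,\bigr] + (r - 1),
\end{equation*}
where $m' = (m-1)/2$ is the size of each subheap (assuming $m = 2m' + 1$, as in the \textit{BuildHeap} analysis; the non-balanced case is handled by the same technicalities already waved through earlier).

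Second, I would substitute the ansatz into this recurrence. The linear term $c_2 r$ passes through since $\sum_{r'} P_{m, r, r'}(r' + (r - r')) = r$, and the additive $r - 1$ is absorbed into $c_2 r$ by taking $c_2$ large enough. The logarithmic part then reduces to verifying
\begin{equation*}
\sum_{r' = 0}^{r} P_{m, r, r'} \bigl[\, r' \log r' + (r - r') \log (r - r') \,\bigr] \leq r \log r - \Theta(r),
\end{equation*}
which is exactly the shape of inequality already proved for $T(m, r)$ earlier in the chapter. It follows by applying the rearrangement inequality against $P_{m, r, r'}$ (noting that both $P_{m, r, r'}$ and $r' \log r' + (r - r')\log(r-r')$ have the same monotonicity pattern around $r/2$) and then invoking the Stirling-type bound $\log(r!) \geq r \log r - r \log e + \log e$ to obtain the $\Theta(r)$ slack. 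The base case $r \leq 2$ gives $C(m, r) = O(1)$ trivially, and for any $r$ below a constant threshold we have $C(m, r) \leq r$ directly, so the induction closes.

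The main obstacle will be making the recurrence rigorous: specifically, justifying that bad comparisons involving one element from each of the root's two subheaps contribute at most $r - 1$ in total across the entire sorting phase rather than per pop. This rests on two invariants already established in this section, namely that the red elements maintain their connected-subheap structure during sorting (which uses the assumption that $X_r$ lies above the heap's median, so that \textit{SiftUp} never lifts a blue element into the red region), and that each comparison at a splitting position strictly decreases the count of red elements in the relevant subheaps. Once the bookkeeping for these invariants is pinned down, the decomposition above is valid and the induction proceeds in direct analogy to the \textit{BuildHeap} analysis, yielding the $r \log r + O(r)$ bound.
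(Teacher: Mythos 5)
Your recurrence is the same one the paper writes down, and your identification of the bookkeeping obstacle (charging all comparisons at the root's children positions with a total of $r-1$ over the whole sorting phase, via the connected-subheap structure and the median assumption) matches the paper. Where you diverge is the analytic step: you close the induction directly with the ansatz $c_1 r\log r + c_2 r$ using the rearrangement inequality, whereas the paper compares $C(m,r)$ to the solution $G(r)$ of the \emph{binomial}-split recurrence $G(N) = N + \sum_k \binom{N}{k}2^{-N}(G(k)+G(N-k))$, for which $G(N)=N\log N - O(N)$ is known exactly. That comparison requires two lemmas you do not need (that the hypergeometric split probability dominates the binomial one on a central range $[r/2-\delta, r/2+\delta]$, and that $G(k)+G(N-k)$ is smaller for more balanced splits, proved via a second-difference/integral computation), so your route is considerably more elementary.

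However, the elementary route does not deliver the stated constant, and this is a genuine gap. The rearrangement (Chebyshev) inequality replaces $\sum_{r'} P_{m,r,r'}\bigl[r'\log r' + (r-r')\log(r-r')\bigr]$ by the \emph{uniform} average $\frac{2}{r+1}\sum_{r'} r'\log r'$, which by the appendix estimate is about $r\log r - \tfrac{r}{2} - O(\log r)$. Closing the induction at leading constant $c_1$ requires
\begin{equation*}
(r-1) + c_1\Bigl(r\log r - \tfrac{r}{2}\Bigr) \leq c_1\, r\log r,
\end{equation*}
i.e.\ $r-1 \leq c_1 r/2$, which forces $c_1 \geq 2$. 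So your argument proves $C(m,r) \leq 2r\log r + O(r)$, not $r\log r + O(r)$. The missing factor comes precisely from discarding the concentration of the split near $r'=r/2$, where $r'\log r' + (r-r')\log(r-r')$ equals $r\log r - r$ rather than the uniform average $r\log r - r/2$; the paper's detour through $G$ and the two concentration/concavity lemmas exists exactly to recover that full $-r$ of slack. Since the thesis explicitly insists on the leading constants (the whole point is to compare $B(k)$ against $Q(k)$ and $R(k)$ up to constant factors), you would need either to replace the rearrangement step by a genuine concentration estimate for the hypergeometric split, or to adopt the paper's domination-by-$G$ argument.
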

\begin{proof}
Lets consider all the possible splits of the set $X_r$ by the current \textit{root} of
the heap. As was shown above, there can be at most $r-1$ ``bad" comparisons of
elements $X_r$ at a node, given that the node ``splits" the set $X_r$,
otherwise there will be no comparisons. Using the distribution of the elements
of $X_r$ in the heap:

\begin{center}
$\displaystyle C(2m+1, r) \leq r-1+\sum_{r' = 0}^{r}\frac{{2m+1-r \choose
m-r'}{r \choose r'}}{{2m+1 \choose m}}(C(m, r')+C(m, r-r'))$
\end{center}

The recursive relation reminds the formula for counting the expected number of 
comparisons under a different model of building the heap discussed above. Now, let us exploit the relation which we analyzed for a different model of
building the heap. We have shown that for 
\begin{center}
$\displaystyle C(0), C(1), C(2) = 0$ and $C(N) = 1+\sum {N-2 \choose
k}C(k+1)/2^{N-2}$
\end{center}
it holds that $C_{N} = \log(N)+O(1)$. Using a similar proof one can show that
for a relation
\begin{center}
$\displaystyle G(0), G(1), G(2) = 0$ and $G(N) = N+\sum {N \choose
k}(G(k)+G(N-k))/2^{N}$
\end{center}
it holds that $G(N) = N\log(N)-O(N)$.
In \cite{DBLP:journals/jal/SchafferS93} one can find a more precise analysis of the relation $G(N)$, in
particular it is shown that $G(N) < N\log(N)-\epsilon(N)N$, where $|\epsilon(N)|
< 1^{-3}$.

Coming back to the original expression, we wanted to show that
for
\begin{equation}
C(2m+1, r) = r-1+\sum_{r' = 1}^{r-1}\frac{{2m+1-r \choose m-r'}{r
\choose r'}}{{2m+1 \choose m}}(C(m, r')+C(m, r-r'))
\end{equation}

it holds that $C(m,r) \leq r\log(r)$. What we will try to show is that $C(m, r)
\leq G(r)$ which is an even a stronger statement.

Intuitively, the probability distribution $\displaystyle P_{m, r, r'} =
\frac{{2m+1-r \choose m-r'}{r \choose r'}}{{2m+1 \choose m}}$,
``encourages'' even splits of the elements in $X_r$ more than the split probability
$\displaystyle T_{m, r, r'} = {r \choose r'}/2^r$ (of $(G(r)$).
\begin{myle} (Split probability distributions)
There exists a range
$\displaystyle [r/2-\delta\ldots r/2+\delta]$ such that $P_{m, r, r'} \geq T(m,
r, r')$ for $r' \in \displaystyle [r/2-\delta\ldots r/2+\delta]$ and $P_{m, r,
r'} < T_{m, r, r'}$ otherwise.
\end{myle}
\begin{proof}
To show this more
rigorously, note that $\displaystyle \sum_{r'} P_{m, r, r'} = 1$ and
$\displaystyle \sum_{r'} T_{m, r, r'} = 1$, also, both $\displaystyle T_{m, r,
r'}$ and $\displaystyle P_{m, r, r'}$ are increasing on $\displaystyle r' \in
[1, \frac{r}{2}]$, however $\displaystyle R_{m, r, r'}$ is a pointwise product of
two binomial distributions. Note that if
$P_{m, r, r'} > T_{m, r, r'}$ then $\displaystyle \frac{{2m+1-r \choose
m-r'}}{{2m+1 \choose m}} > \frac{1}{2^r}$. And from the fact that $P_{m, r,
r'}$ is increasing on $[0\ldots r/2]$ and both $T_{m, r, r'}, P_{m, r, r'}$ sum
to $1$ it follows that there exists such a $\delta$.
\end{proof}
\begin{myco}
$P_{m, r, 'r} \geq T_{m, r, r'}$ for $r' = 0$
\end{myco}
The lemma above about the split probabilities combined with the next lemma
will give us a way to analyze the relation $C(m, r)$. Intuitively the next lemma
will show that the sequence $G(k)$ is ``concave'', i.e. the better the split of
the elements $X_r$, the less comparisons happen. As the split probability
$P_{m, r, r'}$ ``encourages'' even splits more than the probability $T_{m, r,
r'}$, we can bound $C(m, r)$ by $G(r)$.
\begin{myle} (``Concavity'' of the sequence $G(N)$) \\
For the sequence $G(N)$ defined above, it holds that $G(N-k)+G(k) \geq
G(N-k')+G(k')$, for $|N/2-k| > |N/2-k'|$
\end{myle}
\begin{proof}
Let us first show that $G(N)-G(N-1) \geq G(N-1)-G(N-2)$. Using the solution to
the recursion $G(N)$ from \cite{DBLP:journals/jal/SchafferS93}, namely
\begin{equation}
G(N) = N \sum_{j \geq 0} (1-\frac{1}{2^j})^{N-1}
\end{equation}
we can simplify $G(N+2)-2G(N+1)+G(N)$ to 
\begin{equation}
\ldots = \sum_{j \geq 0}
(-\frac{N+2}{2^{2j}}+\frac{2}{2^j})(1-\frac{1}{2^j})^{N-1}
\end{equation}
Now we need to prove that the sum non-negative for every $N$. To show the
non-negativeness, we will use the approximation of the sum with a definite
integral (see appendix).

Let $\displaystyle f(x) = \frac{1}{2^x}(1-\frac{1}{2^x})^{N-1}$, then
\begin{equation}
\int f(x) \mathrm{d}x = \frac{(1-\frac{1}{2^x})^{N}}{N \log_{e}(2)}
\end{equation}
Similarly, let $\displaystyle g(x) = \frac{1}{2^{2x}}(1-\frac{1}{2^x})^{N-1}$,
then using product rule, we have
\begin{equation}
\int g(x) \mathrm{d}x = 2^{-x}\frac{(1-\frac{1}{2^x})^{N}}{N
\log_{e}(2)}+\frac{(1-\frac{1}{2^x})^{N+1}}{N(N+1) \log_{e}(2)} =
\frac{(1-\frac{1}{2^x})^N}{N
\log_{e}(2)}-\frac{(1-\frac{1}{2^x})^{N+1}}{(N+1)\log_{e}(2)}
\end{equation}

Easy enough, $\sum_{i \geq 0}2f(i)-(N+2)g(i) $ is exactly the original sum.
Now, let us split the original sum into negative and positive terms. Looking at
multiple $\displaystyle -\frac{N+2}{2^{2j}}+\frac{2}{2^j}$ appearing in the
original summation, we conclude that terms for $i \geq \log(N+2)-1$ are non
negative. We have to be slightly careful here, as for $N+2 = 2^k$ for some
integer $k$ the term for $i = \log(N+2)-1$ is zero. Hence, to show the
non-negativity of the original sum, we can simply show that

\begin{equation}
\sum_{j \geq \log(N+2)-1}
(-\frac{N+2}{2^{2j}}+\frac{2}{2^j})(1-\frac{1}{2^j})^{N-1} \geq - \sum_{0 \leq j
< \log(N+2)-1} (-\frac{N+2}{2^{2j}}+\frac{2}{2^j})(1-\frac{1}{2^j})^{N-1}
\end{equation}

We are almost ready to apply the approximation of the sum by an integral.
It is easy to establish that $2f(x)-(N+2)g(x)$ is decreasing on $j \in
[0\ldots\log(N+2)-1)$ and similarly that $2f(x)-(N+2)g(x)$ is decreasing on $j \in
[\log(N+2)-1\ldots+\infty)$. Hence, we can lower bound r.h.s and upper bound
l.h.s as shown in the appendix by an integral and we would then need to show
that
\begin{equation*}
\sum_{j \geq \log(N+2)-1}
(-\frac{N+2}{2^{2j}}+\frac{2}{2^j})(1-\frac{1}{2^j})^{N-1}
\end{equation*}
\begin{eqnarray}
 & \geq &
\int_{\log(N+2)-1}^{\infty} 2f(x)-(N+2)g(x) \mathrm{d}x \\
\ldots & \geq & \int_{0}^{\log(N+2)-1} 2f(x)-(N+2)g(x) \mathrm{d}x \\
\ldots & \geq & -
\sum_{1 \leq j < \log(N+2)-1}
(-\frac{N+2}{2^{2j}}+\frac{2}{2^j})(1-\frac{1}{2^j})^{N-1} \\
\ldots & = & -\sum_{0 \leq j < \log(N+2)-1}
(-\frac{N+2}{2^{2j}}+\frac{2}{2^j})(1-\frac{1}{2^j})^{N-1}
\end{eqnarray}

We can rewrite the inequality $(17)$ as:

\begin{equation}
\int_{\log(N+2)-1}^{\infty} 2f(x)-(N+2)g(x) \mathrm{d}x \geq
\int_{0}^{\log(N+2)-1} (N+2)g(x)-2f(x) \mathrm{d}x
\end{equation}

or 

\begin{eqnarray}
2\frac{(1-\frac{1}{2^x})^{N}}{N\log_e2}-(N+2)(\frac{(1-\frac{1}{2^x})^N}{N
\log_{e}(2)}-\frac{(1-\frac{1}{2^x})^{N+1}}{(N+1)\log_{e}(2)})
\bigg{|}_{\log(N+2)-1}^{\infty} & \geq & \\
-2\frac{(1-\frac{1}{2^x})^{N}}{N\log_e2}+(N+2)(\frac{(1-\frac{1}{2^x})^N}{N
\log_{e}(2)}-\frac{(1-\frac{1}{2^x})^{N+1}}{(N+1)\log_{e}(2)})
\bigg{|}_{0}^{\log(N+2)-1}
\end{eqnarray}

Now, note that
$\displaystyle g(x)\bigg{|}_{x=0} = f(x) \bigg{|}_{x=0} = 0$ and
$\displaystyle f(x)\bigg{|}_{x\rightarrow \infty} = \frac{1}{N\log_e2}, g(x)
\bigg{|}_{x\rightarrow \infty} = \frac{1}{N\log_e2}-\frac{1}{(N+1)\log_e2}$.
We can get rid of the $\log_{e}2$ on both sides and rewrite the inequality as

\begin{eqnarray}
2\frac{1}{N}-(N+2)(\frac{1}{N}-\frac{1}{N+1}) = \frac{1}{N}-\frac{1}{N(N+1)}
 = \frac{1}{N+1}& \geq & \\
2(-2\frac{(1-\frac{1}{2^x})^{N}}{N}+(N+2)(\frac{(1-\frac{1}{2^x})^N}{N}-\frac{(1-\frac{1}{2^x})^{N+1}}{(N+1)})
\bigg{|}_{x = \log(N+2)-1}) & = & \\
2((1-\frac{1}{2^x})^N-(N+2)\frac{(1-\frac{1}{2^x})^{N+1}}{(N+1)})\big{|}_{x=\log(N+2)-1}& = & \\
2((1-\frac{1}{2^x})^N\frac{1}{2^x}-\frac{(1-\frac{1}{2^x})^{N+1}}{N+1}\big{|}_{x=\log(N+2)-1})
\end{eqnarray}

We can relax the inequality even further to 
\begin{equation}
\frac{1}{N+1} \geq
2((1-\frac{1}{2^x})^N\frac{1}{2^x}\big{|}_{x=\log(N+2)-1})
\end{equation}

To explore the behavior of a function $\displaystyle h(x) =
(1-\frac{1}{2^x})^N\frac{1}{2^x}$ notice that 
\begin{equation}
h'(x) =
(1-\frac{1}{2^x})^N(-\log_e2)\frac{1}{2^x}+N(1-\frac{1}{2^x})^{N-1}\frac{\log_e2}{2^{2x}}
\leq 0\end{equation}

is equivalent to 
\begin{equation}
N \leq (1-\frac{1}{2^x})2^x = 2^x-1 \leftrightarrow \log(N+1) \leq x
\end{equation}

And, hence the function is increasing for $\log(N+1) \leq x$ and attains its
maximum at $x = \log(N+1)$. Hence, 

\begin{equation}
2((1-\frac{1}{2^x})^N\frac{1}{2^x}\big{|}_{x=\log(N+2)-1}) <
2((1-\frac{1}{2^x})^N\frac{1}{2^x}\big{|}_{x=\log(N+1)})
\end{equation}

It is well known that $\displaystyle (1-\frac{1}{y})^y < \frac{1}{e}$ for $y
\geq 1$.
Hence, finally, we can relax the original inequality to

\begin{equation}
\frac{1}{N+1} > \frac{2}{e}\frac{1}{2^x}\big{|}_{x=\log(N+1)} =
\frac{2}{e}\frac{1}{N+1}
\end{equation}

which clearly holds. Now, in order to show that $G(N-k)+G(k) \geq
G(N-k')+G(k')$, for $|N/2-k| > |N/2-k'|$ let without loss of generality 
$N-k' > N-k > k > k'$, then what we need to show is
\begin{equation}
G(N-k')-G(N-k) \leq G(k)-G(k')
\end{equation}

but using telescoping sums we can rewrite it as 

\begin{eqnarray}
G(N-k')-G(N-k) & = & \\
\sum_{k' < i \leq k}(G(N-i)-G(N-i-1)) & \geq & \sum_{k' < i
\leq k}(G(i)-G(i-1)) \\ 
& = & G(k)-G(k')
\end{eqnarray}
which holds as the sequence $G(k)$ is convex. \\

As a sanity check, a simulating program for verifying ``concavity'' of $G(N)$
was written. The program verified the fact for $N \leq 10000$. The proof
actually shows us that $\displaystyle G(N+2)-2G(N+1)+G(N) >
\frac{1-\frac{2}{e}}{N+1}$, which is suggested by the results of the program as
well.
\end{proof}

Using the above, let us show that $C(m, r) \leq G(r)$
for all $m = 2^k-1$ inductively on $r$.
The base of induction clearly holds (for $r=0, 1, 2$ and for all $m$).
Clearly, $C(m, r) \leq G(r)$ for $m < r$. Using the fact that both $G(r)$ and
$C(m, r)$ are increasing in $r$ and the observation about the split probabilities we have

\begin{eqnarray}
& & C(2m+1, r) = \\
 & = & r-1 + \sum_{r' = 0}^{r}\frac{{2m+1-r \choose m-r'}{r
\choose r'}}{{2m+1 \choose m}}[C(m, r')+C(m, r-r')] \\
& \leq & r-1+\frac{2{2m+1-r \choose m}}{{2m+1 \choose m}}C(m,
r)+\sum_{r' = 1}^{r-1}\frac{{2m+1-r \choose m-r'}{r \choose
r'}}{{2m+1 \choose m}}[G(r')+G(r-r')] \\
& \leq & r-1+\frac{2{2m+1-r \choose m}}{{2m+1 \choose m}}C(m,
r)+\sum_{r' =
1}^{r-1}\frac{{r \choose r'}}{2^r}[G(r')+G(r-r')] \\
&\leq & r-1+\frac{2}{2^r}G(r)+\sum_{r' =
1}^{r-1}\frac{{r \choose r'}}{2^r}[G(r')+G(r-r')] \\
& = &r-1+\sum_{r' =
0}^{r}\frac{{r \choose r'}}{2^r}[G(r')+G(r-r')] = G(r)
\end{eqnarray}

the first inequality comes from the induction hypothesis $G(r') \geq C(m, r')$
for all $m$ and $r' < r$. The second follows from the fact that the sequence
$G(r')+G(r-r')$ is decreasing (from the proof above) for $r' \in [0, r/2]$ and
the observation about the split probabilities. The third inequality follows from
the observation about the split probabilities and induction hypothesis that
$C(m, r)\leq G(r)$.

Hence, $C(m, r)$ can be upper bounded by $r\log(r)$ as well for $m = 2^k-1$.
\end{proof}

Notice that the lemma above works only for the case where the initial number of
elements is of the form $2^k-1$. We can still make the \textit{Heapsort}
algorithm work as follows: 

\begin{myde}
Call an expansion of a number $n$ ``almost binary'' if 
\begin{equation}
n = \sum_k (2^{s_k}-1)
\end{equation}
and 
\begin{equation}
2^{s_k}-1 \geq \sum_{i < k} 2^{s_i}-1
\end{equation}
it follows that the sequence $s_1, s_2, \ldots, s_m$ is as small as possible and
$s_1 \geq s_2 \geq \ldots \geq s_m$
\end{myde}

It is easy to verify that such an expansion is unique for each $n$. But it could
be that $s_k = s_{k+1}$ for some $k$, for example when $n = 2(2^k-1)$. Notice
that, however, there is only one $k$ for which such a condition holds, and $s_k$
is the next to last in the sequence $\{s_k\}_{k}$. Because otherwise, let us
take the smallest such $i$, but then, if such an $k$ is not the last, $2^{s_k}-1
< 2^{s_{k+1}}-1+1$ and hence the second condition of the definition of the
``almost-binary'' expansion doesn't hold.

\begin{myle}
The number of terms in such an expansion is at most $\sum_{i = 1}\log^{(i)}(n)$
where $\log^{(i)}(n) = \log(n)$ for $i = 1$ and $\log^{(i)}(n) =
\log(\log^{(i-1)}(n))$ for $i > 1$.
\end{myle}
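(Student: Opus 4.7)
The plan is to identify the almost-binary expansion of $n$ with the greedy Mersenne decomposition, to establish a halving estimate on the residue, and then to compare the resulting $O(\log n)$ bound with the iterated-logarithm expression appearing in the statement.

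First, I would argue that the expansion is greedy. The non-increasing ordering $s_1 \ge s_2 \ge \cdots \ge s_m$, combined with the dominance condition that forces the leading term $2^{s_1}-1$ to be at least the sum of all smaller ones, pins $s_1$ down uniquely as the largest exponent with $2^{s_1}-1 \le n$, namely $s_1 = \lfloor \log_2(n+1) \rfloor$; the remaining $s_k$'s then form the almost-binary expansion of the residue $r := n - (2^{s_1}-1)$. Uniqueness, already noted by the author, reduces the whole question to bounding how fast this greedy step shrinks $n$. The key estimate is $r \le \lfloor n/2 \rfloor$: since $s_1 = \lfloor \log_2(n+1) \rfloor$ gives $2^{s_1+1} > n+1$, and hence $2^{s_1}-1 > (n-1)/2$, we obtain $r = n - (2^{s_1}-1) < (n+1)/2$. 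Writing $T(n)$ for the number of terms in the expansion of $n$, this yields the recurrence $T(n) \le 1 + T(\lfloor n/2 \rfloor)$ with base cases $T(0)=0$ and $T(1)=1$, and a one-line induction gives $T(n) \le \lfloor \log_2 n \rfloor + 1$.

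To finish, I would observe that $\sum_{i\ge 1} \log^{(i)}(n) \ge \log_2 n$ already from the $i=1$ term alone, while every further iterated-log summand is non-negative wherever it is defined. Consequently $T(n) \le \sum_{i\ge 1} \log^{(i)}(n)$ follows as soon as the $i=2$ summand absorbs the leading ``$+1$'' coming from the recurrence (say for $n \ge 4$), and the finitely many remaining small values of $n$ are handled by direct inspection.

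The one subtlety is the doubling edge case $n = 2(2^{s_1}-1)$ flagged in the paper, where $s_2 = s_1$ and the halving inequality is met with equality; this does not weaken the bound because the recursion terminates immediately afterwards. I do not anticipate any real obstacle beyond careful bookkeeping around the floor in the definition of $s_1$ — the whole argument is driven by the elementary halving estimate, and the iterated-logarithm form of the bound is comfortably slack compared with the tighter $\log_2 n + 1$ actually proved.
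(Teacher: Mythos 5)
Your proof is correct and follows a genuinely different route from the paper's. The paper peels off the whole binary representation of $n$ in one step: writing $n=\sum_i 2^{b_i}$ with at most $\log(n)$ bits set, it observes that $n-\sum_i(2^{b_i}-1)$ equals the number of set bits, hence is at most $\log(n)$, and so obtains the recurrence $A(n)\leq \log(n)+A(\log(n))$, whose unrolling \emph{is} the iterated-logarithm sum in the statement --- there is no slack to absorb, which is presumably why the lemma is phrased that way. You instead analyze the greedy structure of the expansion one term at a time: pinning down $s_1=\lfloor\log_2(n+1)\rfloor$ as forced by the dominance condition, showing the residue satisfies $r\leq\lfloor n/2\rfloor$, and deriving $T(n)\leq 1+T(\lfloor n/2\rfloor)$, hence $T(n)\leq\lfloor\log_2 n\rfloor+1$. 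This buys you two things the paper does not have: a justification that the expansion really is the greedy Mersenne decomposition (the paper's construction from the binary representation need not satisfy the dominance condition, e.g.\ when $n=2^k-1$), and a strictly sharper bound, $\log_2(n)+1$ versus $\log(n)+\log\log(n)+\cdots$, of which the stated iterated-log sum is then a generous relaxation (the $i=1$ term plus one unit of the $i=2$ term already suffices for $n\geq 4$). The only caveat in your final step is that ``direct inspection'' of the small cases will reveal that the stated bound actually fails at $n=2$ (two terms versus $\log(2)+\log\log(2)=1$); this is a defect of the statement itself for degenerate $n$, shared by the paper's own recurrence once it bottoms out, and does not affect your argument for $n\geq 4$.
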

\begin{proof}
Denote the length of the almost-binary sequence of $n$ as $A(n)$. Consider
binary representation ${b_i}_i$ of $n$, which is know to have at most $\log(n)$ terms.
Then, $n-\sum_{i}2^{b_i} \leq \log(n)$ and hence, $A(n) \leq
A(\log(n))+\log(n)$. The statement of the lemma follows immediately.
\end{proof}

Having split the input elements to the subranges induced by the sizes of the
``almost binary'' expansion of $n$. As we have shown above, during \textit{
Heapsort} in each heap of size $2^{s_k}-1$ only $r\log(r)-O(r)$ comparisons are
created between the subrange elements. Suppose, that in a heap of size
$2^{s_k}-1$, there are $r_k$ \textcolor{red}{red} elements, then
the total number of comparisons in each heap is 
\begin{equation}
\sum_k r_k\log(r_k)-O(r_k) \leq r\log(r)-O(r)
\end{equation}

Once all of the heaps are sorted, we can merge then with a procedure analogous
to the one in the classic \textit{Mergesort}. We are only left to show that the
number of comparisons of the \textcolor{red}{red} elements during this procedure is $O(r)$.

The following observations will help to establish the result:

\begin{myob}
If the initial permutation is permuted uniformly at random, then each subrange
of size $2^{s_k}-1$ is also permuted uniformly, i.e. every subset of size
$2^{s_k}-1$ of input elements is equally likely to appear in the subrange.
\end{myob}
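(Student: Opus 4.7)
The plan is to reduce the claim to a standard symmetry property of the uniform random permutation: if $\sigma$ is uniformly distributed over the $n!$ permutations of $[n]$, then for any fixed set of positions $P \subseteq [1,n]$ with $|P|=t$, the multiset of values $\{\sigma(i) : i \in P\}$ is uniformly distributed over the $\binom{n}{t}$ subsets of $[n]$ of size $t$, and, conditioned on this multiset, the ordered tuple $(\sigma(i))_{i \in P}$ is uniformly distributed over its $t!$ arrangements.

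First, I would fix the partition of the position set $[1,n]$ into the consecutive blocks $P_1, P_2, \ldots, P_m$ of sizes $2^{s_1}-1, 2^{s_2}-1, \ldots, 2^{s_m}-1$ prescribed by the almost-binary expansion of $n$. A direct counting argument then shows that the number of permutations $\sigma$ for which a specified ordered partition $(S_1, \ldots, S_m)$ of $[n]$ with $|S_k| = |P_k|$ satisfies $\{\sigma(i) : i \in P_k\} = S_k$ for all $k$ is exactly $\prod_k |P_k|!$. Dividing by $n!$, every ordered partition into blocks of the given sizes arises with probability $1/\binom{n}{|P_1|,\ldots,|P_m|}$, so in particular the marginal distribution of the set of elements appearing in any single block $P_k$ is uniform over $\binom{[n]}{|P_k|}$.

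Combining this with the second part of the symmetry statement — that, conditioned on the value-set landing in $P_k$, the values appear in a uniformly random order within $P_k$ — gives exactly the assertion of the observation. Since the statement is a direct consequence of permutation symmetry, there is no real obstacle; the only care required is to keep the joint and marginal claims straight, because later lemmas will need both (the marginal uniformity over subsets, so that the earlier randomness-preservation results can be invoked independently in each block, and the conditional uniformity within a block, so that the bounds derived for Heapsort on a heap of size $2^{s_k}-1$ apply to each of the $m$ sub-instances).
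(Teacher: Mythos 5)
Your argument is correct: the counting gives $\prod_k |P_k|!$ permutations per ordered partition, hence the multinomial probability, hence uniform marginals over subsets and uniform conditional arrangements within each block. The paper states this observation without any proof at all, treating it as an immediate consequence of permutation symmetry, so your write-up simply makes explicit the standard argument the paper takes for granted; there is no divergence in approach, and your care in separating the marginal claim (which subset lands in the block) from the conditional claim (uniform order within the block) is exactly what the subsequent per-heap analysis needs.
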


\begin{myob}
After a subrange of size $2^{s_k}-1$ is sorted, the \textcolor{red}{red}
elements lie contiguously. This observation is similar to the one above, which says that
elements of the range form a contiguous tree in a heap.
\end{myob}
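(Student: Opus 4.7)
The plan is to exploit the one defining feature of $X_r$ that has been used throughout the chapter: by definition $X_r = \{x_{i+1}, \ldots, x_{i+r}\}$ is an order-consecutive window of $X$, i.e.\ an interval in the total order. Intervals of a totally ordered set are stable under restriction to a subset, so after the subrange is sorted the red elements cannot be separated by any blue element.

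Concretely, I would fix notation first: let $S \subseteq X$ denote the set of $2^{s_k}-1$ input elements that populate the $k$-th subrange after the initial random permutation, and write $R = S \cap X_r$ for the red elements among them. Sorting the subrange rearranges $S$ into increasing order of value, and the question is whether $R$ occupies a block of consecutive positions in this listing.

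Next I would argue by contradiction. Suppose $R$ is not contiguous in the sorted listing; then there exist $x_a, x_b \in R$ and a blue element $y \in S \setminus X_r$ such that $x_a < y < x_b$. Since $x_a, x_b \in X_r$ and $X_r$ is an order-interval of $X$, any element of $X$ strictly between $x_a$ and $x_b$ in value belongs to $X_r$. In particular $y \in X_r$, contradicting $y \notin X_r$. Hence no blue element separates two reds in the sorted output, and $R$ forms a contiguous block, which is exactly the claim.

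There is no serious obstacle in the proof itself; the statement is essentially a structural remark that mirrors, for a sorted array, the earlier observation that red elements form a connected subtree inside a heap. The genuinely substantive work is what this observation is meant to enable afterwards, namely the $O(r)$ bound on red--red symbol comparisons during the final merge of the $A(n)$ sorted subranges: combined with the uniformity observation above it, the contiguity of the red elements in each sorted block reduces the merge analysis to counting comparisons involving only the boundaries of these blocks, which can be absorbed into a linear term.
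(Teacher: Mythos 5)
Your proof is correct and uses the same core idea the paper relies on: since $X_r$ is an order-consecutive interval of $X$, no blue element can have a value strictly between two red values, so the reds cannot be separated in the sorted block (the paper itself leaves this observation unproved, merely pointing to the analogous connected-subtree argument, whose proof is exactly this interval/contradiction reasoning). Nothing further is needed.
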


Let us now count the number of comparisons during the merge:
\begin{myle}
To merge two arrays of size $r_k$ and $r_{k+1}$ we need at most $r_k+r_{k+1}-1$
comparisons of the \textcolor{red}{red} elements.
\end{myle}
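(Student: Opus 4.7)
The plan is to exploit the preceding observation that after sorting each subrange of size $2^{s_k}-1$, the \textcolor{red}{red} elements within that subrange lie contiguously. In the merge of two such sorted subarrays, a comparison between two \textcolor{red}{red} elements can occur only at a moment when the current head (leftmost unconsumed element) of both arrays is \textcolor{red}{red}. Because of contiguity, in each array the set of time steps during which the head is \textcolor{red}{red} forms a single contiguous interval. Therefore the set of time steps during which both heads are simultaneously \textcolor{red}{red} is itself a single contiguous phase of the merge, and only during this phase can \textcolor{red}{red}--\textcolor{red}{red} comparisons occur.

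Next, I would run a simple potential argument on this phase. Let $p$ and $q$ denote the number of \textcolor{red}{red} elements still present in the first and second array respectively at the moment the phase begins; by definition $p \leq r_k$ and $q \leq r_{k+1}$. Every comparison during this phase is between two \textcolor{red}{red} elements and removes exactly one of them from the input by moving it to the output, so the quantity $p+q$ strictly decreases with each such comparison. The phase terminates as soon as one of $p,q$ reaches zero; at that moment the other counter is at least $1$ (otherwise the merge has already ended, in which case even fewer comparisons occurred). Consequently the total number of \textcolor{red}{red}--\textcolor{red}{red} comparisons is bounded by $p+q-1 \leq r_k + r_{k+1} - 1$, which is exactly the claim.

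I do not anticipate any serious obstacle: the argument is essentially the standard ``each comparison consumes one element'' merge bound, restricted to the sub-interval of time during which both heads are \textcolor{red}{red}. The only minor thing to check is the degenerate case in which one array is exhausted entirely before both heads ever become \textcolor{red}{red} simultaneously, but then no \textcolor{red}{red}--\textcolor{red}{red} comparison takes place and the bound holds trivially.
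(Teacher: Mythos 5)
The paper states this lemma without any proof at all --- it is treated as an immediate instance of the standard ``merging two lists of lengths $a$ and $b$ costs at most $a+b-1$ comparisons'' bound, restricted to the \textcolor{red}{red} elements. Your argument is correct and actually supplies the missing justification: each \textcolor{red}{red}--\textcolor{red}{red} comparison outputs exactly one \textcolor{red}{red} element, and when the last \textcolor{red}{red} element of one array is output by such a comparison the opposing head is a still-unconsumed \textcolor{red}{red} element, so at most $r_k+r_{k+1}-1$ such comparisons occur. One small remark: the appeal to contiguity of the \textcolor{red}{red} elements, while valid (it is the observation the paper records just before this lemma), is not actually needed for this bound --- the ``each comparison consumes one element, and the final \textcolor{red}{red} element of the other array survives'' argument works for an arbitrary placement of the \textcolor{red}{red} elements within the two sorted arrays, since after one array has exhausted its \textcolor{red}{red} elements no further \textcolor{red}{red}--\textcolor{red}{red} comparison is possible regardless of contiguity. (Pedantically, the stated bound fails in the degenerate case $r_k=r_{k+1}=0$, where it would assert at most $-1$ comparisons; neither your proof nor the paper's usage is affected by this.)
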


\begin{myle}
During the merge procedure, there will be at most $\sum_{1 \leq k \leq m} kr_k$
comparisons of the \textcolor{red}{red} elements.
\end{myle}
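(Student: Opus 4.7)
The plan is to reduce the statement to the previous lemma by a straightforward accounting argument over a carefully chosen merging order. I would merge the $m$ sorted subarrays one at a time in a linear cascade, always merging into a running accumulator, and choose the order so that the subarray indexed $k$ in the bound $\sum_k k r_k$ is exactly the one that participates in $k$ of the binary merges. Concretely, I would start the cascade from the smallest subarray and successively fold in the next-smallest one, so that a red element coming from a ``late'' (large) subarray gets touched by few merges while a red element from an ``early'' (small) subarray is re-scanned in every subsequent merge.

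Next I would apply the previous lemma to each of the $m-1$ binary merges. After the $i$-th merge the accumulator carries $R_i = r_m + r_{m-1} + \cdots + r_{m-i}$ red elements, and the previous lemma upper-bounds the red comparisons of the $i$-th merge by $R_i - 1$. Summing over $i = 1, \ldots, m-1$ and swapping the order of summation, each $r_j$ is counted once for every merge it is present in: this is $j$ times for $j \leq m-1$ and $m-1$ times for $j = m$. The total therefore telescopes to
\begin{equation*}
\sum_{i=1}^{m-1}(R_i - 1) \;=\; \sum_{j=1}^{m-1} j\, r_j + (m-1)\, r_m - (m-1) \;\leq\; \sum_{k=1}^{m} k\, r_k,
\end{equation*}
which is the claimed bound (in fact with slack $r_m + (m-1)$).

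The key observation making the whole thing consistent with the earlier material is the remark that after the heap-sorting stage each subarray is already sorted internally, that the red elements inside a sorted subarray form a contiguous block (the analogue of the earlier ``$X_r$ forms a connected subheap''), and that the subarrays are uniformly permuted so that the bounds of the previous lemma apply to each binary merge independently. The main thing that needs attention is the choice of cascade direction: merging from the largest subarray downward would assign the largest multiplicity to the smallest index and would not telescope to $\sum_k k r_k$; merging from the smallest subarray upward is what makes the multiplicities line up with the claimed bound. Once the direction is fixed, the remaining work is the routine index swap carried out above.
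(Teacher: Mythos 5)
Your proof is correct and takes essentially the same route as the paper, whose entire justification is the one-line observation that the $k$-th subrange's elements are merged against the elements of subranges $1,\ldots,k-1$ only. Your cascade-merge accounting --- folding in from the smallest subarray, applying the preceding two-array lemma to each binary merge, and swapping the order of summation --- is simply a fully worked-out version of that observation, with the correct choice of cascade direction so that the largest subrange receives the smallest multiplicity.
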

\begin{proof}
$k$-th subrange's elements will be merged with the elements of subranges $1, 2,
\ldots k-1$ only.
\end{proof}

Denote $r_k$ the expected number of \textcolor{red}{red} elements in $i$-th heap.
In expectation, by the lemma above, we will have $E(\sum_{1 \leq k \leq m}
r_kk)$ comparisons. Exploiting linearity of expectation, we have 
\begin{equation}
E(\sum_{1 \leq k \leq m} kr_k) = \sum_{1 \leq k \leq m}k E(r_k) = \sum_{1 \leq k \leq m} k r
\frac{2^{s_k}-1}{n}
\end{equation}

We can rewrite the sum as a number of partial sums:

\begin{equation}
\sum_{1 \leq k \leq m} k r
\frac{2^{s_k}-1}{n} = \sum_{1 \leq k \leq m} \sum_{k \leq j \leq m} r
\frac{2^{s_k}-1}{n}
\end{equation}

And by the property of the sequence $s_k$ that 
\begin{equation}
2^{s_k}-1 \geq \sum_{i < k} 2^{s_i}-1
\end{equation}
we can conclude that
\begin{equation}
\sum_{1 \leq k \leq m} k r
\frac{2^{s_k}-1}{n} = \sum_{1 \leq k \leq m} \sum_{k \leq j \leq m} r
\frac{2^{s_k}-1}{n} \leq \sum_{1 \leq k \leq m} r (n/2^{k-1})/n \leq 2r
\end{equation}
And hence, during the merge procedure, the expected number of comparisons of the
\textcolor{red}{red} elements is $O(r)$ and the total number of comparisons
during \textit{Heapsort} between the \textcolor{red}{red} elements is
$r\log(r)+O(r)$.

As was promised, earlier, we give a proof that the dummy elements only
contribute $O(n)$ comparisons to the overall number of comparisons of the
algorithm:

There are $3$ possible comparison types:
\begin{enumerate}
  \item non-dummy and non-dummy
  \item dummy and non-dummy
  \item dummy and dummy 
\end{enumerate}

\begin{myob}
During the \textit{Delete Max} operation, a dummy element moves up on its path to
the root iff there was a comparison of two dummy elements.
\end{myob}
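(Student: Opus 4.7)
The plan is to analyze the two procedures that can move an element upward in the heap during a DeleteMax call under Floyd's modification: SiftDown, which promotes the larger of the two children of the current hole, and SiftUp, which is invoked after the rightmost element is inserted into the hole's final position. First I would show that a dummy cannot move up during SiftUp at all: by assumption every dummy is strictly smaller than every non-dummy and equal in value to any other dummy, so when the dummy just placed at the hole is compared with its parent $A[p]$, the test $A[p] < A[i]$ is always false and SiftUp terminates immediately without performing any swap. Hence every upward motion of a dummy must occur inside SiftDown.

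Next I would establish the forward direction of the iff. Suppose a dummy $d$ is promoted during SiftDown. Then $d$ was at one of the two children of the current hole and was selected as the winner of the child comparison. A non-dummy sibling would strictly exceed $d$ and would have been promoted instead, so the sibling must itself be a dummy; consequently the comparison $A[l] < A[r]$ that selected $d$ was a comparison of two dummy elements. For the reverse direction, whenever SiftDown executes a comparison $A[l] < A[r]$ in which both children are dummies, the pseudocode swaps whichever child wins the strict-less test up one level, so a dummy does move up.

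The main subtlety, and the step I expect to require care, is that a dummy-dummy comparison may additionally arise at the single step of SiftUp, specifically when the hole's final parent happens to be a dummy promoted earlier in the very same DeleteMax. That comparison yields no movement, so a literal reading of the iff is broken by this one event. I would handle it by noting that such a spurious dummy-dummy comparison occurs at most once per DeleteMax, contributing at most $n$ unmatched comparisons over the entire sort; this is absorbed in the $O(n)$ budget claimed for the overall dummy overhead, and the operative content of the observation, namely that dummy-dummy comparisons can be charged to dummy upward movements up to a linear additive term, is preserved.
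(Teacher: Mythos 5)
The paper states this observation without any proof, so there is nothing to compare against directly; your write-up supplies the missing argument, and it is essentially sound. Your reduction to the two subroutines is the right decomposition: in \textit{SiftUp} only the sifted element can move up and a dummy can never win the test $A[p]<A[i]$, and in \textit{SiftDown} a promoted dummy must have beaten (or tied, as the code promotes the left child on ties) a sibling that cannot be a non-dummy. You are also right that the statement is not literally an ``iff'': a dummy--dummy comparison in \textit{SiftUp} produces no movement, and your repair --- at most one such event per \textit{DeleteMax}, hence $O(n)$ over the run, absorbed into the linear budget the paper is aiming for --- is exactly the accounting the surrounding text needs, since the operative direction is that type-3 comparisons can be charged to dummy upward moves. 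One further edge case you do not mention, though it is of the same harmless kind: in \textit{SiftDown}, when the hole has only a left child (the test $r \leq A.heapsize$ fails), that child is promoted with no comparison at all, so a dummy can move up without any dummy--dummy comparison. This breaks the converse direction of the literal ``iff'', but it occurs at most once per \textit{DeleteMax}, and since it adds uncharged \emph{moves} rather than uncharged \emph{comparisons} it does not threaten the bound on type-3 comparisons; it only means the charging map from comparisons to moves is not a bijection, which the application never required.
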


\begin{myob}If there was a comparison of a dummy and a non-dummy element, only
the non-dummy element moves up the path to the root of the heap, as dummies are strictly
smaller than non-dummy elements.
\end{myob}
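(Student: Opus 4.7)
The plan is to pinpoint every step of \textit{Heapsort} at which an element physically moves one level up the tree and then show that in each such step the element that ascends is necessarily the larger of the two being compared. In the Floyd-modified algorithm there are only two such situations: inside \textit{SiftDown}, the larger of the two children of the current hole is promoted up into the hole; and inside \textit{SiftUp}, the sifted element is swapped with its parent exactly when it exceeds the parent. In both cases the element moved toward the root is the strictly larger one in the governing comparison.

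With this structural fact in hand, the claim should follow immediately from the assumption stipulated earlier in the section: every dummy element is strictly smaller than every non-dummy element of the input. Consequently, in any comparison pairing a dummy $d$ against a non-dummy $y$, the strict inequality $d<y$ always holds, and so the element that advances toward the root is the non-dummy $y$, never the dummy $d$.

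The one point that still requires a small check is that the ``dummy is strictly smaller'' relation is actually usable throughout the entire execution of \textit{Heapsort}, rather than being silently violated at some intermediate state. This will be routine: dummies are initially placed in the bottom-most layers, the heap order combined with $d<y$ prevents any non-dummy from being demoted below a dummy during \textit{SiftDown}, and the earlier observation guaranteeing that the elements of $X_r$ never descend into the lowermost layer ensures that a dummy promoted by Floyd's \textit{SiftUp} from the bottom can only be overtaken by some non-dummy acting as a barrier further up.

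The main (and quite mild) obstacle is therefore the bookkeeping in Floyd's \textit{SiftUp} phase, because that is the sole place where the rightmost element, possibly a dummy, travels upward rather than downward. Once it is verified that every comparison along that upward path either pairs two dummies or is stopped by a non-dummy above, the statement of the observation follows with no further computation.
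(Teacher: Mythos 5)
Your proposal is correct and matches the paper's (essentially one-line) justification: dummy elements are by construction strictly smaller in value than every element of $X$, and in both \textit{SiftDown} and \textit{SiftUp} the element promoted toward the root is the larger one in the comparison, so the non-dummy always wins. The extra ``check'' you flag in your last two paragraphs is not actually needed, since the relation $d<y$ is a property of the values themselves and cannot be violated at any intermediate state of the execution.
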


Hence, we can bound the total number of comparisons dummy elements cause with
\begin{enumerate}
  \item cumulative path length of non-dummy elements to the root of the heap
  (comparisons of types $1, 2$)
  \item cumulative path length of dummy elements to their positions once
  \textit{Heapsort} is done (comparisons of type $3$)
\end{enumerate}

We have already shown that for a range of size $r$, there are $r\log(r)+O(r)$
comparisons between the \textcolor{red}{red} elements during the sorting procedure. Taking
$r=n$ we immediately get that comparisons of type $1$ account for
no more than $n\log(n)+O(n)$ of all the comparisons. But information theoretic
lower bound tells us that there are at least $n\log(n)+O(n)$ comparisons in any
algorithm in comparison-based model. Hence, there are only $O(n)$ comparisons of
type $2$. 

The following lemma will show that there are only $O(n)$ comparisons of type
$3$:

\begin{myle}
The cumulative path length of all the dummy elements to their positions in the
heap is $O(n)$
\end{myle}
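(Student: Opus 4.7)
I interpret ``cumulative path length'' as the total number of upward moves made by dummy elements during Heapsort, which by the immediately preceding observation equals the number of type-3 (dummy-vs-dummy) comparisons. The plan is a potential-function argument that exploits the fact that every dummy carries the same value $x'$ strictly smaller than all elements of $X$.

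I start with a structural setup. By heap-order and the strict inequality $x' < \min X$, every dummy's subtree consists only of dummies and the non-dummies form a single connected subtree rooted at the heap's root. Throughout the first $n$ Floyd iterations (before any non-dummy has been popped), the rightmost heap element $A[\mathrm{heapsize}]$ is a dummy, so the Floyd fill step deposits a dummy at the final hole $j$; both $\mathrm{heapsize}$ and $j$ are leaves of the current heap and their depths differ by at most one. Because all dummies share the value $x'$, the SiftUp following a fill step never actually swaps two dummies (heap order is already satisfied with equality): each SiftUp terminates after at most one comparison, so SiftUp contributes at most $n$ type-3 comparisons in total.

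Define the potential
\[
\Phi \;=\; \sum_{D \,:\, \text{dummy in heap}} \mathrm{depth}(D).
\]
Under Floyd's iteration, a SiftDown dummy-promotion --- which is exactly a type-3 SiftDown comparison, since the promoted dummy wins a comparison against a dummy sibling --- decreases $\Phi$ by one; the fill step changes $\Phi$ by $O(1)$ per iteration; and a dummy-vs-dummy SiftUp leaves $\Phi$ unchanged (no swap occurs). Initially the $d = \Theta(n)$ dummies occupy the bottom level(s), so $\Phi_{\mathrm{init}} = d\log N + O(d)$, and after the first $n$ iterations the heap consists of exactly the $d$ dummies sitting at positions $1,\ldots,d$, giving $\Phi_{\mathrm{final}} = \sum_{i=1}^{d}\lfloor \log_2 i\rfloor = d\log d - O(d)$. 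Because $N = n+d = \Theta(d)$ we have $\log(N/d) = O(1)$, so $\Phi_{\mathrm{init}} - \Phi_{\mathrm{final}} = d\log(N/d) + O(d) = O(n)$. Summing the potential evolution over all iterations bounds the total number of SiftDown dummy-promotions, and hence of type-3 SiftDown comparisons, by $O(n)$.

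Adding the $O(n)$ SiftUp contribution completes the argument: the total number of type-3 comparisons is $O(n)$, i.e.\ the cumulative path length of the dummies is $O(n)$. The main obstacle is verifying the structural claim that $A[\mathrm{heapsize}]$ is a dummy throughout the first $n$ iterations --- carefully accounting for the ``garbage'' value left behind at position $\mathrm{heapsize}+1$ by each Floyd iteration and for the partially-filled bottom level of the heap --- together with the $O(1)$-per-iteration fill-step bookkeeping; these technicalities affect the constants hidden in $\Phi_{\mathrm{init}}$ and $\Phi_{\mathrm{final}}$ but do not change the asymptotic $O(n)$ bound.
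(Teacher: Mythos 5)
Your proof is correct and follows essentially the same route as the paper: both bound the number of dummy--dummy comparisons by the total upward displacement of the dummies, i.e.\ by the difference between their initial total depth ($\approx d\log N$ at the bottom levels) and their final total depth ($\approx d\log d$ once they fill positions $1,\ldots,d$), which is the geometric sum $\sum_k(\lceil\log n\rceil-k+1)2^k=O(n)$ that the paper writes down. Your potential-function framing, together with the explicit treatment of the SiftUp comparisons (no swap among equal dummies) and the $O(1)$-per-iteration fill-step bookkeeping, simply supplies the details the paper omits ``due to space constraints.''
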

\begin{proof}
Once \textit{Heapsort} is done, the number of dummy elements at depth
$\log(n)+1$ is $1$, at depth $\log(n)$ - $2$ and so on:
\begin{equation}
\sum_{0 \leq k \leq \lceil \log(n) \rceil} (\lceil \log(n) \rceil -k+1)2^k = O(n) 
\end{equation}
The detailed proof is omitted due to space constraints. Hence the cumulative
path length of the dummy elements to their final positions in the array is
bounded by $O(n)$.
\end{proof}

We can finally conclude that there are $O(n)$ comparisons of types $2, 3$ and
hence the dummy elements contribute only $O(n)$ comparisons overall.

\chapter{Binomial Heapsort analysis}\label{chap3}
\label{chapter:2}

Although often not a method of choice for sorting, it will be shown
that \textit{Binomial Heapsort}
~\cite{Vuillemin:1978:DSM:359460.359478} algorithm preserves randomness during
its runtime which makes it comparatively easy to analyze. The analysis will be carried out up to constants, as with the \textit{Binary HeapSort}.

To recap, a binomial heap is a forest of rooted heaps (trees), ordered by their
sizes and their roots stored in the \textit{Root list}. There is only one tree
for each size in the \textit{Root list}. For every tree $T$, $|T| = 2^k$ for
some $k$ and a tree of size $2^k,\: k > 0$ is a join of two trees $T_1, T_2$ of
size $2^{k-1}$, with the root of $T$ being the larger of the roots of $T_1,
T_2$. We can also view a \textit{Binomial heap} of size $2^k$ as a rooted tree,
with the root having $k$ children, each being a root of a heap of size $2^i, \:
0 \leq i \leq k-1$.

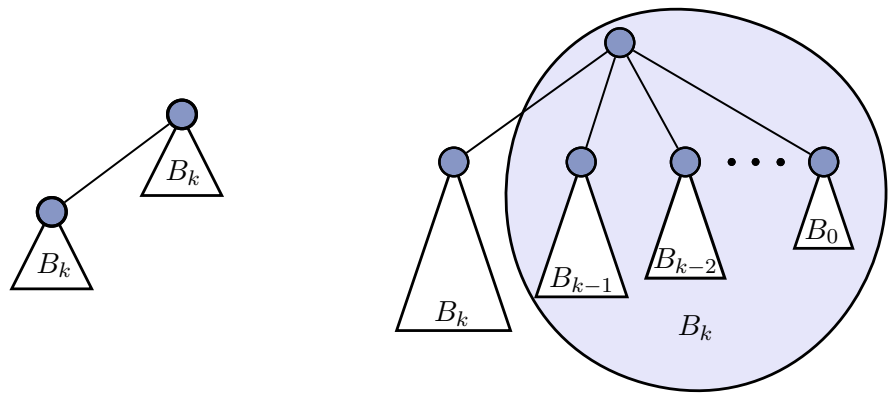
\begin{figure}
\begin{minipage}{0.5\textwidth}
\centering
\ifx\du\undefined
  \newlength{\du}
\fi
\setlength{\du}{15\unitlength}
\begin{tikzpicture}
\pgftransformxscale{1.000000}
\pgftransformyscale{-1.000000}
\definecolor{dialinecolor}{rgb}{0.000000, 0.000000, 0.000000}
\pgfsetstrokecolor{dialinecolor}
\definecolor{dialinecolor}{rgb}{1.000000, 1.000000, 1.000000}
\pgfsetfillcolor{dialinecolor}
\pgfsetlinewidth{0.070000\du}
\pgfsetdash{}{0pt}
\pgfsetdash{}{0pt}
\pgfsetbuttcap
\pgfsetmiterjoin
\pgfsetlinewidth{0.070000\du}
\pgfsetbuttcap
\pgfsetmiterjoin
\pgfsetdash{}{0pt}
\definecolor{dialinecolor}{rgb}{1.000000, 1.000000, 1.000000}
\pgfsetfillcolor{dialinecolor}
\fill (14.050000\du,11.150000\du)--(15.050000\du,13.150000\du)--(13.050000\du,13.150000\du)--cycle;
\definecolor{dialinecolor}{rgb}{0.000000, 0.000000, 0.000000}
\pgfsetstrokecolor{dialinecolor}
\draw (14.050000\du,11.150000\du)--(15.050000\du,13.150000\du)--(13.050000\du,13.150000\du)--cycle;
\pgfsetlinewidth{0.070000\du}
\pgfsetdash{}{0pt}
\pgfsetdash{}{0pt}
\pgfsetbuttcap
\pgfsetmiterjoin
\pgfsetlinewidth{0.070000\du}
\pgfsetbuttcap
\pgfsetmiterjoin
\pgfsetdash{}{0pt}
\definecolor{dialinecolor}{rgb}{1.000000, 1.000000, 1.000000}
\pgfsetfillcolor{dialinecolor}
\fill (17.300000\du,8.800000\du)--(18.300000\du,10.800000\du)--(16.300000\du,10.800000\du)--cycle;
\definecolor{dialinecolor}{rgb}{0.000000, 0.000000, 0.000000}
\pgfsetstrokecolor{dialinecolor}
\draw (17.300000\du,8.800000\du)--(18.300000\du,10.800000\du)--(16.300000\du,10.800000\du)--cycle;
\pgfsetlinewidth{0.080000\du}
\pgfsetdash{}{0pt}
\pgfsetdash{}{0pt}
\pgfsetbuttcap
\pgfsetmiterjoin
\pgfsetlinewidth{0.080000\du}
\pgfsetbuttcap
\pgfsetmiterjoin
\pgfsetdash{}{0pt}
\definecolor{dialinecolor}{rgb}{0.529412, 0.588235, 0.772549}
\pgfsetfillcolor{dialinecolor}
\pgfpathellipse{\pgfpoint{17.305446\du}{8.757772\du}}{\pgfpoint{0.360000\du}{0\du}}{\pgfpoint{0\du}{0.360000\du}}
\pgfusepath{fill}
\definecolor{dialinecolor}{rgb}{0.000000, 0.000000, 0.000000}
\pgfsetstrokecolor{dialinecolor}
\pgfpathellipse{\pgfpoint{17.305446\du}{8.757772\du}}{\pgfpoint{0.360000\du}{0\du}}{\pgfpoint{0\du}{0.360000\du}}
\pgfusepath{stroke}
\pgfsetbuttcap
\pgfsetmiterjoin
\pgfsetdash{}{0pt}
\definecolor{dialinecolor}{rgb}{0.000000, 0.000000, 0.000000}
\pgfsetstrokecolor{dialinecolor}
\pgfpathellipse{\pgfpoint{17.305446\du}{8.757772\du}}{\pgfpoint{0.360000\du}{0\du}}{\pgfpoint{0\du}{0.360000\du}}
\pgfusepath{stroke}
\pgfsetlinewidth{0.080000\du}
\pgfsetdash{}{0pt}
\pgfsetdash{}{0pt}
\pgfsetbuttcap
\pgfsetmiterjoin
\pgfsetlinewidth{0.080000\du}
\pgfsetbuttcap
\pgfsetmiterjoin
\pgfsetdash{}{0pt}
\definecolor{dialinecolor}{rgb}{0.529412, 0.588235, 0.772549}
\pgfsetfillcolor{dialinecolor}
\pgfpathellipse{\pgfpoint{14.052864\du}{11.211843\du}}{\pgfpoint{0.360000\du}{0\du}}{\pgfpoint{0\du}{0.360000\du}}
\pgfusepath{fill}
\definecolor{dialinecolor}{rgb}{0.000000, 0.000000, 0.000000}
\pgfsetstrokecolor{dialinecolor}
\pgfpathellipse{\pgfpoint{14.052864\du}{11.211843\du}}{\pgfpoint{0.360000\du}{0\du}}{\pgfpoint{0\du}{0.360000\du}}
\pgfusepath{stroke}
\pgfsetbuttcap
\pgfsetmiterjoin
\pgfsetdash{}{0pt}
\definecolor{dialinecolor}{rgb}{0.000000, 0.000000, 0.000000}
\pgfsetstrokecolor{dialinecolor}
\pgfpathellipse{\pgfpoint{14.052864\du}{11.211843\du}}{\pgfpoint{0.360000\du}{0\du}}{\pgfpoint{0\du}{0.360000\du}}
\pgfusepath{stroke}
\pgfsetlinewidth{0.050000\du}
\pgfsetdash{}{0pt}
\pgfsetdash{}{0pt}
\pgfsetbuttcap
{
\definecolor{dialinecolor}{rgb}{0.000000, 0.000000, 0.000000}
\pgfsetfillcolor{dialinecolor}
\definecolor{dialinecolor}{rgb}{0.000000, 0.000000, 0.000000}
\pgfsetstrokecolor{dialinecolor}
\draw (14.372485\du,10.970690\du)--(16.985826\du,8.998926\du);
}
\definecolor{dialinecolor}{rgb}{0.000000, 0.000000, 0.000000}
\pgfsetstrokecolor{dialinecolor}
\node[anchor=west] at (13.400000\du,12.550000\du){$B_k$};
\definecolor{dialinecolor}{rgb}{0.000000, 0.000000, 0.000000}
\pgfsetstrokecolor{dialinecolor}
\node[anchor=west] at (16.625000\du,10.225000\du){$B_k$};
\end{tikzpicture}
\end{minipage}%
\begin{minipage}{0.5\textwidth}
\centering
\ifx\du\undefined
  \newlength{\du}
\fi
\setlength{\du}{15\unitlength}
\begin{tikzpicture}
\pgftransformxscale{1.000000}
\pgftransformyscale{-1.000000}
\definecolor{dialinecolor}{rgb}{0.000000, 0.000000, 0.000000}
\pgfsetstrokecolor{dialinecolor}
\definecolor{dialinecolor}{rgb}{1.000000, 1.000000, 1.000000}
\pgfsetfillcolor{dialinecolor}
\pgfsetlinewidth{0.070000\du}
\pgfsetdash{}{0pt}
\pgfsetdash{}{0pt}
\pgfsetmiterjoin
\pgfsetbuttcap
\definecolor{dialinecolor}{rgb}{0.901961, 0.901961, 0.980392}
\pgfsetfillcolor{dialinecolor}
\pgfpathmoveto{\pgfpoint{28.074965\du}{7.025035\du}}
\pgfpathcurveto{\pgfpoint{34.574965\du}{8.975035\du}}{\pgfpoint{31.224965\du}{17.775035\du}}{\pgfpoint{25.574965\du}{16.175035\du}}
\pgfpathcurveto{\pgfpoint{19.924965\du}{14.575035\du}}{\pgfpoint{21.574965\du}{5.075035\du}}{\pgfpoint{28.074965\du}{7.025035\du}}
\pgfusepath{fill}
\definecolor{dialinecolor}{rgb}{0.000000, 0.000000, 0.000000}
\pgfsetstrokecolor{dialinecolor}
\pgfpathmoveto{\pgfpoint{28.074965\du}{7.025035\du}}
\pgfpathcurveto{\pgfpoint{34.574965\du}{8.975035\du}}{\pgfpoint{31.224965\du}{17.775035\du}}{\pgfpoint{25.574965\du}{16.175035\du}}
\pgfpathcurveto{\pgfpoint{19.924965\du}{14.575035\du}}{\pgfpoint{21.574965\du}{5.075035\du}}{\pgfpoint{28.074965\du}{7.025035\du}}
\pgfusepath{stroke}
\pgfsetlinewidth{0.060000\du}
\pgfsetdash{}{0pt}
\pgfsetdash{}{0pt}
\pgfsetbuttcap
\pgfsetmiterjoin
\pgfsetlinewidth{0.060000\du}
\pgfsetbuttcap
\pgfsetmiterjoin
\pgfsetdash{}{0pt}
\definecolor{dialinecolor}{rgb}{0.529412, 0.588235, 0.772549}
\pgfsetfillcolor{dialinecolor}
\pgfpathellipse{\pgfpoint{24.981457\du}{7.605134\du}}{\pgfpoint{0.360000\du}{0\du}}{\pgfpoint{0\du}{0.360000\du}}
\pgfusepath{fill}
\definecolor{dialinecolor}{rgb}{0.000000, 0.000000, 0.000000}
\pgfsetstrokecolor{dialinecolor}
\pgfpathellipse{\pgfpoint{24.981457\du}{7.605134\du}}{\pgfpoint{0.360000\du}{0\du}}{\pgfpoint{0\du}{0.360000\du}}
\pgfusepath{stroke}
\pgfsetbuttcap
\pgfsetmiterjoin
\pgfsetdash{}{0pt}
\definecolor{dialinecolor}{rgb}{0.000000, 0.000000, 0.000000}
\pgfsetstrokecolor{dialinecolor}
\pgfpathellipse{\pgfpoint{24.981457\du}{7.605134\du}}{\pgfpoint{0.360000\du}{0\du}}{\pgfpoint{0\du}{0.360000\du}}
\pgfusepath{stroke}
\pgfsetlinewidth{0.050000\du}
\pgfsetdash{}{0pt}
\pgfsetdash{}{0pt}
\pgfsetbuttcap
{
\definecolor{dialinecolor}{rgb}{0.000000, 0.000000, 0.000000}
\pgfsetfillcolor{dialinecolor}
\definecolor{dialinecolor}{rgb}{0.000000, 0.000000, 0.000000}
\pgfsetstrokecolor{dialinecolor}
\draw (21.145643\du,10.383379\du)--(24.669570\du,7.831032\du);
}
\pgfsetlinewidth{0.050000\du}
\pgfsetdash{}{0pt}
\pgfsetdash{}{0pt}
\pgfsetbuttcap
{
\definecolor{dialinecolor}{rgb}{0.000000, 0.000000, 0.000000}
\pgfsetfillcolor{dialinecolor}
\definecolor{dialinecolor}{rgb}{0.000000, 0.000000, 0.000000}
\pgfsetstrokecolor{dialinecolor}
\draw (24.863164\du,7.976879\du)--(24.024838\du,10.611385\du);
}
\pgfsetlinewidth{0.050000\du}
\pgfsetdash{}{0pt}
\pgfsetdash{}{0pt}
\pgfsetbuttcap
{
\definecolor{dialinecolor}{rgb}{0.000000, 0.000000, 0.000000}
\pgfsetfillcolor{dialinecolor}
\definecolor{dialinecolor}{rgb}{0.000000, 0.000000, 0.000000}
\pgfsetstrokecolor{dialinecolor}
\draw (25.167686\du,7.946914\du)--(26.432125\du,10.267496\du);
}
\pgfsetlinewidth{0.070000\du}
\pgfsetdash{}{0pt}
\pgfsetdash{}{0pt}
\pgfsetbuttcap
\pgfsetmiterjoin
\pgfsetlinewidth{0.070000\du}
\pgfsetbuttcap
\pgfsetmiterjoin
\pgfsetdash{}{0pt}
\definecolor{dialinecolor}{rgb}{1.000000, 1.000000, 1.000000}
\pgfsetfillcolor{dialinecolor}
\fill (20.827220\du,10.611385\du)--(22.250000\du,14.853718\du)--(19.404441\du,14.853718\du)--cycle;
\definecolor{dialinecolor}{rgb}{0.000000, 0.000000, 0.000000}
\pgfsetstrokecolor{dialinecolor}
\draw (20.827220\du,10.611385\du)--(22.250000\du,14.853718\du)--(19.404441\du,14.853718\du)--cycle;
\definecolor{dialinecolor}{rgb}{0.000000, 0.000000, 0.000000}
\pgfsetstrokecolor{dialinecolor}
\node[anchor=west] at (20.061882\du,14.433628\du){$B_k$};
\pgfsetlinewidth{0.050000\du}
\pgfsetdash{}{0pt}
\pgfsetdash{}{0pt}
\pgfsetbuttcap
{
\definecolor{dialinecolor}{rgb}{0.000000, 0.000000, 0.000000}
\pgfsetfillcolor{dialinecolor}
\definecolor{dialinecolor}{rgb}{0.000000, 0.000000, 0.000000}
\pgfsetstrokecolor{dialinecolor}
\draw (25.317638\du,7.803345\du)--(29.740540\du,10.411066\du);
}
\pgfsetlinewidth{0.070000\du}
\pgfsetdash{}{0pt}
\pgfsetdash{}{0pt}
\pgfsetbuttcap
\pgfsetmiterjoin
\pgfsetlinewidth{0.070000\du}
\pgfsetbuttcap
\pgfsetmiterjoin
\pgfsetdash{}{0pt}
\definecolor{dialinecolor}{rgb}{1.000000, 1.000000, 1.000000}
\pgfsetfillcolor{dialinecolor}
\fill (24.024838\du,10.611385\du)--(25.163062\du,14.005252\du)--(22.886614\du,14.005252\du)--cycle;
\definecolor{dialinecolor}{rgb}{0.000000, 0.000000, 0.000000}
\pgfsetstrokecolor{dialinecolor}
\draw (24.024838\du,10.611385\du)--(25.163062\du,14.005252\du)--(22.886614\du,14.005252\du)--cycle;
\pgfsetlinewidth{0.075600\du}
\pgfsetdash{}{0pt}
\pgfsetdash{}{0pt}
\pgfsetbuttcap
\pgfsetmiterjoin
\pgfsetlinewidth{0.075600\du}
\pgfsetbuttcap
\pgfsetmiterjoin
\pgfsetdash{}{0pt}
\definecolor{dialinecolor}{rgb}{1.000000, 1.000000, 1.000000}
\pgfsetfillcolor{dialinecolor}
\fill (26.620247\du,10.602185\du)--(27.603672\du,13.534486\du)--(25.636821\du,13.534486\du)--cycle;
\definecolor{dialinecolor}{rgb}{0.000000, 0.000000, 0.000000}
\pgfsetstrokecolor{dialinecolor}
\draw (26.620247\du,10.602185\du)--(27.603672\du,13.534486\du)--(25.636821\du,13.534486\du)--cycle;
\pgfsetlinewidth{0.070000\du}
\pgfsetdash{}{0pt}
\pgfsetdash{}{0pt}
\pgfsetbuttcap
\pgfsetmiterjoin
\pgfsetlinewidth{0.070000\du}
\pgfsetbuttcap
\pgfsetmiterjoin
\pgfsetdash{}{0pt}
\definecolor{dialinecolor}{rgb}{1.000000, 1.000000, 1.000000}
\pgfsetfillcolor{dialinecolor}
\fill (30.072908\du,10.611385\du)--(30.801371\du,12.783459\du)--(29.344445\du,12.783459\du)--cycle;
\definecolor{dialinecolor}{rgb}{0.000000, 0.000000, 0.000000}
\pgfsetstrokecolor{dialinecolor}
\draw (30.072908\du,10.611385\du)--(30.801371\du,12.783459\du)--(29.344445\du,12.783459\du)--cycle;
\definecolor{dialinecolor}{rgb}{0.000000, 0.000000, 0.000000}
\pgfsetstrokecolor{dialinecolor}
\node[anchor=west] at (22.938469\du,13.589307\du){$B_{k-1}$};
\pgfsetlinewidth{0.060000\du}
\pgfsetdash{}{0pt}
\pgfsetdash{}{0pt}
\pgfsetbuttcap
\pgfsetmiterjoin
\pgfsetlinewidth{0.060000\du}
\pgfsetbuttcap
\pgfsetmiterjoin
\pgfsetdash{}{0pt}
\definecolor{dialinecolor}{rgb}{0.529412, 0.588235, 0.772549}
\pgfsetfillcolor{dialinecolor}
\pgfpathellipse{\pgfpoint{24.016126\du}{10.609276\du}}{\pgfpoint{0.360000\du}{0\du}}{\pgfpoint{0\du}{0.360000\du}}
\pgfusepath{fill}
\definecolor{dialinecolor}{rgb}{0.000000, 0.000000, 0.000000}
\pgfsetstrokecolor{dialinecolor}
\pgfpathellipse{\pgfpoint{24.016126\du}{10.609276\du}}{\pgfpoint{0.360000\du}{0\du}}{\pgfpoint{0\du}{0.360000\du}}
\pgfusepath{stroke}
\pgfsetbuttcap
\pgfsetmiterjoin
\pgfsetdash{}{0pt}
\definecolor{dialinecolor}{rgb}{0.000000, 0.000000, 0.000000}
\pgfsetstrokecolor{dialinecolor}
\pgfpathellipse{\pgfpoint{24.016126\du}{10.609276\du}}{\pgfpoint{0.360000\du}{0\du}}{\pgfpoint{0\du}{0.360000\du}}
\pgfusepath{stroke}
\definecolor{dialinecolor}{rgb}{0.000000, 0.000000, 0.000000}
\pgfsetstrokecolor{dialinecolor}
\node[anchor=west] at (29.325916\du,12.344314\du){$B_0$};
\pgfsetlinewidth{0.060000\du}
\pgfsetdash{}{0pt}
\pgfsetdash{}{0pt}
\pgfsetbuttcap
\pgfsetmiterjoin
\pgfsetlinewidth{0.060000\du}
\pgfsetbuttcap
\pgfsetmiterjoin
\pgfsetdash{}{0pt}
\definecolor{dialinecolor}{rgb}{0.529412, 0.588235, 0.772549}
\pgfsetfillcolor{dialinecolor}
\pgfpathellipse{\pgfpoint{26.618354\du}{10.609276\du}}{\pgfpoint{0.360000\du}{0\du}}{\pgfpoint{0\du}{0.360000\du}}
\pgfusepath{fill}
\definecolor{dialinecolor}{rgb}{0.000000, 0.000000, 0.000000}
\pgfsetstrokecolor{dialinecolor}
\pgfpathellipse{\pgfpoint{26.618354\du}{10.609276\du}}{\pgfpoint{0.360000\du}{0\du}}{\pgfpoint{0\du}{0.360000\du}}
\pgfusepath{stroke}
\pgfsetbuttcap
\pgfsetmiterjoin
\pgfsetdash{}{0pt}
\definecolor{dialinecolor}{rgb}{0.000000, 0.000000, 0.000000}
\pgfsetstrokecolor{dialinecolor}
\pgfpathellipse{\pgfpoint{26.618354\du}{10.609276\du}}{\pgfpoint{0.360000\du}{0\du}}{\pgfpoint{0\du}{0.360000\du}}
\pgfusepath{stroke}
\definecolor{dialinecolor}{rgb}{0.000000, 0.000000, 0.000000}
\pgfsetstrokecolor{dialinecolor}
\node[anchor=west] at (25.539866\du,13.127369\du){$B_{k-2}$};
\pgfsetlinewidth{0.060000\du}
\pgfsetdash{}{0pt}
\pgfsetdash{}{0pt}
\pgfsetbuttcap
\pgfsetmiterjoin
\pgfsetlinewidth{0.060000\du}
\pgfsetbuttcap
\pgfsetmiterjoin
\pgfsetdash{}{0pt}
\definecolor{dialinecolor}{rgb}{0.529412, 0.588235, 0.772549}
\pgfsetfillcolor{dialinecolor}
\pgfpathellipse{\pgfpoint{20.833756\du}{10.609276\du}}{\pgfpoint{0.360000\du}{0\du}}{\pgfpoint{0\du}{0.360000\du}}
\pgfusepath{fill}
\definecolor{dialinecolor}{rgb}{0.000000, 0.000000, 0.000000}
\pgfsetstrokecolor{dialinecolor}
\pgfpathellipse{\pgfpoint{20.833756\du}{10.609276\du}}{\pgfpoint{0.360000\du}{0\du}}{\pgfpoint{0\du}{0.360000\du}}
\pgfusepath{stroke}
\pgfsetbuttcap
\pgfsetmiterjoin
\pgfsetdash{}{0pt}
\definecolor{dialinecolor}{rgb}{0.000000, 0.000000, 0.000000}
\pgfsetstrokecolor{dialinecolor}
\pgfpathellipse{\pgfpoint{20.833756\du}{10.609276\du}}{\pgfpoint{0.360000\du}{0\du}}{\pgfpoint{0\du}{0.360000\du}}
\pgfusepath{stroke}
\pgfsetlinewidth{0.060000\du}
\pgfsetdash{}{0pt}
\pgfsetdash{}{0pt}
\pgfsetbuttcap
\pgfsetmiterjoin
\pgfsetlinewidth{0.060000\du}
\pgfsetbuttcap
\pgfsetmiterjoin
\pgfsetdash{}{0pt}
\definecolor{dialinecolor}{rgb}{0.529412, 0.588235, 0.772549}
\pgfsetfillcolor{dialinecolor}
\pgfpathellipse{\pgfpoint{30.076721\du}{10.609276\du}}{\pgfpoint{0.360000\du}{0\du}}{\pgfpoint{0\du}{0.360000\du}}
\pgfusepath{fill}
\definecolor{dialinecolor}{rgb}{0.000000, 0.000000, 0.000000}
\pgfsetstrokecolor{dialinecolor}
\pgfpathellipse{\pgfpoint{30.076721\du}{10.609276\du}}{\pgfpoint{0.360000\du}{0\du}}{\pgfpoint{0\du}{0.360000\du}}
\pgfusepath{stroke}
\pgfsetbuttcap
\pgfsetmiterjoin
\pgfsetdash{}{0pt}
\definecolor{dialinecolor}{rgb}{0.000000, 0.000000, 0.000000}
\pgfsetstrokecolor{dialinecolor}
\pgfpathellipse{\pgfpoint{30.076721\du}{10.609276\du}}{\pgfpoint{0.360000\du}{0\du}}{\pgfpoint{0\du}{0.360000\du}}
\pgfusepath{stroke}
\pgfsetlinewidth{0.008000\du}
\pgfsetdash{}{0pt}
\pgfsetdash{}{0pt}
\pgfsetbuttcap
\pgfsetmiterjoin
\pgfsetlinewidth{0.008000\du}
\pgfsetbuttcap
\pgfsetmiterjoin
\pgfsetdash{}{0pt}
\definecolor{dialinecolor}{rgb}{0.000000, 0.000000, 0.000000}
\pgfsetfillcolor{dialinecolor}
\pgfpathellipse{\pgfpoint{28.999404\du}{10.612546\du}}{\pgfpoint{0.089013\du}{0\du}}{\pgfpoint{0\du}{0.089013\du}}
\pgfusepath{fill}
\definecolor{dialinecolor}{rgb}{0.000000, 0.000000, 0.000000}
\pgfsetstrokecolor{dialinecolor}
\pgfpathellipse{\pgfpoint{28.999404\du}{10.612546\du}}{\pgfpoint{0.089013\du}{0\du}}{\pgfpoint{0\du}{0.089013\du}}
\pgfusepath{stroke}
\pgfsetbuttcap
\pgfsetmiterjoin
\pgfsetdash{}{0pt}
\definecolor{dialinecolor}{rgb}{0.000000, 0.000000, 0.000000}
\pgfsetstrokecolor{dialinecolor}
\pgfpathellipse{\pgfpoint{28.999404\du}{10.612546\du}}{\pgfpoint{0.089013\du}{0\du}}{\pgfpoint{0\du}{0.089013\du}}
\pgfusepath{stroke}
\pgfsetlinewidth{0.008000\du}
\pgfsetdash{}{0pt}
\pgfsetdash{}{0pt}
\pgfsetbuttcap
\pgfsetmiterjoin
\pgfsetlinewidth{0.008000\du}
\pgfsetbuttcap
\pgfsetmiterjoin
\pgfsetdash{}{0pt}
\definecolor{dialinecolor}{rgb}{0.000000, 0.000000, 0.000000}
\pgfsetfillcolor{dialinecolor}
\pgfpathellipse{\pgfpoint{27.775793\du}{10.609697\du}}{\pgfpoint{0.086164\du}{0\du}}{\pgfpoint{0\du}{0.086164\du}}
\pgfusepath{fill}
\definecolor{dialinecolor}{rgb}{0.000000, 0.000000, 0.000000}
\pgfsetstrokecolor{dialinecolor}
\pgfpathellipse{\pgfpoint{27.775793\du}{10.609697\du}}{\pgfpoint{0.086164\du}{0\du}}{\pgfpoint{0\du}{0.086164\du}}
\pgfusepath{stroke}
\pgfsetbuttcap
\pgfsetmiterjoin
\pgfsetdash{}{0pt}
\definecolor{dialinecolor}{rgb}{0.000000, 0.000000, 0.000000}
\pgfsetstrokecolor{dialinecolor}
\pgfpathellipse{\pgfpoint{27.775793\du}{10.609697\du}}{\pgfpoint{0.086164\du}{0\du}}{\pgfpoint{0\du}{0.086164\du}}
\pgfusepath{stroke}
\pgfsetlinewidth{0.008000\du}
\pgfsetdash{}{0pt}
\pgfsetdash{}{0pt}
\pgfsetbuttcap
\pgfsetmiterjoin
\pgfsetlinewidth{0.008000\du}
\pgfsetbuttcap
\pgfsetmiterjoin
\pgfsetdash{}{0pt}
\definecolor{dialinecolor}{rgb}{0.000000, 0.000000, 0.000000}
\pgfsetfillcolor{dialinecolor}
\pgfpathellipse{\pgfpoint{28.373847\du}{10.608371\du}}{\pgfpoint{0.084838\du}{0\du}}{\pgfpoint{0\du}{0.084838\du}}
\pgfusepath{fill}
\definecolor{dialinecolor}{rgb}{0.000000, 0.000000, 0.000000}
\pgfsetstrokecolor{dialinecolor}
\pgfpathellipse{\pgfpoint{28.373847\du}{10.608371\du}}{\pgfpoint{0.084838\du}{0\du}}{\pgfpoint{0\du}{0.084838\du}}
\pgfusepath{stroke}
\pgfsetbuttcap
\pgfsetmiterjoin
\pgfsetdash{}{0pt}
\definecolor{dialinecolor}{rgb}{0.000000, 0.000000, 0.000000}
\pgfsetstrokecolor{dialinecolor}
\pgfpathellipse{\pgfpoint{28.373847\du}{10.608371\du}}{\pgfpoint{0.084838\du}{0\du}}{\pgfpoint{0\du}{0.084838\du}}
\pgfusepath{stroke}
\definecolor{dialinecolor}{rgb}{0.000000, 0.000000, 0.000000}
\pgfsetstrokecolor{dialinecolor}
\node[anchor=west] at (26.180333\du,14.772055\du){$B_k$};
\end{tikzpicture}
\end{minipage}
\caption{An example of a binomial heap of size $2^{k+1}$.}
\end{figure}

Let us call the entire structure a \textit{priority queue}/\textit{Binomial
heap} and a particular tree of the size of $2^k$ for some $k$ a \textit{heap}.

Similarly to the \textit{Binary Heapsort}, the sorting procedure consists of two
phases: building the \textit{priority queue} and popping the maximum element out
of the \textit{priority queue} until it is empty.

At the heart of the algorithm is the merge procedure, which takes two \textit{
Root list}s and merges them into another \textit{Root list}. The procedure 
is very similar to binary addition of numbers. 

An auxiliary $Add$ procedure to deal with certain special cases of merging
described below:
\algrenewcommand\Return{\State \algorithmicreturn{} }%

\begin{algorithmic}[1]
\Function{Add}{$L$,$node$,$carry$}
\Comment{Merges the heaps $node$ and the $carry$ in case both are of the same
size. When both are of different size, adds them to the list $L$}
\If{$carry = null$}
  \State{$L.add(node)$}
  \Return{$0$}
\Else
  \If{$carry.size = node.size$}
    \Return{$Merge(node, carry)$}
  \Else
    \State{$L.add(node)$}
    \State{$L.add(carry)$}
    \Return $null$
  \EndIf
\EndIf
\EndFunction
\end{algorithmic}

A $Merge$ procedure to merge two heaps of the same size:

\begin{algorithmic}[1]
\Function{Merge}{$L$,$node1$,$node2$}
\Comment{Merges the heaps $node1$ and $node2$ into a heap of twice the size}
\If{$node1.value > node2.value$}
  \State{$node1.children.add(node2)$}
  \Return{$node1$}
\Else
  \State{$node2.children.add(node1)$}
  \Return{$node2$}
\EndIf
\EndFunction
\end{algorithmic}

Finally another $Merge$ procedure for merging two \textit{Root lists} lists
together:

\begin{algorithmic}[1]
\Function{Merge}{$L_1$,$L_2$}
\Comment{Merges the root lists $L_1$ and $L_2$ to produce the list $L_3$
size}
\label{alg:merge}
\State{$L_3 \gets \{\}$}
\State{$carry \gets null$}
\State{$t_1 \gets L_1[0], \: t_2 \gets L_2[0]$}
\While{$t_1 \neq null \: {\bf and} \: t_2 \neq null$}
  \If{$t_1.size = t_2.size$}
     \If{$carry \neq null$}
       \State{$L_3.add(carry)$}
     \EndIf
     \State{$carry \gets Merge(t_1, t_2)$}
     \State{$t_1 \gets next(t_1), \: t_2 \gets next(t_2)$}
  \Else
     \If{$carry \neq null$}
       \If{$t_1.size < t_2.size$}
         \State{$carry \gets Add(L_3, carry, t_1)$}
         \State{$t_1 \gets next(t_1)$}
       \Else
         \State{$carry \gets Add(L_3, carry, t_2)$}
         \State{$t_2 \gets next(t_2)$}
       \EndIf
     \EndIf
  \EndIf
\EndWhile

\While{$t_1 \neq null$}
  \State{$carry \gets Add(L_3, carry, t_1)$}
  \State{$t_1 \gets next(t_1)$}
\EndWhile

\While{$t_2 \neq null$}
  \State{$carry \gets Add(L_3, carry, t_2)$}
  \State{$t_2 \gets next(t_2)$}
\EndWhile
\Return{$L_3$}
\EndFunction
\end{algorithmic}

When inserting an element $x$ to the \textit{priority queue}, we create a heap
of size $1$ containing just the element $x$ and merge it with the \textit{Root
list}. 

To remove the maximal element from the \textit{priority queue}, we firstly
need to find it in the \textit{Root list}. Afterward, we merge every child of
the maximal element back to the \textit{Root list}. The details of all the
procedures can be found in ~\cite{Vuillemin:1978:DSM:359460.359478}.

\section{Randomness preservation}

The first thing we would like to show is that the \textit{Binomial Heapsort}
preserves randomness of the \textit{priority queue} during the execution. Just
as was done for the \textit{Heap sort}, suppose that the input array is shuffled
and that the distribution of permutations of the input elements is uniformly
random.

Let us investigate the probability of a particular \textit{priority queue}
occurring.

\begin{myle}
Let $n/2 < 2^k \leq n$, $B_n = {n \choose 2^k} B_{2^k}B_{n-2^k}$ for $n > 1$
and $B_1 = 1$.
Then the number of possible \textit{priority queue}s on $n$ elements is $B_n$.
\end{myle}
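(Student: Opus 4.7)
The plan is to prove the statement by induction on $n$, using the structural decomposition of a priority queue into its largest binomial tree plus the residual priority queue on the remaining elements.

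The base case $n=1$ is immediate: there is a unique priority queue on a single element, namely the tree $B_0$, so $B_1 = 1$ as required. For the inductive step I would first note that a binomial priority queue on $n$ elements contains exactly one tree of size $2^i$ for each bit $i$ set in the binary expansion of $n$; in particular there is a unique largest tree, and its size is the unique $2^k$ satisfying $n/2 < 2^k \leq n$. Removing this largest tree exhibits a bijection between priority queues on $n$ elements and triples consisting of: (a) a choice of which $2^k$ elements form the largest tree, giving $\binom{n}{2^k}$ options; (b) a valid $B_k$ configuration on those $2^k$ elements---since a priority queue on $2^k$ elements is by definition a single $B_k$ tree, this is counted by $B_{2^k}$; and (c) an arbitrary priority queue configuration on the remaining $n - 2^k$ elements, counted by $B_{n-2^k}$. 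Multiplying the three factors gives the stated recurrence, and when $n$ is not a power of two the induction hypothesis applies to both $2^k < n$ and $n - 2^k < n$, closing the step.

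The main obstacle is the case when $n$ is itself a power of two, for then $2^k = n$ and the recurrence degenerates to $B_n = \binom{n}{n} B_n B_0$, which is vacuous. To handle this I would exploit the recursive self-similarity of $B_k$: a $B_k$ tree equals the linking of two $B_{k-1}$ trees, with the one containing the overall maximum becoming the parent and the other attaching as the largest child. Fixing the maximum element as the root (forced), one chooses which $2^{k-1}$ of the remaining $2^k - 1$ elements populate the linked $B_{k-1}$ subtree in $\binom{2^k - 1}{2^{k-1}}$ ways, and then independently arranges each of the two $B_{k-1}$ pieces in $B_{2^{k-1}}$ ways apiece, yielding an auxiliary recurrence $B_{2^k} = \binom{2^k - 1}{2^{k-1}} (B_{2^{k-1}})^2$ that, together with $B_1 = 1$, determines $B_{2^k}$ for all $k$ and feeds back into the main recurrence. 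Verifying that this auxiliary count is consistent with the bijective interpretation---in particular that the two halves of the linking are correctly distinguished by the location of the global maximum, so there is no spurious factor of two---is the one point that deserves some care.
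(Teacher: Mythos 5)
Your proof is correct, and it is both more explicit and more careful than the one in the paper. The paper disposes of this lemma in two sentences by appealing to the uniform randomness of the constructed \textit{priority queue} --- a property that is only established later in the same section --- whereas you give a self-contained bijective count by structural induction: the largest tree has size $2^k$ with $n/2 < 2^k \leq n$, its element set can be chosen in $\binom{n}{2^k}$ ways, and the two pieces are then configured independently. The underlying decomposition is the same, but your version needs no probabilistic input and avoids any circularity. More importantly, you are right that the stated recurrence is vacuous when $n$ is a power of two (it degenerates to $B_n = \binom{n}{n}B_nB_0$), and your auxiliary recurrence $B_{2^k} = \binom{2^k-1}{2^{k-1}}\,B_{2^{k-1}}^2$ is the correct repair, with the correct binomial coefficient: the two linked copies of $B_{k-1}$ occupy distinguishable positions (one contains the global maximum and supplies the root, the other is attached as the largest child), so there is no spurious factor of two to divide out, while the maximum being forced into the root's half is exactly why the top index is $2^k-1$ rather than $2^k$. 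The paper never addresses this case; in fact, in the proof of the very next lemma (that $B_n/B_{n-1}=n/2$ for even $n$) it silently uses $B_{2^m} = \binom{2^m}{2^{m-1}}B_{2^{m-1}}^2$, which is twice your (correct) count and is inconsistent with the paper's own assertion that $B_2=1$. So the point you flag as ``deserving some care'' is precisely where the paper is shakiest, and your treatment of it is the right one.
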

\begin{proof}
The statement of the lemma follows from the fact that the resulting \textit{priority queue} is
uniformly random, that is any subset of elements is equally likely to appear in 
a heap that has its root in the \textit{Root list}. In particular, the largest
heap containing $2^k$ elements is also uniformly random.
\end{proof}

\begin{myde}
A uniformly random \textit{Binomial heap} of size $n$ is such that each
configuration of the elements obeying the heap order is equally likely, that is,
has probability to occur of $1/B_n$.
\end{myde}

The problem of randomness preservation of classical \textit{Heapsort} was
indicated by the fact that the number $H_{n}$ (the number of possible
\textit{Binary heap}s on $n$ elements) was not divisible by $H_{n-1}$ . Hence
after a deletion of an element from a uniformly random heap of $n$ elements, the
heap could not possibly stay uniformly random.

However with \textit{Binomial heap}s there is a hope to show randomness
preservation:

\begin{myle}
Let $B_n$ be defined as above, then it holds that $B_n/B_{n-1} = n$ for
odd $n$ and $B_n/B_{n-1} = n/2$ for even $n$.
\end{myle}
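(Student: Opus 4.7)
The plan is to prove the ratio identity by strong induction on $n$, using the given recurrence to relate $B_n$ and $B_{n-1}$ and then reducing to the same kind of ratio on smaller values. The base $n=2$ is immediate: $B_1=1$, there is a single valid priority queue on two elements, so $B_2/B_1 = 1 = 2/2$.

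For the inductive step I first handle all $n$ that are not powers of two, whether odd or even. In that range the largest power of two $2^k$ with $n/2 < 2^k \leq n$ also satisfies $2^k \leq n-1$, so the recurrence applies with the same $2^k$ to both $B_n$ and $B_{n-1}$. The factor $B_{2^k}$ cancels and the binomial coefficients collapse via $\binom{n}{2^k}/\binom{n-1}{2^k} = n/(n-2^k)$, giving
\[
\frac{B_n}{B_{n-1}} \;=\; \frac{n}{n-2^k}\cdot \frac{B_{n-2^k}}{B_{n-1-2^k}}.
\]
Since $2^k$ is even, $n-2^k$ has the same parity as $n$ and is strictly smaller. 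The induction hypothesis then yields $n-2^k$ if $n$ is odd and $(n-2^k)/2$ if $n$ is even, and multiplying by the outer factor $n/(n-2^k)$ produces the claimed value $n$ or $n/2$. The only edge case is $n-2^k=1$ (i.e.\ $n=2^k+1$), where the inner ratio bottoms out at $B_1/B_0 = 1$ under the convention $B_0=1$; this is a direct check.

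The remaining case, $n=2^k$, is where the main difficulty lies: the stated recurrence degenerates to $B_n = \binom{n}{n}B_n B_0$, which is tautological and does not pin down $B_n$. To close this gap I would invoke the structural description of the binomial tree $B_k$ as the join of two $B_{k-1}$'s, yielding the identity $B_{2^k} = \binom{2^k}{2^{k-1}} B_{2^{k-1}}^2$. For $B_{n-1}$ with $n-1 = 2^k-1$ the ordinary recurrence applies with $2^{k-1}$ as the largest power of two at most $n-1$, giving $B_{n-1} = \binom{n-1}{n/2} B_{n/2} B_{n/2-1}$. Dividing,
\[
\frac{B_n}{B_{n-1}} \;=\; \frac{\binom{n}{n/2}}{\binom{n-1}{n/2}} \cdot \frac{B_{n/2}}{B_{n/2-1}} \;=\; 2 \cdot \frac{B_{n/2}}{B_{n/2-1}},
\]
and the induction hypothesis applied to $n/2$ (which is even for $k\geq 2$) converts this into $2 \cdot (n/2)/2 = n/2$; for $k=1$ the base case applies.

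The chief delicacy is precisely this power-of-two case, since the stated recurrence alone does not determine $B_{2^k}$. One must either supply the structural ``split in half'' identity from the binomial tree (the route above) or, equivalently, first establish the closed form $B_n = n!/2^{\lfloor n/2 \rfloor}$ by induction and then read off the ratio directly; I would favour the former since it keeps the argument tied to the combinatorics of the heap that motivated the recurrence in the first place.
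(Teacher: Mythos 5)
Your proof is correct and takes essentially the same route as the paper's: induction on $n$ with the case split at powers of two, cancelling $B_{2^k}$ and using $\binom{n}{2^k}/\binom{n-1}{2^k}=n/(n-2^k)$ in the generic case, and the structural identity $B_{2^k}=\binom{2^k}{2^{k-1}}B_{2^{k-1}}^2$ in the power-of-two case. You are in fact more thorough than the paper, which works out only the power-of-two case explicitly and does not acknowledge that the stated recurrence degenerates there.
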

\begin{proof}
We can show this fact by induction. There are $2$ cases to consider: 
$n=2^{m}$ and $n \neq 2^m$ for any $m$. In the first case $B_n/B_{n-1}$ = $n
\choose n/2$ $B_{n/2}B_{n/2}/$ $n-1 \choose n/2$ $B_{n-1-n/2}B_{n/2}$ =
$n(n-1-n/2)B_{n/2}/(n/2)B_{n-1-n/2} = n/2$ by induction hypothesis. With a
similar reasoning one can demonstrate that the second case holds. The base case
of induction holds clearly as $B_2 = 1$, $B_1 = 1$.
\end{proof}

Using two lemmas above, show that after the \textit{BuildHeap} procedure we
obtain a uniformly random \textit{heap}. Suppose there are two uniformly random
\textit{heap}s $T_1, T_2$ of the same size. Fix the set of the elements
contained in $T_1 \cap T_2$ to be $S$.

\begin{myle}
After merging $T_1$ and $T_2$, the resulting \textit{heap} $T$ is uniformly random on the 
set $S$.
\end{myle}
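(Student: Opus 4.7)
The plan is to identify a bijection between pairs $(T_1,T_2)$ of input configurations and the output tree $T$, conditional on the element partition, and then lift uniformity from the inputs to the output. Denote by $A_1,A_2$ the (disjoint) element sets of $T_1,T_2$, each of size $2^k$, with union $S$, and let $m=\max(S)$, which without loss of generality lies in $A_1$. Because a binomial tree is a max-heap, $m$ is the root of $T_1$ while the root of $T_2$ is the strictly smaller element $\max(A_2)$, so the merge procedure inserts the entire $T_2$ as the new largest child of $m$. The resulting $T$ therefore has root $m$ with $k+1$ children whose subtree sizes are $2^0,\dots,2^{k-1},2^k$, exactly the shape of a $B_{k+1}$.

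I would then observe that the assignment $(T_1,T_2)\mapsto T$ is a bijection from pairs of $B_k$-configurations on $(A_1,A_2)$ onto the set $\mathcal{T}(A_1,A_2)$ of $B_{k+1}$-configurations on $S$ in which the largest child-subtree of the root carries precisely the elements of $A_2$; the inverse map simply detaches that largest child, recovering $T_2$ directly and $T_1$ as the remainder. Since $T_1$ and $T_2$ are independent and uniform on their respective $B_{2^k}$-sized configuration spaces, the bijection forces $T$ to be uniform over $\mathcal{T}(A_1,A_2)$.

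To upgrade this to uniformity over every $B_{k+1}$-configuration on $S$, note that each such configuration lies in exactly one $\mathcal{T}(A_1,A_2)$, determined by reading off which elements occupy the root's largest child-subtree. Combining this partition of the sample space with the assumption --- propagated inductively from the initial uniform shuffle --- that the split $(A_1,A_2)$ of $S$ is itself uniform among admissible partitions (those placing $m$ in $A_1$), one obtains that every $B_{k+1}$-configuration on $S$ occurs with the same probability, and hence each with probability $1/B_{|S|}$.

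The main obstacle I anticipate is purely bookkeeping. The excerpt writes ``$T_1\cap T_2$'' where the union is clearly intended, and the stated recursion $B_n=\binom{n}{2^k}B_{2^k}B_{n-2^k}$ formally degenerates when $n$ is a power of two. Pinning down that $B_{2^{k+1}}=\binom{2^{k+1}-1}{2^k}B_{2^k}^2$ in this boundary case --- equivalently, checking that summing $|\mathcal{T}(A_1,A_2)|=B_{2^k}^2$ over the $\binom{2^{k+1}-1}{2^k}$ admissible partitions yields the count one would want from the recurrence applied at the next level --- is the only place where routine care is needed; the substantive content is the bijection, and uniformity is transported through it automatically.
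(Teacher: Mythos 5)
Your proposal is correct and follows essentially the same route as the paper: the paper's one-line argument that ``the probability of a particular $\textit{join}(T_1,T_2)$ equals the probability of the particular $2$-tuple $(T_1,T_2)$'' is precisely your bijection (merge versus detaching the root's largest child-subtree), combined with uniformity of the element split. Your write-up is simply more explicit, and you correctly flag the two notational slips in the source (the $\cap$ that should be $\cup$, and the degeneracy of the recurrence $B_n=\binom{n}{2^k}B_{2^k}B_{n-2^k}$ at $n=2^{k+1}$, where the correct count is $\binom{2^{k+1}-1}{2^k}B_{2^k}^2$).
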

\begin{proof}
Probability of a particular \textit{heap} $\textit{join}(T_1, T_2)$ is ${ S \choose
|T_1|}{S-|T_1| \choose T_2}B_{|T_1|}{B_{|T_2|}}$ by construction, which is the
same as probability of a particular $2$-tuple $(T_1, T_2)$. As the set $S$ was chosen
arbitrarily, the argument works for any set $S$ of an appropriate size.
\end{proof}

\begin{myle}
After the \textit{BuildHeap} procedure, the \textit{Binomial Heap} is uniformly
random, in a sense that, any \textit{Binomial heap} configuration of the input elements is
equally likely.
\end{myle}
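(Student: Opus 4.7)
The plan is to reduce the claim to the preceding merge lemma via a structural decomposition of the \textit{BuildHeap} procedure. Write $n = 2^{m_1} + 2^{m_2} + \cdots + 2^{m_s}$ with $m_1 > m_2 > \cdots > m_s$ being the binary decomposition of $n$.

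First, I would establish a deterministic structural fact about \textit{BuildHeap}: on any input sequence $x_1, \ldots, x_n$, the resulting \textit{Root list} contains one binomial tree $B_{m_i}$ for each $i$, and this tree is built entirely from the contiguous block of $2^{m_i}$ inputs immediately following the previous block. The proof is an easy induction on the insertion index that traces the cascading \textit{Merge}: once the first $2^{m_1}$ insertions have been processed, the last of them triggers a cascade that fuses the entire first block into a single $B_{m_1}$ in slot $m_1$, and all subsequent insertions only affect slots of rank strictly less than $m_1$. The argument then repeats on the residue of size $n - 2^{m_1}$.

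Second, I would prove the power-of-two special case by induction on $k$: \textit{BuildHeap} applied to any uniform random permutation of $2^k$ elements produces a uniformly random $B_k$. The base $k=0$ is immediate. For the inductive step, the structural fact factors \textit{BuildHeap} on $2^k$ inputs as \textit{BuildHeap} on the first $2^{k-1}$ inputs, followed by \textit{BuildHeap} on the next $2^{k-1}$ inputs (which, since the prior chunks no longer interact with the lower slots, is structurally equivalent to a fresh run), followed by a single merge of the two resulting $B_{k-1}$ trees. A uniform random permutation of $2^k$ elements decomposes as a uniformly random split into first and second halves together with conditionally independent uniform permutations within each half, so by the inductive hypothesis the two halves produce independent uniform $B_{k-1}$s on their element sets. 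The merge lemma then upgrades their join to a uniform $B_k$ on the combined $2^k$-element set.

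Third, I would assemble the general case. Combining the structural fact with the power-of-two case, \textit{BuildHeap} on a uniform random permutation of $n$ elements is equivalent to: partition the $n$ elements uniformly at random into ordered chunks of sizes $2^{m_1}, \ldots, 2^{m_s}$, then independently place a uniformly random $B_{m_i}$ on the $i$-th chunk. Since a valid priority queue configuration on $n$ elements is specified by exactly this data, each such configuration is produced with the same probability $1/B_n$. The hardest step is the first one: carefully verifying that the cascade behaviour of \textit{Merge} respects the chunk boundaries for every $n$, so that \textit{BuildHeap} on each chunk can be analyzed in isolation. Once this structural decomposition is in hand, the remaining steps are routine combinations of induction, the merge lemma, and the elementary decomposition of uniform random permutations into uniform splits and uniform sub-permutations.
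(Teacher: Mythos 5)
Your proof is correct, but it takes a genuinely different route from the paper's. The paper proves the lemma by induction on the number of inserted elements: it maintains the invariant that the partial \textit{priority queue} $T_p$ after $p$ insertions is uniformly random for every $p$, and handles the insertion of $x_{p+1}$ by following the carry cascade, invoking the merge lemma once per carry (first for two size-$1$ heaps, then for the resulting size-$2$ ``overflow'' heap, and so on). You instead first establish a deterministic structural fact --- that \textit{BuildHeap} assembles one tree $B_{m_i}$ per bit of $n$ out of a fixed contiguous block of the input --- and then run the induction over the rank $k$ of a single binomial tree, splitting a block of $2^k$ inputs into halves and applying the merge lemma exactly once per level. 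Both arguments use the same merge lemma as the engine that upgrades two independent uniform $B_{k-1}$'s on a uniformly split element set to a uniform $B_k$, and both structural claims (your block decomposition, the paper's carry chain) are correct consequences of how repeated insertion mimics binary counting. What your decomposition buys is that you never need to reason about uniformity of a partial \textit{priority queue} whose \textit{Root list} has an arbitrary binary profile: your intermediate objects are always complete binomial trees on power-of-two blocks, the general $n$ is handled in one assembly step at the end, and the final probability $1/B_n$ drops out explicitly from the multinomial factorization matching the paper's recursive definition of $B_n$. The paper's incremental induction matches the online nature of the insertion procedure more directly, but leaves the carry-chain bookkeeping (``clearly, we can extend the argument in case there is a heap of size $2, 4, \ldots$'') informal; your structural fact is essentially the statement that makes that bookkeeping rigorous.
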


One of the main lemmas of the section follows immediately:

\begin{proof}
We can show the fact inductively, adding input elements one by one. Clearly,
when we add the first element, the heap is uniformly random, as the first
element in the input array is uniformly random. Let us suppose that we have
added $p$ elements, and we are adding $p+1$st and so far any \textit{Binomial
heap} configuration on $p$ elements is equally likely. 

Let the \textit{priority queue} on $p$ elements be $T_p$, then the probability of a particular
tuple $(x_{p+1}, T_p)$ is the same as the probability $\textit{merge}(T_p,
\{x_{p+1}\})$ in case the \textit{Root list} of $T_p$ does not contain a heap of
size $1$. In case it does contain, by the lemma above, \textit{join} of two
heaps having size a power of $2$ is also uniformly random. Now we have
an ``overflow" heap of size $2$. Clearly, we can extend the argument in case
there is a heap of size $2, 4, \ldots$ as well. Hence, after merging an element
$x_{p+1}$, the resulting \textit{priority queue} is uniformly random.
\end{proof}

\begin{myco}
Every fixed subset of heaps is also uniformly random, in a sense that, every
possible distribution of elements within the heaps is equally likely. By
construction, every subheap of a heap of size $2^s$ is also uniformly random.
\end{myco}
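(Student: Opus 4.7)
The corollary has two claims that both follow quickly from the preceding lemma (the full priority queue on $n$ elements is uniformly random, each configuration occurring with probability $1/B_n$) combined with the multiplicative structure
\[
B_n \;=\; \binom{n}{2^k}\,B_{2^k}\,B_{n-2^k}
\]
and, iterating, the fact that $B_n$ factors as a product of multinomial coefficients times $B_{2^{s}}$ over the heaps $B_{s}$ that occur in the root list. My plan is to exploit this factorization to ``project'' the uniform distribution onto any subcollection of heaps or onto any subheap.

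For the first claim, fix a subset $\mathcal{S}$ of the heaps present in the root list; say their sizes are $2^{s_1},\dots,2^{s_j}$. I would start from the uniform distribution on priority queues and count: the number of priority queues compatible with a prescribed element set $E_{\mathcal{S}}$ in these heaps, with a prescribed configuration inside them, equals $1$ times the number of ways to fill the remaining heaps from $[n]\setminus E_{\mathcal{S}}$. That latter count depends only on the sizes of the remaining heaps, not on the chosen configuration inside $\mathcal{S}$. Dividing by $B_n$ and summing the irrelevant terms, every configuration of $\mathcal{S}$ on a fixed element set $E_{\mathcal{S}}$ of the right total size receives the same probability, and every $E_{\mathcal{S}}$ of that size is equally likely. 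This is exactly the desired uniformity.

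For the second claim, I would induct on $s$. The base $s=0$ is trivial. For the step, observe that a uniformly random binomial heap of size $2^{s}$ is by construction the $\mathrm{Merge}$ of two heaps of size $2^{s-1}$: the procedure chooses which $2^{s-1}$ elements form the left operand (uniformly among the $\binom{2^{s}}{2^{s-1}}$ possibilities by the preceding randomness-preservation lemma applied to $\mathrm{Merge}$), makes each operand a uniformly random heap of size $2^{s-1}$, and then joins them under whichever root is larger. So the two immediate subheaps of the root's largest child, as well as the child heaps hanging off the root, are uniformly random heaps on their (random) element sets; by the inductive hypothesis every subheap within them is too.

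The only step that needs any care is bookkeeping the factorization in the first claim — making sure that the multinomial coefficient for splitting $[n]$ among the heaps, together with the within-heap counts $B_{2^{s_i}}$, multiplies out to exactly $B_n$ so that cancellation yields a clean $1/(\text{number of $\mathcal{S}$-configurations on }E_{\mathcal{S}})$. Once that identity is in hand, both parts of the corollary are immediate.
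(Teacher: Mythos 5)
Your argument is correct and matches the paper's intent: the paper states this corollary without an explicit proof, treating it as immediate from the preceding uniformity lemma together with the product structure $B_n = \binom{n}{2^k} B_{2^k} B_{n-2^k}$ and the \textit{Merge} construction, which is precisely what you spell out. Your projection/counting argument for the first claim and the induction via the merge bijection for the second are exactly the details the paper leaves implicit.
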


So far we have shown that when the \textit{priority queue} is built from a uniformly permuted
array, its distribution is also uniformly random. What is left to show is that
after a deletion of a maximum element in the \textit{Root list}, the \textit{priority queue} stays
uniformly random.

\begin{myle}
After deletion of a maximal element in a uniformly random \textit{Binomial
heap}, the resulting \textit{priority queue} is also uniformly random.
\end{myle}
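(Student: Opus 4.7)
The plan is to show that after a \textit{PopMax} every target configuration $T'$ on the remaining $n-1$ elements appears with probability exactly $1/B_{n-1}$. Since pop is deterministic, this reduces to counting the size-$n$ predecessors, i.e.\ those $T$ with $\text{pop}(T) = T'$; since each such $T$ carries probability $1/B_n$, it suffices to verify that the number of predecessors equals $B_n/B_{n-1}$, which the previous lemma gives as $n$ for odd $n$ and $n/2$ for even $n$.

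First I would unpack the pop procedure into two stages. The max must sit at the root of some tree $T_{k_i}$ of size $2^{k_i}$ in the \textit{Root list}, and by uniformity of the heap it lies there with probability $2^{k_i}/n$. Removing that root detaches its children $C_0, C_1, \ldots, C_{k_i-1}$ of sizes $1, 2, \ldots, 2^{k_i-1}$; by the corollary on subheap uniformity each $C_j$ is itself a uniformly random binomial heap on its element set, and the remaining trees in the \textit{Root list} are untouched and similarly uniformly random on their elements. The ensuing merge combines these orphans with the surviving root list, and by iterated application of the earlier lemma that merging two uniformly random heaps of equal size yields a uniformly random heap, the resulting priority queue is uniformly random conditional on its element partition.

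To upgrade ``conditional on the partition'' to genuine uniformity I would count predecessors of a fixed $T'$ directly. Intuitively, a predecessor is obtained by choosing a structural slot in $T'$ and inserting the max there, either as a new singleton $B_0$ (when the binary representation of $n-1$ leaves an empty $B_0$ slot) or by cascading carries back up through the merge chain, so that the max becomes the root of a fresh tree whose children are drawn from trees already present in $T'$. Tracking this carry structure for both parities of $n$ shows that the number of valid predecessors is precisely $n$ or $n/2$, matching the ratio $B_n/B_{n-1}$. Multiplying this count by $1/B_n$ then yields probability $1/B_{n-1}$ for each $T'$, which is exactly the uniformity we wanted.

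The main obstacle will be the cascading-carry case, where the tree containing the max has size larger than every other tree in the root list and pop triggers a sequence of merges. I would handle it by induction on $k_i$, peeling off one merge at a time and using the merge-uniformity lemma as the inductive glue, so that the bookkeeping collapses to the identity $B_n/B_{n-1}\in\{n,n/2\}$ already established. A secondary subtlety is ensuring the ``partition of elements'' across the orphan children and the surviving root list is itself uniform among all valid partitions into slots of the given sizes; this follows from uniformity of the original heap combined with the observation that any fixed subset of slots of total size $n-1$ is equally likely to host the non-maximum elements.
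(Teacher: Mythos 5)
Your reduction is sound and shares its skeleton with the paper's argument: since \textit{PopMax} is deterministic and the input heap is uniform over the $B_n$ configurations, uniformity of the output is exactly the statement that every target configuration $T'$ on the remaining $n-1$ elements has the same number of preimages under \textit{PopMax}. Where you diverge is in how that equinumerosity is established. The paper partitions the source configurations by which root-list tree contains the maximum (the events $E^p_n$), observes that \textit{PopMax} restricted to each class is surjective onto the heaps of size $n-1$ with fibers of equal size --- justified by the remark that binomial heaps are isomorphic under relabelling --- and then sums over $p$ with the law of total probability; it never needs to know the actual fiber cardinality. You instead propose to compute the total fiber size directly by inverting the pop: deciding where the maximum sat and undoing the cascade of merges. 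Two remarks on that. First, the explicit count is more than you need: once the fiber size is known to be the same for every $T'$, double counting forces it to equal $B_n/B_{n-1}$, so the appeal to the ratio lemma and to the exact values $n$ and $n/2$ can be dropped entirely. Second, the carry-cascade inversion is precisely the hard combinatorial content and your proposal leaves it as a sketch: the inverse of a pop is not a single ``insert the max into a slot'' operation but a choice of how to split the trees of $T'$ back into orphaned children of the popped root versus survivors of the old root list, and showing that the number of such choices is independent of the shape and labelling of $T'$ is exactly the step the paper's relabelling symmetry is designed to dispatch. If you either carry that enumeration out in full or replace it with the symmetry observation, your argument closes; as written, the decisive step is asserted rather than proved.
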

\begin{proof}
The proof of this lemma is slightly more involved than the previous one, as
the position of maximal element in a uniformly random \textit{priority queue}'s
\textit{Root list} is not deterministic. Here is an outline of the proof: we will condition on the
event that the maximal element in the \textit{Root list} is in a particular
heap. Then we will show that conditioned on this event, the resulting \textit{priority queue} is
uniformly random. In the end we will use the law of total probability to compute 
the probability of a particular \textit{priority queue} occurring.

\begin{myob}
Define the set $B_n$ as the set of \textit{Binomial heaps} with $n$ elements.
Let the set $B^p_n$ be the set of all the \textit{priority queue}s of size $n$
that have a maximal element in the heap of size $2^p$. Then the map $PopMax:B^p_n \rightarrow
B_{n-1}$ is surjective in a sense that for every heap in $B_{n-1}$ there exists
a heap in $B_{n+1}^p$ such that a $PopMax$ transforms the later heap into the
former. Define the event $E^p_n$ indicating that the maximal element is in the
heap of size $2^p$ for a \textit{priority queue} of size $n$.
\end{myob}

\begin{myob}
For all the \textit{priority queue}s $B \in B_n$, the sizes of the preimages
under the map $PopMax$ are equal.
\end{myob}

The observation can be shown to be true by noticing that 
\textit{Binomial heap}s are isomorphic under relabelling. The two observations
above indicate that the probability of a particular \textit{Binomial heap} $B$
after the \textit{PopMax} operation is independent of $B$, and hence any $B$ has
the same probability ($1/B_{n-1}$) of occurring.

To finish the proof, the events $E^p_n$ for a fixed $n$ and different $p$ form a
disjoint partition of the probability space, and hence we can apply the law of
the total probability. The event $E_B$ indicating a particular \textit{priority
queue} occurring after a \textit{PopMax} operation, has the probability
\begin{equation}
Pr_{E_B} = \sum_p Pr(E_B | E^p_n)P(E^p_n) = \sum_p P(E^p_n)/B_{n-1} = 1/B_{n-1}
\end{equation}
as
\begin{equation}
\sum_p P(E^p_n) = 1
\end{equation}
\end{proof}

\section{Number of comparisons}

Once again, we would like to count how many comparisons does the sorting
algorithm cause between a contiguous range of elements of size $r$. Let us argue
that the number of comparisons during \textit{BuildHeap} phase is linear in
terms of the $r$.

\begin{myle}(Bounding number of comparisons during the \textit{BuildHeap}
phase) The number of comparisons between the elements of a range of size $r$ during the
\textit{BuildHeap} phase is at most $r$.
\end{myle}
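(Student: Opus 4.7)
The plan is to identify every comparison in BuildHeap as a root-versus-root comparison coming from a single \textit{Merge} of two equal-sized heaps, and then to charge each red--red comparison to the joining of two previously disjoint connected red subheaps. With only $r$ red elements ever appearing, at most $r-1$ such joinings are possible.

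First, I would inspect Algorithm \ref{alg:merge} together with the \textit{Add} and two-argument \textit{Merge} routines and verify that BuildHeap touches element values only at the line \textbf{if} $node1.value > node2.value$ inside the two-heap \textit{Merge}. Every such comparison is between the two roots of the heaps being merged, and after it the losing root is linked as a child of the winning root. In particular, the losing element never reappears as a root in any future root list, so no comparison ever revisits a previous loser.

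Second, I would transport the connectedness observation from Chapter \ref{chapter:1} into the binomial setting. Its proof uses only heap order and the fact that $X_r$ is a contiguous value range, both of which remain available here, so in any binomial-heap forest arising during BuildHeap the red elements split into a disjoint collection of connected red subtrees, each rooted at its topmost red node. Let $R$ denote the number of such red subtrees across the current root list.

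Third, I would track $R$ through the whole BuildHeap. Inserting a new red element as a singleton heap increases $R$ by exactly one, so over the course of BuildHeap there are exactly $r$ such increments. A red--red root comparison links two red roots, fusing their formerly separate components into one and thus decreasing $R$ by exactly one. A comparison in which at least one of the two roots is blue leaves $R$ unchanged: the blue element remains in the combined tree as a separator, so whatever red components hung below the two roots continue to be separated from each other by at least one blue node. Since $R \geq 0$ at every moment and the total number of $+1$ steps equals $r$, the total number of $-1$ steps, which is precisely the number of red--red comparisons, is at most $r$ (and in fact at most $r-1$, since at least one red component survives to the end). I do not foresee a genuine obstacle; the only point that requires a line of justification is the blue-root case of the bookkeeping, and it reduces to the observation that a blue node in the tree cannot belong to any red component and therefore always blocks fusion.
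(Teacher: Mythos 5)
Your proof is correct and rests on the same idea as the paper's: every comparison in \textit{BuildHeap} is a root-versus-root comparison, and each red--red comparison permanently demotes one red root to a non-root, so at most $r-1 \leq r$ such comparisons can occur. Your component-counting bookkeeping (increments on insertion, decrements on red--red fusion, blue roots neutral) is just a more explicit formalization of the paper's two-sentence charging argument, not a different route.
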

\begin{proof}
We just have to notice that when two \textcolor{red}{red} elements
are compared, their trees are merged and only the larger of the roots is 
``available'' for further comparisons. Hence with every comparison of
\textcolor{red}{red} elements, we have $1$ less elements which can cause
further comparisons.
\end{proof}

Consider a moment, when the first \textcolor{red}{red} element is being popped.
We argue that before this moment, there are at most $r$ comparisons between the
\textcolor{red}{red} elements.

\begin{myle}
If two \textcolor{red}{red} elements are compared, one of them stays the
ancestor of the other until the moment they are unmerged (the ancestor is
deleted).
This can only happen when a \textcolor{red}{red} element is popped
from the \textit{priority queue}.
\end{myle}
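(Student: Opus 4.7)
The plan is to reduce the lemma to a structural invariant of binomial heaps: the only operation that can alter a parent--child edge already present in the forest is \textit{PopMax}, and it only removes edges incident to the specific root it extracts. Two red elements are compared only inside the \textit{Merge} routine (algorithm \ref{alg:merge}) when it joins two equal-size trees whose roots are both red; immediately after this comparison the smaller root is appended to the children list of the larger, so one of the red elements becomes the direct parent of the other.

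First I would verify that no subsequent operation can disturb this particular edge except a \textit{PopMax} that extracts the parent. Every further call to \textit{Merge} --- whether during \textit{BuildHeap}, during an insertion, or as the cleanup step of a \textit{PopMax} --- joins two equal-size trees by attaching one root as a child of the other and never touches their interiors. Hence across any sequence of merges the subtree below our red parent, and in particular its edge to its red child, is preserved intact.

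Second, a \textit{PopMax} removes the root of the maximum-key tree in the root list and promotes its children into separate root-list entries. For this to break the edge between our red parent and its red child, the removed node must lie on the path between them; since the parent is the child's immediate parent, this forces the removed node to be the parent itself. But the parent can be popped only once it has reached the root list, which in turn requires every strict ancestor of it in its current tree to have been extracted first. Consequently the instant at which the edge disappears coincides exactly with the \textit{PopMax} that extracts the parent.

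The parent is red by construction, since it is one of the two red elements originally compared, so ``unmerging'' always coincides with popping a red element, which is the second clause. The main point to be careful about is that binomial-heap maintenance performs no internal reorganisation --- no decrease-key and no sift-up at interior nodes --- so edges cannot migrate within a tree; this is immediate from inspecting the pseudocode, but it is the only place the argument could fail.
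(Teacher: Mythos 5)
Your argument is correct, and it is worth noting that the paper itself states this lemma with no proof at all --- it is asserted and immediately used to justify the claim that at most $r$ comparisons occur before the first red pop. So there is no authorial proof to compare against; what you have written is the natural justification the paper omits: links are created only by \textit{Merge} attaching one root as a child of another, no operation ever restructures the interior of a tree, and \textit{PopMax} severs only the edges incident to the extracted root, so a parent--child edge between two reds can disappear only when the (red) parent itself is popped. One caveat on your opening claim that two red elements are compared \emph{only} inside \textit{Merge}: the algorithm also compares elements while scanning the \textit{Root list} for the maximum, and two red roots can certainly meet there without either becoming an ancestor of the other. Those comparisons are accounted for separately by the paper (the $2\ln(2)r\log(r)$ bound for the \textit{FindMax} phase), so the lemma --- and your proof --- should be read as applying only to linking comparisons; it would strengthen your write-up to say this explicitly rather than to assert that no other comparisons exist. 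With that scoping made explicit, the argument is complete.
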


Hence we only need to consider the situation when the range $X_r$ is the $r$
largest elements of in the \textit{priority queue}, as up to this event, there can be only $r$ comparisons
between the \textcolor{red}{red} elements. 

So from now on we assume that there are $n$
elements in the \textit{priority queue} and our range $X_r$ is the $r$ largest elements left. 
Note that this is where we need randomness preservation property, as we can
guarantee that at any point in the algorithm, the \textit{priority queue} has
uniform distribution.

\begin{mythe}
The number of comparisons between the \textcolor{red}{red} elements,
caused by searching the maximum element in the \textit{priority queue} is at
most $2\ln(2)r\log(r)$.
\end{mythe}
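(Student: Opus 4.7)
The plan is to combine randomness preservation (proved earlier in the chapter) with a structural observation about red roots. By the preservation lemma, at every PopMax the priority queue is a uniformly random binomial heap on its current element set; moreover, since the reds are globally the largest, after each PopMax the remaining $r'$ reds are precisely the top $r'$ of the remaining $n'$ elements. During a single PopMax, the max is found by a linear scan through the root list, costing $t-1$ pairwise comparisons where $t$ is the number of roots. I will bound the expected number of those that are red--red, then sum.

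The key structural observation is: a tree's root is red iff the tree contains at least one red element (since the root is the maximum of its tree, and any red beats every non-red). Because the running maximum in the linear scan is monotone, once it first becomes a red element it remains red for the rest of the scan (a later non-red can never overtake a red). Consequently, if $k$ denotes the number of red roots in the current root list, the number of red--red comparisons contributed by this scan is exactly $(k-1)^+$. So the whole theorem reduces to bounding $\sum_{j=1}^{r} (E[k_j]-1)$, where $k_j$ is the number of red roots at the $j$-th red-popping scan.

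To bound $E[k_j]$ I use uniform randomness: in a uniformly random binomial heap, the elements of a fixed tree form a uniformly random subset, so the probability a tree of size $s$ contains at least one red is $1-\binom{n'-r'}{s}/\binom{n'}{s}\le \min(1, r's/n')$. Summing this over the tree sizes, which are distinct powers of two partitioning $n'$, I split at $s \approx n'/r'$. The ``small'' trees ($s \le n'/r'$) contribute at most $r'\sum_{s}s/n' \le 2$ by the geometric-sum bound on powers of two, while the ``large'' trees ($s>n'/r'$) contribute at most the number of powers of two strictly between $n'/r'$ and $n'$, i.e.\ $\log_2 r'+O(1)$. Hence $E[k_j] - 1 \le \log_2 r_j+O(1)$ where $r_j = r-j+1$.

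Summing over the $r$ scans that pop a red, $\sum_{j=1}^{r}(\log_2 r_j + O(1)) = \log_2(r!) + O(r) \le r\log_2 r + O(r)$, which gives the $r \log r$ main term. The main obstacle is that a PopMax alters the root list non-trivially (the children of the popped root, of sizes $2^0,\dots,2^{k-1}$, are merged back), so the tree sizes evolve in a coupled way; the randomness-preservation lemma sidesteps this entirely by guaranteeing that the distribution at each scan is again uniform regardless of history. The explicit constant $2\ln 2$ comes from replacing the $\min$-bound with the sharper $1-e^{-r's/n'}$ and approximating the resulting sum over $k = \log_2 s$ by an integral, which introduces the factor $1/\ln 2$ when changing variables from the discrete $k$ to a continuous scale; carrying this through with a small amount of additive slack yields the stated bound $2\ln(2)\,r\log r$.
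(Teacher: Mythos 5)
Your proof is correct, but it takes a genuinely different route from the paper. The paper reduces the problem to bounding the expected number of trees containing at least one red element and then invokes the multiplicative drift lemma of Doerr--Johannsen--Winzen: since each tree is larger than all smaller trees combined, the number of reds ``not yet placed'' halves in expectation at each step, giving $E(T)\le 2(\ln r+1)$ red roots per scan and hence $2\ln(r!)+O(r)=2r\ln r+O(r)=2\ln(2)\,r\log_2 r+O(r)$ in total. You instead bound the expected number of red roots directly by a first-moment computation: the probability that a tree of size $s$ holds a red is $1-\binom{n'-r'}{s}/\binom{n'}{s}\le\min(1,r's/n')$, and splitting the distinct power-of-two sizes at $s\approx n'/r'$ gives at most $2$ from the small trees and $\log_2 r'+O(1)$ from the large ones. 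This is more elementary (no drift lemma) and in fact yields the stronger bound $r\log_2 r+O(r)$, with leading constant $1$ rather than $2\ln 2$ --- which is consistent with the paper's own experimental observation that the drift-lemma constant is not tight. Two minor remarks: your reduction to the regime where the reds are the largest remaining elements presupposes the paper's earlier accounting of the at most $r$ comparisons incurred before the first red is popped, so it should be cited rather than asserted; and your closing paragraph, which tries to recover the constant $2\ln 2$ from a sharper $1-e^{-r's/n'}$ estimate and an integral approximation, is vague and unnecessary, since your $\min$-bound already implies the stated inequality (up to the same additive $O(r)$ slack that the paper's own derivation carries).
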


To establish the result above, let us look at the probabilities of particular
\textit{priority queue} configurations which cause comparisons between the
\textcolor{red}{red} elements.

Clearly, if there are $k$ \textcolor{red}{red} elements in the \textit{Root
list}, there would be at most $k-1$ comparisons between the
\textcolor{red}{red} elements when searching the maximum element of
the \textit{Root list}.

The probability that $k$-th smallest \textcolor{red}{red} element is a root of
a heap of size $2^{s}$ is ${ k-1 \choose 2^{s}-1}/{n \choose 2^s}$. Summing the
probability for all \textcolor{red}{red} elements, we have 
\begin{equation}
\sum_{n-r+1 \leq k \leq n} { k-1 \choose
2^{s}-1}/{n \choose 2^s} = 1-{n-r \choose 2^s}/{n \choose 2^s}
\end{equation}

\begin{myle}
The probability that any \textcolor{red}{red} element is the root
of a heap of size $2^s$ is 
\begin{equation}
1-{n-r \choose 2^s}/{n \choose 2^s}
\end{equation}
\end{myle}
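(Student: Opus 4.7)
The plan is to give a clean, self-contained argument and then cross-check it against the expression just derived above the lemma (which is in fact already essentially the statement, modulo a combinatorial identity).

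First I would use the observation that the \textcolor{red}{red} elements are by definition the $r$ largest elements currently in the priority queue. In particular, every \textcolor{red}{red} element is larger than every \textcolor{blue}{blue} element, so as soon as a \textcolor{red}{red} element appears \emph{anywhere} in the heap of size $2^s$, the heap order forces it (or a larger \textcolor{red}{red} element) to sit at the root of that heap. Equivalently, the root of the size-$2^s$ heap is \textcolor{red}{red} if and only if the multiset of $2^s$ elements occupying that heap is not entirely \textcolor{blue}{blue}.

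Next I would invoke the randomness-preservation results already established in this section: at every step of the algorithm the priority queue is uniformly random, and consequently every fixed subheap of it is also uniformly random on its underlying element set. In particular, the set of $2^s$ elements forming the heap of size $2^s$ is a uniformly random $2^s$-subset of the $n$ elements currently in the queue. The probability that this subset avoids all $r$ \textcolor{red}{red} elements is then the classical hypergeometric probability $\binom{n-r}{2^s}/\binom{n}{2^s}$, and the lemma follows by taking the complement.

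As a consistency check, I would verify that this agrees with the per-element expression already displayed above the statement. Summing the disjoint events ``the $k$-th smallest \textcolor{red}{red} element is the root of the heap of size $2^s$'' over $k = n-r+1, \ldots, n$ gives
\begin{equation*}
\sum_{k=n-r+1}^{n}\frac{\binom{k-1}{2^s-1}}{\binom{n}{2^s}}
= \frac{1}{\binom{n}{2^s}}\sum_{j=n-r}^{n-1}\binom{j}{2^s-1}
= \frac{\binom{n}{2^s}-\binom{n-r}{2^s}}{\binom{n}{2^s}},
\end{equation*}
by the hockey-stick identity $\sum_{j=a}^{b}\binom{j}{t} = \binom{b+1}{t+1}-\binom{a}{t+1}$, matching the closed form. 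The only subtlety worth stating explicitly is that the events for distinct $k$ are mutually exclusive (a heap has exactly one root), which is what makes the sum legitimate; I expect this, together with noting that red-ness is equivalent to being among the $r$ largest, to be the only conceptual point — the rest is a one-line combinatorial identity.
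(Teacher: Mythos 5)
Your proof is correct, and your primary argument is a genuinely different (and arguably cleaner) route than the paper's. The paper proceeds per element: it asserts that the $k$-th smallest \textcolor{red}{red} element is the root of the size-$2^s$ heap with probability $\binom{k-1}{2^s-1}/\binom{n}{2^s}$ (the root is the maximum of a uniformly random $2^s$-subset, so the other $2^s-1$ occupants must come from the $k-1$ smaller elements) and then sums these disjoint events over $k = n-r+1,\ldots,n$, stating the closed form without spelling out the telescoping. You instead argue by complement: since the \textcolor{red}{red} elements are by assumption the $r$ largest elements remaining, the root of the heap is \textcolor{red}{red} exactly when the heap's element set is not entirely \textcolor{blue}{blue}, and the uniform-randomness results make that set a uniform $2^s$-subset, giving the hypergeometric probability $\binom{n-r}{2^s}/\binom{n}{2^s}$ for the bad event directly. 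This avoids the per-element case split entirely and makes the role of randomness preservation explicit; your consistency check then recovers the paper's summation and supplies the hockey-stick identity $\sum_{j=a}^{b}\binom{j}{t}=\binom{b+1}{t+1}-\binom{a}{t+1}$ that the paper leaves implicit. The one hypothesis you lean on — that the element set of a fixed heap in the root list is \emph{marginally} a uniform $2^s$-subset of all $n$ elements, not merely of the elements left over after larger heaps are filled — does hold by exchangeability and is the same fact the paper's per-element formula tacitly uses, but it is worth the explicit appeal to the earlier corollary that you make.
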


A direct result of the lemma is that the expected number of comparisons while
searching for the maximal element:

\begin{myle}
Let the binary expansion of $n$ be $\displaystyle \sum_{s_i} 2^{s_i}$. Then the
expected number of \textcolor{red}{red} elements that are in the
\textit{Root list} is
\begin{equation}
\sum_{s_i} 1-{n-r \choose 2^{s_i}}/{n \choose 2^{s_i}}
\end{equation}

such that $2^{s_i}$ occurs in the binary expansion of $n$.
\end{myle}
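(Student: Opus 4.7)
The plan is to establish this by a direct application of linearity of expectation together with the preceding lemma. Since the \textit{priority queue} on $n$ elements consists of exactly one heap for each term $2^{s_i}$ in the binary expansion of $n$ (this is the defining structural property of a \textit{Binomial heap}), the number of \textcolor{red}{red} elements in the \textit{Root list} can be written as a sum of indicator random variables: for each $s_i$, let $X_i$ be the indicator of the event that the root of the heap of size $2^{s_i}$ lies in $X_r$.

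First I would invoke the randomness-preservation property established earlier in the chapter. Because \textit{Binomial Heapsort} keeps the \textit{priority queue} uniformly random throughout the sorting phase, at the moment we look at the queue on $n$ elements every subset of $2^{s_i}$ elements is equally likely to occupy the heap of size $2^{s_i}$. This is exactly the hypothesis needed to make the previous lemma applicable to that particular heap.

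Second, I would apply the previous lemma to each heap individually: the probability that the root of the heap of size $2^{s_i}$ belongs to $X_r$ equals $1 - \binom{n-r}{2^{s_i}}/\binom{n}{2^{s_i}}$, since by uniformity the root is the maximum of a uniformly chosen $2^{s_i}$-subset, and $\binom{n-r}{2^{s_i}}/\binom{n}{2^{s_i}}$ is the probability that this subset avoids $X_r$ entirely. Hence $E[X_i] = 1 - \binom{n-r}{2^{s_i}}/\binom{n}{2^{s_i}}$.

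Finally, by linearity of expectation the expected total number of \textcolor{red}{red} roots is $\sum_{s_i} E[X_i]$, giving the stated formula. There is no real obstacle here: the only point that deserves explicit attention is justifying why the indicators' expectations may be computed heap-by-heap, i.e.\ why the marginal distribution within each heap is uniform, and that follows directly from the randomness-preservation corollary already proved (every fixed subset of heaps is uniformly random). No independence between the $X_i$'s is needed, since linearity of expectation does not require it.
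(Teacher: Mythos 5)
Your proof is correct and matches the paper's intended argument: the paper presents this lemma as a direct consequence of the preceding per-heap lemma, which is precisely your decomposition into indicator variables over the heaps indexed by the binary expansion of $n$, combined with linearity of expectation and the randomness-preservation corollary. Your complementary-counting derivation of the per-heap probability (the root is red iff the uniformly random $2^{s_i}$-subset meets $X_r$) is equivalent to the paper's evaluation of $\sum_{n-r+1 \leq k \leq n}\binom{k-1}{2^{s}-1}/\binom{n}{2^{s}}$, so nothing essential differs.
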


Notice that, after each $Pop$ operation, both $n$ and $r$ decrease by $1$, as
the maximal element in the \textit{Root list} is \textcolor{red}{red} and is being
popped.
To get the expected overall number of comparisons during the phase, we can sum
up the expected numbers of comparisons while decreasing $n$. The only difficulty
is that the expression depends on the binary expansion of $n$.

Let us look at occurrence of $2^s$ in the binary expansion of $n$, while
decreasing n. It is not hard to realize that $2^s$ occurs in the binary
expansion of $n$ in blockwise fashion: let $n = 2^m-1$, then $2^s$ will occur in
the binary expansion of $n, n-1, \ldots n-r+1$, not occur in the binary
expansion of $n-r, n-r-1, \ldots n-2r+1$ and so on.

Let the initial number of elements in the \textit{priority queue} be $n_0$. For
the sake of simplifying the notation, denote $\oplus$ the \textit{xor} operation
and $n \oplus 2^s < n$ the case that $2^s$ is present in the binary expansion of
$n$.
Summing the quantity in lemma above for all $n$ such that $n_0-r+1 \leq n \leq
n_0$, as for smaller $n$, no \textcolor{red}{red} elements are left in the heap
(and hence no more comparisons are created):

\begin{equation}
\sum_{\substack{0 \leq k \leq r-1 \\ (n-k)
\oplus 2^s < n-k}} 1-{n-r \choose
2^{s}}/{n-k \choose 2^{s}}
\end{equation}

In turns out that analyzing the asymptotics of the expression is hard while
preserving the right constants.
One could upper bound the expression by the following expression:

\begin{equation}
\sum_{0 \leq k \leq r-1} 1-{n-r \choose
2^{s}}/{n-k \choose 2^{s}}
\end{equation}

However experiments show that this approximation is approximately
a $2$-competitive upper bound. Instead, we rely on other means to establish the
asymptotic behavior of the number of comparisons. 

What we really need to count is the expected number of heaps
that contain at least $1$ \textcolor{red}{red} element. We can see the heaps of sizes $2^s$ for
some $s$ as buckets where we put \textcolor{red}{red} elements.

\begin{myob}
At any moment, the number of elements in a heap of size $2^s$ is larger than the 
number of elements in all the smaller heaps combined.
\end{myob}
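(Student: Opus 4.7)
The plan is to invoke the defining structural property of a binomial heap already recalled at the start of the chapter: the priority queue is a root list containing at most one tree of each size, with every tree having size equal to a power of two. Given this, the claim reduces to the familiar fact that any power of two strictly exceeds the sum of all smaller distinct powers of two.

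Concretely, I would proceed as follows. First, fix an arbitrary moment during the execution of the algorithm and consider a heap of size $2^s$ sitting in the root list. By the invariant that the root list contains at most one tree of each size $2^i$, the heaps strictly smaller than it have pairwise distinct sizes drawn from $\{2^0, 2^1, \ldots, 2^{s-1}\}$. Then I would upper bound the combined size of all these smaller heaps by the geometric sum
\begin{equation*}
\sum_{i=0}^{s-1} 2^i = 2^s - 1,
\end{equation*}
which is strictly less than $2^s$. This immediately yields the stated inequality.

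The only thing to verify is that the root-list invariant really is preserved at every moment during \textit{BuildHeap} and during the popping phase. This follows by inspection of the \textsc{Merge} and \textsc{Add} routines given earlier: whenever two heaps of the same size $2^k$ appear, they are fused into a single heap of size $2^{k+1}$ via the carry mechanism, mirroring binary addition, so no two heaps of the same size ever coexist in the final root list returned to the caller. Since there is no genuine obstacle here beyond pointing to the binary-addition analogy, I would keep the argument to two or three lines and simply note the bound $2^s - 1 < 2^s$ as the conclusion.
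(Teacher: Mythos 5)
Your proof is correct and is essentially the argument the paper gives: the paper's proof simply observes that the heap sizes in the root list are the distinct powers of two in the binary expansion of $n$, so a heap of size $2^s$ dominates the sum $2^s-1$ of all smaller ones. Your version just spells out the geometric-sum bound and the preservation of the one-tree-per-size invariant explicitly, which the paper leaves implicit.
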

\begin{proof}
This is rather easy to see, as the number of elements in heaps is determined 
by the binary expansion of $n$.
\end{proof}

The next corollary should give intuition of why we would expect $O(\log(r))$
heaps to contain at least $1$ \textcolor{red}{red} element and
hence the number of comparisons between the \textcolor{red}{red} elements is
$O(\log(r))$ during finding the maximal \textcolor{red}{red} element.

\begin{myco}
If $n=2^k-1$, we expect about $1/2$ of the \textcolor{red}{red} elements to be
in the largest heap and the other $1/4$ to be in the next largest heap, and so on. 
For $n$ of a different form, even larger of fractions of elements  occur in larger
heaps. Hence we expect exponential decay in the number of elements ``left to
put'' in the smaller heaps.
\end{myco}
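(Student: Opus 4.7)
The plan is to use the uniform randomness of the priority queue (established in the corollary preceding this statement) together with a one-line linearity-of-expectation argument to get the expected fraction of red elements in each heap, and then read off the geometric decay from the binary expansion of $n$.

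First I would invoke the randomness-preservation results: at the moment we are considering, the priority queue is uniformly random, so every $r$-subset of positions is equally likely to hold the $r$ red elements. Hence for a component heap of size $2^{s}$, the expected number of red elements inside it is exactly
\begin{equation*}
r \cdot \frac{2^{s}}{n}
\end{equation*}
by linearity of expectation. This is the only nontrivial input needed; the rest is arithmetic on the binary expansion of $n$.

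Specializing to $n = 2^{k}-1$, the heap sizes are $2^{k-1}, 2^{k-2}, \ldots, 2^{0}$, and the formula gives expected counts
\begin{equation*}
r \cdot \frac{2^{k-1}}{2^{k}-1},\quad r \cdot \frac{2^{k-2}}{2^{k}-1},\quad \ldots,\quad r \cdot \frac{1}{2^{k}-1},
\end{equation*}
which are approximately $r/2,\, r/4,\, r/8, \ldots$ as claimed. Consequently the expected number of red elements not yet accounted for after the $j$ largest heaps is at most $r/2^{j}$, which is the advertised exponential decay. For general $n$ with binary expansion $\sum_{i} 2^{s_{i}}$, $s_{1} > s_{2} > \cdots$, the strict inequality $2^{s_{i}} > 2^{s_{i+1}} + \cdots + 2^{s_{m}}$ shows that the largest remaining heap holds strictly more than half of the elements still to be distributed; hence by the same expectation formula its expected share of the remaining red mass exceeds $1/2$, so the decay in that case is at least as fast.

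The main (very modest) obstacle is making precise that we may appeal to linearity of expectation across heaps: one must note that the event ``position $p$ belongs to the heap of size $2^{s}$'' is determined by the heap configuration but \emph{not} by the identity of the element sitting at $p$, so the red/non-red labelling is exchangeable over positions and the marginal probability that a given red element lands in a given heap is $2^{s}/n$. With that in hand the corollary follows without further work.
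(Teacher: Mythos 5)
Your proposal is correct and matches the paper's intended argument: the paper states this corollary without a formal proof, but its subsequent verification of the drift-lemma hypothesis uses exactly your reasoning — uniformity of the priority queue gives each element marginal probability $2^{s}/n > 1/2$ of lying in the largest heap, hence expected red count $r\cdot 2^{s}/n$ per heap and geometric decay of the remainder. Your explicit treatment of exchangeability and of the general-$n$ case via $2^{s_1} > n/2$ is a slightly more careful write-up of the same idea, not a different route.
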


Consider the following process: number the heaps from largest to smallest
starting from $1$. We will put some number of the \textcolor{red}{red} elements
to the heaps in this order. Let $X_t$ be the number of
\textcolor{red}{red} elements left (of initial size $r$) before we
put a number of \textcolor{red}{red} elements to the $t$-th heap.

\begin{myde}
Let $T = \min \{t \in \mathbb{N}_0 | X_t = 0 \}$
\end{myde}

Clearly, the expected number of heaps that contain \textcolor{red}{red} elements
is $\leq T$, as there could be a heap that contains no \textcolor{red}{red}
elements between the heaps that have at least $1$ \textcolor{red}{red} element.

The following theorem, also known as a multiplicative drift lemma, establishes behavior of $T$
~\cite{DBLP:journals/algorithmica/DoerrJW12}:

\begin{mythe}(Multiplicative Drift lemma)
Let $\{X_t\}$ be a sequence of non-negative integer random variables. Assume
that there is a $\delta > 0$ such that
\begin{equation}
\forall t \in \mathbb{N}_0 \: : \: E(X_t | X_{t-1} = x) \leq (1-\delta)x
\end{equation}
then $T = \min \{t \in \mathbb{N}_0 | X_t = 0 \}$ satisfies
\begin{equation}
E(T) \leq (1/\delta)(\ln(X_0)+1)
\end{equation}
\end{mythe}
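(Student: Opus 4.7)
The plan is to convert the given multiplicative drift into an additive drift via a logarithmic potential and then read off $E(T)$ from an optional stopping argument. Specifically, I would define
\begin{equation*}
\phi(x) = \begin{cases} 1 + \ln x & \text{if } x \geq 1, \\ x & \text{if } 0 \leq x \leq 1, \end{cases}
\end{equation*}
which is concave, non-negative, and satisfies $\phi(0) = 0$. The hope is that $E(T)$ will drop out of a supermartingale bound of the form $M_t := \phi(X_{t \wedge T}) + \delta(t \wedge T)$.

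The core step is to verify the additive one-step drift
\begin{equation*}
E(\phi(X_t) \mid X_{t-1} = x) \leq \phi(x) - \delta \quad \text{for every integer } x \geq 1.
\end{equation*}
Concavity of $\phi$ plus Jensen's inequality applied to the hypothesis gives $E(\phi(X_t) \mid X_{t-1} = x) \leq \phi(E(X_t \mid X_{t-1} = x)) \leq \phi((1-\delta) x)$, and a short case analysis (splitting on whether $(1-\delta) x \geq 1$, and using $-\ln(1-\delta) \geq \delta$) then confirms $\phi((1-\delta) x) \leq \phi(x) - \delta$. Combined with the fact that the hypothesis forces $X_t = 0$ almost surely whenever $X_{t-1} = 0$ (so $0$ is absorbing), this shows that $M_t$ is a non-negative supermartingale. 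Since $E(X_t) \leq (1-\delta)^t X_0 \to 0$ we have $T < \infty$ almost surely, and $\phi(X_T) = 0$. Letting $t \to \infty$ in $E(M_t) \leq M_0 = \phi(X_0)$, using monotone convergence on $\delta(t \wedge T) \nearrow \delta T$ and $\phi \geq 0$ for the other term, yields $\delta \cdot E(T) \leq \phi(X_0) = 1 + \ln X_0$, which is the desired bound.

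The main obstacle is the boundary regime of the drift verification, specifically when $x \geq 1$ but $(1-\delta) x < 1$: there one must establish $1 + \ln x - (1-\delta) x \geq \delta$ on $x \in [1, 1/(1-\delta)]$. The function $f(x) = 1 + \ln x - (1-\delta) x$ is concave (so its minimum on a closed interval is attained at an endpoint), with $f(1) = \delta$ and $f(1/(1-\delta)) = -\ln(1-\delta) \geq \delta$, which closes the case. A minor secondary point is the limit exchange from the finite-$t$ supermartingale inequality to the bound at $T$, but this is routine given $\phi \geq 0$ and $t \wedge T \nearrow T$. A backup plan, should the potential-function calculation prove fiddly, is the cruder route of combining $E(X_t) \leq (1-\delta)^t X_0$ with $\{T > t\} = \{X_t \geq 1\}$ (integer-valued) and Markov, then splitting $\sum_t P(T > t)$ at $t^* \approx \ln(X_0)/\delta$; this gives the correct $O(\log(X_0)/\delta)$ form but loses a small additive constant compared to the sharp $(\ln(X_0) + 1)/\delta$.
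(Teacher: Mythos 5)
The thesis does not actually prove this theorem: it is imported from Doerr, Johannsen and Winzen \cite{DBLP:journals/algorithmica/DoerrJW12}, and the ``proof'' block that follows it in the text only verifies that the particular balls-into-heaps process satisfies the hypothesis with $\delta=1/2$. Your proposal therefore supplies an argument the thesis leaves to the literature, and it is correct, with the now-standard logarithmic-potential reduction of multiplicative drift to additive drift. The potential $\phi$ is concave and nondecreasing on $[0,\infty)$, so Jensen plus the hypothesis gives $E(\phi(X_t)\mid X_{t-1}=x)\le\phi\bigl((1-\delta)x\bigr)$, and your two cases close the one-step bound $\phi((1-\delta)x)\le\phi(x)-\delta$ for all real $x\ge 1$: when $(1-\delta)x\ge 1$ this is $\ln(1-\delta)\le-\delta$, and otherwise the concave $f(x)=1+\ln x-(1-\delta)x$ attains its minimum over $[1,1/(1-\delta)]$ at an endpoint, where $f(1)=\delta$ and $f(1/(1-\delta))=-\ln(1-\delta)\ge\delta$. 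Since the states reached before $T$ are integers $\ge 1$ and $0$ is absorbing, $M_t=\phi(X_{t\wedge T})+\delta(t\wedge T)$ is a nonnegative supermartingale, and $\phi\ge 0$ together with monotone convergence of $E(t\wedge T)$ yields $\delta E(T)\le\phi(X_0)=1+\ln X_0$ with exactly the stated constant. Two minor caveats worth recording: the statement tacitly assumes $X_0\ge 1$ (otherwise $\ln X_0$ is undefined, though then $T=0$), and the supermartingale computation really uses the drift hypothesis conditional on the full history rather than on the value of $X_{t-1}$ alone; this is how the cited source states it and is harmless here, but as literally written the hypothesis is slightly weaker than what a non-Markov process would require.
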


Let us show that our process with $\delta = 1/2$ satisfies the requirements of
the theorem:

\begin{proof}
As was shown before, the distribution of elements in the \textit{Binomial Heap}
is uniform, that is every subset of elements is equally likely to appear in a heap
of size $2^s$. Or more generally, for a collection of heaps of appropriate sizes
$s_1, s_2, \ldots s_k$, any set tuple $(S_1, S_2, \ldots, S_k)$ (where $|S_i| = s_i$) is
equally likely to appear as the corresponding sets of the heaps.

Now, for a single element, consider the probability $p$ that this element occurs
in the heap of size $2^s$ (which is the largest heap) and not in the heaps of
smaller size. It is not hard to see that $p > 1/2$, as the number of elements
in the heap of size $2^s$ is always larger than the combined number of elements in all the smaller
heaps (as $2^s > \sum_{0 \leq i \leq s-1}2^i$).

Let $H_t$ be the expected number of \textcolor{red}{red} elements
in $t$-th heap. Then $X_{t-1}-H_{t-1} = X_t$ and 
\begin{eqnarray}
E(X_t | X_{t-1} = x) & = & E(X_{t-1}-H_{t-1} |  X_{t-1} = x) = \\
x-E(H_{t-1}|X_{t-1} = x) & \leq & x-x/2 = x/2
\end{eqnarray}
As $E(H_{t-1}|X_{t-1} = x) \geq x p = x/2$. Hence the random process indeed
satisfies the requirements of the theorem and
\begin{equation}
E(T) \leq (1/\delta)(\ln(X_0)+1) = 2(\ln(r)+1)
\end{equation}
\end{proof}

The expectation of the total number of comparisons during the \textit{SearchMax}
phase is then $\leq 2\ln(r!)+2r = 2r\ln(r)+O(r)$. From the proof of the
Multiplicative Drift theorem it seems that the constant is rather tight. With a
more careful analysis, we can show that the number of elements in the
\textit{Root list} that we visit before we meet the maximal element is expected
to be $O(1)$.

Let the $p_i$ be the probability that the maximal element is in $i$-th heap,
then expected number of elements we visit in the \textit{Root list} 
before we meet the maximal element is:

\begin{equation}
\label{eq:binroot}
\sum_{1 \leq i \leq k} i p_{i}\prod_{1 \leq j < i}(1-p_j)
\end{equation}

The following inequalities help to establish the bound

\begin{equation}
q_1 q_2 \ldots q_k \leq \sqrt[k]{q_1 q_2 \ldots q_k} \leq (q_1+q_2+\ldots+q_k)/k
\end{equation}

for $0 \leq q_i \leq 1$. The first inequality follows from the fact that $q_i
\leq 1$ and the second is the standard arithmetic mean and geometric mean.

Applying the inequality to \ref{eq:binroot} leads to

\begin{equation}
\label{eq:binroot1}
\sum_{1 \leq i \leq k} i p_{i}\prod_{1 \leq j < i}(1-p_j) \leq \sum_{1 \leq i
\leq k} i p_{i}[\sum_{1 \leq j < i}(1-p_j)/(i-1)] = \sum_{1 \leq i \leq k} i
p_{i}
\end{equation}

The equality follows from the fact that $\sum_{1 \leq i \leq k}p_i = 1$. It is
easy to see that the maximum of the sum is attained when the sequence $(p_k,
p_{k-1}, \ldots, p_1)$ is lexicographically largest. However the construction of
the \textit{Binomial Heap} guarantees that $p_i \geq 2 p_{i+1}$ for all $j < k$,
as $i$-th heap is at least twice as large as the $i+1$-st. And hence, the sum 
$\sum_{1 \leq i \leq k}p_i = 1$ is at most $\sum_{1 \leq i \leq \infty}i/2^i =
2$. Note that this observation unfortunately doesn't help to find the maximal
element. 

Another possible improvement would be to put all the elements in the
\textit{Root list} to another \textit{Priority queue}. This does reduce the time
to find the maximal element to $O(\log(\log(n)))$ when the number of
\textcolor{red}{red} elements in is $O(n)$, however it is hard to argue how many
comparisons between the \textcolor{red}{red} elements this causes when $r$
(the number of \textcolor{red}{red} elements) is small, say $O(\log(n))$.

With the exact same technique, we can show that during the \textit{PopMax} phase
of the \textit{Binomial Heapsort} the number of comparisons of
\textcolor{red}{red} elements is also $O(r\log{r})$.
Once the maximal element in the \textit{Root list} has been found, the heap of
size $2^s$ is split into heaps of sizes $1, 2, \ldots, 2^{s-1}$, which are
remerged with the heaps that are currently in the \textit{Root list}. By the
randomness preservation argument, the distribution of the elements in the heaps
is still uniform.

\begin{myob}
When a comparison of two \textcolor{red}{red} elements happens, their
corresponding heaps are merged and there is $1$ less \textcolor{red}{red}
element in the \textit{Root list}.
\end{myob}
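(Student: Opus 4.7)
The plan is to trace precisely when a key comparison of two \textcolor{red}{red} elements can arise during the remerging step of a \textit{PopMax}, and to verify that each such event both fuses their two heaps and removes exactly one of them from the \textit{Root list}.

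First, I would inspect the pseudocode of the remerge. After the maximum root has been popped, its $s$ children form a sequence of heaps of sizes $2^0,2^1,\ldots,2^{s-1}$ that are reinjected into the \textit{Root list} via the outer \textit{Merge} of two root lists (\ref{alg:merge}). Outside of the size-field checks in the outer \textit{Merge} and the \textit{Add} helper (which are not key comparisons), the only point in the code where two element keys are compared is the inner \textit{Merge} of two heaps of equal size $2^k$. Hence any key comparison between two \textcolor{red}{red} elements during a \textit{PopMax} must originate from this inner \textit{Merge}, with both \textcolor{red}{red} elements sitting as the roots of two heaps $T_1,T_2$ of the same size that are currently in the \textit{Root list}.

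Second, I would invoke the inner \textit{Merge} itself: the smaller of the two roots is appended as a child of the larger one, producing a single binomial tree of size $2^{k+1}$ whose new root is the winning \textcolor{red}{red} element. The losing \textcolor{red}{red} element has become a non-root child and therefore has left the \textit{Root list}, while the winner remains; and no other heap in the \textit{Root list} is touched by this individual comparison. Consequently the count of \textcolor{red}{red} elements in the \textit{Root list} drops by exactly one per comparison, which is the entire content of the observation.

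The one delicate point, which is really just bookkeeping, is to check that a cascade of ``carries'' produced by a single call to the outer \textit{Merge} does not mix several key comparisons into one event: each carry is produced by a separate call to the inner \textit{Merge}, so the two statements (heaps are merged, one \textcolor{red}{red} root leaves the \textit{Root list}) apply independently to each such inner call. I expect this to be the only step worth spelling out; the rest follows directly from reading the pseudocode of \textit{Merge} and \textit{Add}.
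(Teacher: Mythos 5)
Your proposal is correct: the only key comparisons in the remerge are those made by the inner \textit{Merge} of two equal-size heaps, and each such comparison demotes the losing root to a child, so a comparison of two \textcolor{red}{red} roots fuses their heaps and removes exactly one \textcolor{red}{red} element from the root level. The paper states this observation without any proof, treating it as immediate from the \textit{Merge} pseudocode (the same one-line idea it uses for the \textit{BuildHeap} comparison bound), so your argument is just a more careful spelling-out of the paper's intended justification rather than a different route.
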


Intuitively then, the number of comparisons that happens when \textit{PopMax}
occurs is the difference of the number of \textcolor{red}{red} elements in the
\textit{Root lists} that are being merged (the list of children of
maximal element and the original \textit{Root list}).

\begin{myle}
Denote the expected number of \textcolor{red}{red} elements in the \textit{Root
list} of a \textit{priority queue} of size $n$ and having $r$
\textcolor{red}{red} elements overall as $R^r_n$. Then the expected number
of comparisons during a single \textit{PopMax} call is at most
\begin{equation}
2R^r_n-R^{r-1}_{n-1}
\end{equation}
\end{myle}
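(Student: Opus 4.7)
I would base the proof on the observation stated immediately before the lemma: every red--red comparison executed during a PopMax merges two red-rooted heaps into one, and so decreases the number of red roots in the root list by exactly one. Hence the total count of red--red comparisons incurred by the PopMax equals the net drop in the number of red roots during the operation, and my whole plan is to track this drop in expectation.

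First I would identify the state just before the re-merging subroutine begins. At that moment the root list consists of the original root list with the (red) maximum removed, together with the children of the popped max exposed as fresh roots. The first piece contributes $R^r_n - 1$ red roots in expectation, since exactly one red root (the max itself) has been taken out. The second piece contributes, say, $B$ red roots; let $E[B]$ be its expectation. The merging then fuses heaps of equal size, and each red--red merge comparison cuts the red-root count by one, until we arrive at the final root list of a size-$(n-1)$ priority queue with $r-1$ red elements. By the randomness-preservation lemma proved earlier, this final priority queue is uniformly random, so its expected number of red roots is exactly $R^{r-1}_{n-1}$.

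Putting the pieces together, the expected number of red--red comparisons during the PopMax is
\begin{equation*}
(R^r_n - 1) + E[B] - R^{r-1}_{n-1}.
\end{equation*}
To reach the stated bound $2R^r_n - R^{r-1}_{n-1}$, it remains to show $E[B] \leq R^r_n + 1$. For this I would invoke the corollary that every subheap of a uniformly random binomial heap is itself uniformly random: the popped max sat at the root of some $B_s$, and once its root is removed, its $s$ children are the roots of a binomial forest over the $2^s - 1$ remaining elements. This forest has the \emph{same structural form} as a root list, so the expected number of red roots in it is of the same type of quantity as $R^{\cdot}_{\cdot}$, and a monotone coupling (fewer total elements, no more red elements than in the whole priority queue) will give the desired $E[B] \leq R^r_n + 1$.

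The main obstacle is the last step: formalizing the comparison between the mini root list sitting below the popped max and the original root list. The difficulty is that the elements of the max's subheap are biased (they are all smaller than the max), so the conditional distribution on red placement is not identical to the unconditional one; the fix is to use randomness preservation within the subheap together with a monotonicity argument that increasing the number of red elements (or the number of total elements) can only increase the expected number of red roots, thereby swallowing the bias in the ``$+1$'' slack.
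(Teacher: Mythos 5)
Your proposal takes essentially the same route as the paper: the paper also counts red--red comparisons as the net drop in the number of red roots across the re-merge, bounds the expected number of red roots among the children of the popped maximum by (the same bound as) $R^r_n$ via the same drift-lemma technique, and subtracts $R^{r-1}_{n-1}$ for the resulting uniformly random priority queue. The one step you single out as delicate --- controlling the expected number of red roots in the exposed subheap forest despite the conditioning --- is precisely the step the paper itself leaves informal (``we can apply the very same technique''), so your version is, if anything, the more careful of the two.
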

\begin{proof}
We have shown that $R^r_n \leq 2\ln(r)+1$. We can apply the very same
technique to show that the expected number of \textcolor{red}{red} elements that 
are immediate children of the maximal element (those that will be merged back
to the \textit{Root list}) is at most $R^r_n$. The statement of the lemma
reflects the previous observation. 
\end{proof}
Summing up the terms, we have 
\begin{equation}
\sum_{0 \leq k \leq r-1} 2R^{r-k}_{n-k}-R^{r-1-k}_{n-1-k} = R^r_n+\sum_{1
\leq k \leq r-1} R^{r-k}_{n-k} \leq 2\ln(r)+2\sum_{1 \leq k \leq r-1} \ln(k)
\end{equation}
which is at most $2\ln(r!) = 2r\ln(r)+O(r)$. It follows that during the
\textit{PopMax} phase there are at most $2r\ln(r)+O(r)$ comparisons of the
\textcolor{red}{red} elements, and hence the total number of comparisons
during the \textit{Binomial Heapsort} can be bounded by $4r\ln(r)+O(r)$. 
The experiments show that the bound is not tight and real
constant is around $2\ln(2)$, however it is not clear how to change the analysis
to reduce the constant, particularly during the \textit{FindMax} stage of the
algorithm.

\chapter{Experiments}\label{chap4}

\section{Binary heapsort}
Experiments support the analysis of the modified version of the \textit{Binary
Heapsort}. What is interesting is that even without the modifications proposed
in this thesis, based on experiments, it seems that the number of comparisons is
$r\log(r)+O(r)$, that is the right constant for the term $r\log(r)$ is $1$.

\section{Binomial heapsort}
For the \textit{Binomial heapsort}, the experiments suggest that the right
constant is $2\ln(2)$ instead of $4\ln(2)$, as was demonstrated in the analysis.
Experiments show that the algorithm spends only about half the predicted time in
both \textit{FindMax} and \textit{DeleteMax} phases, which suggests that the
constant derived in the Multiplicative Drift lemma
~\cite{DBLP:journals/algorithmica/DoerrJW12} is not tight. We can also
conclude from experiments, that the number of the \textcolor{red}{red} elements
in the \textit{Root list} is tightly concentrated around $\log(r)$,

\chapter{Conclusion}\label{chap5}

In this thesis we have conducted analysis of \textit{Binary Heapsort} algorithm
as when used to sort a collection of strings. Proposed modifications to the
algorithm provably guarantee that the total number of comparisons is small
(although not optimal in the asymptotic sense, see the dependence on the prefix
tree of the input strings). Unfortunately, modifications require additional
linear space, which leads to a non-in-place algorithm. However we believe that
the analysis presented is useful to understanding the nature of
\textit{Heapsort} algorithm.

For the \textit{Binomial Heapsort} we were not able to show a tight constant
factor for the number of comparisons of \textcolor{red}{red} elements, however
we were able to demonstrate that during execution of the algorithm, random
structure is preserved, which leads to a simpler and a more natural analysis
than in the case of \textit{Binary Heapsort}. 

Experiments conducted demonstrate that the constant factors in the
Multiplicative Drift lemma are not tight enough, which is an interesting 
research topic on its own. Without the use of the lemma however, the exact
expressions counting the number of comparisons are hard to analyze in the
asymptotic sense.


\addtocontents{toc}{\vspace{2em}} 



\appendix 


\chapter{Appendix}
\label{AppendixA}

\begin{myde}
(Rearrangement inequality) \\ Let $\{a_i | 1 \leq i \leq n \}$ be an increasing
sequence and $\{b_i|1 \leq i \leq n \} $ be decreasing. Then it holds that

\begin{center}
$\displaystyle \sum_i a_i b_i \leq \frac{(\sum_i a_i)(\sum_i b_i)}{n}$ 
\end{center}
\end{myde}

\begin{mythe}
(Euler-Maclaurin summation formula, first form) Let $f(x)$ be a function defined
on the interval $(1, \infty]$ and suppose that the derivatives $f^{(i)}(x)$
exist for $1 \leq i \leq 2m$, where $m$ is a
fixed constant. Then
\begin{center}
$\displaystyle \sum\limits_{1 \leq k \leq N}f(k) =
\int\limits_{1}^{N}f(x)\mathrm{d}x+\frac{(f(N)+f(1))}{2}+\sum\limits_{1 \leq k
\leq m}\frac{B_{2k}}{(2k)!}(f^{(2k-1)}(N)-f(1)^{(2k-1)})+R_m$
\end{center}
where $R_{m}$ is a
remainder s.t. $\displaystyle R_{m} =
-\int\limits_{1}^{N}B_{2m}\frac{f^{(2m)}(x)}{(2m)!}\{1-x\}\mathrm{d}x$ and $B_i$
is a Bernoulli number
\end{mythe}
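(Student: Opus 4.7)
The plan is to prove the Euler-Maclaurin formula by iterated integration by parts on each unit subinterval $[k,k+1]$, using the Bernoulli polynomials $B_n(x)$ as the antiderivatives. Recall two facts I will use without proof: $B_n'(x) = n\,B_{n-1}(x)$ and $B_n(0) = B_n(1) = B_n$ (the Bernoulli number) whenever $n \neq 1$, together with the vanishing identity $B_{2j+1} = 0$ for all $j \geq 1$.

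First I would start from $\int_1^N f(x)\,\mathrm{d}x = \sum_{k=1}^{N-1}\int_k^{k+1} f(x)\,\mathrm{d}x$ and, on each subinterval, perform the substitution $x = k+t$, reducing to $\int_0^1 f(k+t)\,\mathrm{d}t$. Applying integration by parts with $u = f(k+t)$ and $\mathrm{d}v = \mathrm{d}t$, but choosing the antiderivative $v = t - \tfrac12 = B_1(t)$ rather than $v=t$, already produces the trapezoidal term: $\tfrac{f(k)+f(k+1)}{2} - \int_0^1 B_1(t)\,f'(k+t)\,\mathrm{d}t$. Summing over $k$ gives $\int_1^N f = \sum_{k=1}^N f(k) - \tfrac{f(1)+f(N)}{2} - \int_1^N \tilde B_1(x)\,f'(x)\,\mathrm{d}x$, where $\tilde B_n(x) := B_n(\{x\})$ is the periodic extension. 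This is the formula in the case $m=0$.

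Next I would iterate. Given an expression involving $\int_1^N \tilde B_{n}(x)\,f^{(n)}(x)\,\mathrm{d}x$, I integrate by parts once more using $\tilde B_n = \tfrac{1}{n+1}\,(\tilde B_{n+1})'$ (valid away from integers, and the boundary jumps cancel because $B_{n+1}(0)=B_{n+1}(1)$ for $n+1 \neq 1$). Each such step produces a boundary contribution of the form $\tfrac{B_{n+1}}{(n+1)!}\bigl(f^{(n)}(N) - f^{(n)}(1)\bigr)$ plus a new integral of the same shape with $n$ replaced by $n+1$. After $2m-1$ such iterations, I have accumulated boundary terms for $n = 1, 2, \ldots, 2m-1$. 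Here the key simplification occurs: since $B_{2j+1}=0$ for $j\geq 1$, all boundary terms with odd index $n \geq 3$ drop out, leaving exactly $\sum_{1 \leq j \leq m}\frac{B_{2j}}{(2j)!}\bigl(f^{(2j-1)}(N)-f^{(2j-1)}(1)\bigr)$, which matches the statement. Rearranging to isolate $\sum_{k=1}^N f(k)$ and identifying the final leftover integral as $R_m$ yields the claim.

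The main obstacle is a bookkeeping one rather than a conceptual one: carefully propagating the sign alternation produced at each integration-by-parts step, and verifying that the jumps of $\tilde B_{n+1}$ across integer points vanish so that the integration by parts is genuinely global on $[1,N]$ and not just local on each $[k,k+1]$. Once these cancellations and the vanishing of the odd $B_{2j+1}$ are handled, the remainder $R_m = -\int_1^N \tilde B_{2m}(x)\,\frac{f^{(2m)}(x)}{(2m)!}\,\mathrm{d}x$ drops out as exactly the residual integral after $2m$ integration-by-parts steps.
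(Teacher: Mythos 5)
Your argument is the standard and correct proof of the Euler--Maclaurin formula: telescoping the integral over unit intervals, integrating by parts against the Bernoulli polynomials via $B_n'(x)=nB_{n-1}(x)$, using $B_{n+1}(0)=B_{n+1}(1)$ to glue the periodic extension across integers, and invoking $B_{2j+1}=0$ for $j\geq 1$ to kill the odd boundary terms. Note, however, that the paper offers no proof to compare against --- the theorem is quoted in the appendix as a classical fact and used only as a black box to estimate $\sum_{r>r'>0} r'\log r'$ --- so there is no question of your route differing from the paper's; your derivation simply supplies the standard justification the paper takes for granted, and it correctly reproduces the remainder $R_m=-\int_1^N \frac{B_{2m}(\{x\})}{(2m)!}f^{(2m)}(x)\,\mathrm{d}x$ that the paper writes somewhat loosely as $B_{2m}\{1-x\}$.
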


\begin{mythe}
(Euler-Maclaurin summation formula, second form) Let $f(x)$ be a function defined
on the interval $(1, \infty]$ and suppose that the derivatives $f^{(i)}(x)$
exist and are absolutely integrable for $1 \leq i \leq 2m$, where $m$ is a
fixed constant. Then
\begin{center}
$\displaystyle \sum\limits_{1 \leq k \leq N}f(k) =
\int\limits_{1}^{N}f(x)\mathrm{d}x+\frac{1}{2}f(N)+C_f+\sum\limits_{1 \leq k
\leq m}\frac{B_{2k}}{(2k)!}f^{(2k-1)}(N)+R_m$
\end{center}
where $C_f$ is a constant associated with the function $f(x)$ and $R_{2m}$ is a
remainder term satisfying $\displaystyle |R_{2m}| =
O(\int\limits_{N}^{\infty}|f^{(2m)}(x)|\mathrm{d}x)$
\end{mythe}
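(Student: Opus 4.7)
The plan is to deduce the second form directly from the first form stated just above, by absorbing every contribution that is independent of $N$ (including the piece of the remainder integral over $[1,\infty)$) into a single constant $C_f$, and keeping only the $N$-dependent boundary and tail pieces explicit on the right-hand side.

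Concretely, I would start by writing out the first form as
\[
\sum_{1 \le k \le N} f(k) = \int_1^N f(x)\,dx + \tfrac{1}{2}f(N) + \tfrac{1}{2}f(1) + \sum_{1 \le k \le m}\frac{B_{2k}}{(2k)!}\bigl(f^{(2k-1)}(N)-f^{(2k-1)}(1)\bigr) + \widetilde{R}_m,
\]
where $\widetilde{R}_m = -\int_1^N B_{2m}\{1-x\}\frac{f^{(2m)}(x)}{(2m)!}\,dx$. Next, split this remainder as $\widetilde{R}_m = -\int_1^\infty(\cdot)\,dx + \int_N^\infty(\cdot)\,dx$; the absolute integrability of $f^{(2m)}$ together with the boundedness of the periodic Bernoulli polynomial $B_{2m}\{1-x\}$ guarantees that $\int_1^\infty B_{2m}\{1-x\}\tfrac{f^{(2m)}(x)}{(2m)!}\,dx$ converges absolutely and is therefore a finite constant. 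Define
\[
C_f := \tfrac{1}{2}f(1) - \sum_{1 \le k \le m}\frac{B_{2k}}{(2k)!}f^{(2k-1)}(1) - \int_1^\infty B_{2m}\{1-x\}\frac{f^{(2m)}(x)}{(2m)!}\,dx,
\]
together with the new remainder $R_m := \int_N^\infty B_{2m}\{1-x\}\tfrac{f^{(2m)}(x)}{(2m)!}\,dx$. Substituting these back into the first-form identity produces exactly the claimed expression. The remainder bound follows at once from $|B_{2m}\{1-x\}| \le \|B_{2m}\|_\infty$: one gets $|R_m| \le \tfrac{\|B_{2m}\|_\infty}{(2m)!}\int_N^\infty |f^{(2m)}(x)|\,dx = O\!\bigl(\int_N^\infty |f^{(2m)}(x)|\,dx\bigr)$.

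The main obstacle, and the only step that is not pure bookkeeping, is justifying that $C_f$ is genuinely an intrinsic constant of $f$ rather than an artifact of the chosen truncation order $m$, as the statement of the theorem suggests when it says \emph{a constant associated with the function}. I would address this by comparing the defining expressions for $C_f$ at successive indices $m$ and $m+1$: integrating by parts in $\int_1^\infty B_{2m}\{1-x\}f^{(2m)}(x)\,dx$, and using the polynomial identity $B'_{n+1}(x) = (n+1)B_n(x)$ together with the vanishing of $B_{2k+1}$ at integer arguments, produces precisely the extra derivative term $\tfrac{B_{2m+2}}{(2m+2)!}f^{(2m+1)}(1)$ and the next-order tail integral needed to match the $(m+1)$-version. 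Once this telescoping consistency across $m$ is verified, the second form is established.
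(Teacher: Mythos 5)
The paper does not prove this statement: it is quoted in the appendix as a classical tool (alongside the first form), so there is no proof of record to compare yours against. Judged on its own, your derivation is correct and is the standard route from the first form to the second: split the remainder integral as $\int_1^N = \int_1^\infty - \int_N^\infty$, use the hypothesis that $f^{(2m)}$ is absolutely integrable together with the boundedness of the periodic Bernoulli polynomial to see that the full integral over $[1,\infty)$ converges, fold it and the boundary data at $x=1$ into $C_f$, and keep the tail as the new remainder; the bound $|R_m| = O\bigl(\int_N^\infty |f^{(2m)}(x)|\,\mathrm{d}x\bigr)$ is then immediate from $|B_{2m}(\{1-x\})| \le \|B_{2m}\|_\infty$. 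Two small remarks. First, your final paragraph on the independence of $C_f$ from $m$ is a nice piece of care but is arguably not demanded by the statement, which only calls $C_f$ ``a constant associated with the function''; if you keep it, note that the integration by parts over $[1,\infty)$ also produces boundary terms at infinity, and you should say why these vanish --- absolute integrability of $f^{(i)}$ and $f^{(i+1)}$ forces $f^{(i)}(x)\to 0$ as $x\to\infty$ for $1\le i\le 2m-1$, which is exactly what is needed there. Second, the paper's statement is internally inconsistent (the display names the remainder $R_m$ while the bound below refers to $R_{2m}$); your version silently normalizes this, which is fine.
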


\begin{myle}
(Asymptotics of  $\displaystyle \sum_{r > r' >
0}r'\log{r'}$)
\end{myle}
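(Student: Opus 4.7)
The plan is to approximate $S(r) := \sum_{r'=1}^{r-1} r' \log r'$ (the $r'=0$ term vanishes by convention, since $0\log 0 := 0$) via the integral $\int_1^{r-1} x \log x \, dx$, using the Euler-Maclaurin formula (first form, stated in the appendix) to pin down the subleading terms. Since $f(x) = x \log x$ is increasing with $f''(x) = 1/(x \ln 2) > 0$ decaying rapidly, Euler-Maclaurin gives
\[ S(r) = \int_1^{r-1} x \log x \, dx + \tfrac{1}{2}(r-1)\log(r-1) + O(\log r), \]
where the $O(\log r)$ absorbs the $\tfrac{B_2}{12}(f'(r-1) - f'(1))$ term and the explicit bounded remainder.

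Computing the antiderivative $\int x \log x \, dx = \tfrac{x^2 \log x}{2} - \tfrac{x^2}{4 \ln 2}$ and then using $\log(r-1) \le \log r$ on the dominant $(r-1)^2$ piece yields
\[ S(r) \le \tfrac{(r-1)^2}{2} \log r - \tfrac{(r-1)^2}{4 \ln 2} + O\!\bigl((r-1)\log(r-1)\bigr). \]
So the claim reduces to showing that $\tfrac{(r-1)^2}{4 \ln 2}$ beats $\tfrac{(r-1)^2}{16}$ by enough to absorb the $O(r\log r)$ correction once $r \geq 16$.

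This is where the magic constants kick in: $\tfrac{1}{4 \ln 2} \approx 0.361 > \tfrac{1}{16} = 0.0625$, so the coefficient of $(r-1)^2$ on the left-hand side leaves a surplus of roughly $0.298\,(r-1)^2$, which eventually dominates the $O(r \log r)$ correction. The main obstacle is handling the boundary: the threshold ``$r \geq 16$'' is chosen precisely so that this surplus first overtakes the linear-times-log error, and the slack is narrow there, so one must verify the base case $r = 16$ by direct numerical computation (the left-hand side is $\approx 388$, the right-hand side $\approx 436$). Monotonicity of the gap for $r > 16$ is then a straightforward consequence of the explicit asymptotic expansion above, since every remaining term enters with the correct sign.
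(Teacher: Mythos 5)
Your proposal is correct and follows essentially the same route as the paper: approximate $\sum_{0<r'<r} r'\log r'$ by $\int_1^{r-1} x\log x\,\mathrm{d}x$ via the Euler--Maclaurin formula, extract the leading term $\tfrac{(r-1)^2}{2}\log r$ with a negative quadratic correction, and observe that this correction dominates the lower-order terms so that the $-\tfrac{(r-1)^2}{16}$ bound holds for $r\geq 16$. Your version is in fact slightly more careful than the paper's (you keep the $\tfrac{1}{4\ln 2}$ coefficient from the base-$2$ antiderivative, where the paper writes $\tfrac14$, and you verify the threshold case $r=16$ explicitly), but these are refinements of the same argument rather than a different one.
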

\begin{proof}

Fix $m=2$, then using Euler-Maclaurin formula (first form), we have
\begin{center}
$\displaystyle R_m =
\int\limits_{r-1}^{\infty}\frac{B_4}{4!}\frac{2\{1-x\}}{x^3}\mathrm{d}x
< \int\limits_{r-1}^{\infty}\frac{B_4}{4!}\frac{2}{x^3}\mathrm{d}x <
\frac{1}{(r-1)^2}$
\end{center}
Integrating $\displaystyle \int\limits_{1}^{r-1}x\log(x)\mathrm{d}x$ gives
\begin{center}
$\displaystyle
\frac{x^2}{2}\log(x)\bigg|_{1}^{r-1}-\int\limits_{1}^{r-1}\frac{r^2}{2}\frac{1}{r}\mathrm{d}x=
\frac{(r-1)^2}{2}\log(r-1)-\frac{(r-1)^2}{4}+\frac{1}{4}$
\end{center}
Hence,
\begin{center}
$\displaystyle \sum_{r > r' >0}r'\log{r'} =
\frac{(r-1)^2}{2}\log(r-1)-\frac{(r-1)^2}{4}+\frac{1}{4}+\frac{(r-1)\log(r-1)}{2}+\frac{B_2}{2!}\log(r-1)+\frac{B_4}{4!}(\frac{1}{(r-1)^2}-1)+R_m$,
\end{center}
We can easily show that,

$\displaystyle \frac{1}{r} \leq \log(r)-\log(r-1)$ and hence, $\displaystyle
\log(r-1)\leq \log(r)-\frac{1}{r} $ and we can upper bound the expression above
by
\begin{center}
$\displaystyle
\frac{(r-1)^2}{2}\log(r)-\frac{(r-1)^2}{4}+\frac{(r-1)\log(r)}{2}$
\end{center}

\end{proof}

\begin{myle}
(Approximation of a sum by a definite integral)

For an increasing and integrable on the summation domain function $f$, it holds
that:
\begin{equation}
\int_{s = a-1}^{b}f(s)\mathrm{d}s \leq \sum_{i = a}^{b} f(i) \leq \int_{s =
a}^{b+1}f(s)\mathrm{d}s
\end{equation}

Similarly, for a decreasing function $f$ it holds that:
\begin{equation}
\int_{s = a}^{b+1}f(s)\mathrm{d}s \leq \sum_{i = a}^{b} f(i) \leq \int_{s =
a-1}^{b}f(s)\mathrm{d}s
\end{equation}
\end{myle}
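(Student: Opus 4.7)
The plan is to establish the inequalities by the standard device of bounding $f(i)$ on each unit subinterval and summing. This is a textbook monotone-function argument, so the proof should be short and completely routine.

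First, I would handle the increasing case. For each integer $i$ with $a \leq i \leq b$, monotonicity of $f$ gives $f(s) \leq f(i)$ for every $s \in [i-1, i]$, hence $\int_{i-1}^{i} f(s)\,\mathrm{d}s \leq f(i)$. Summing these $b-a+1$ inequalities and using additivity of the integral over adjacent intervals yields $\int_{a-1}^{b} f(s)\,\mathrm{d}s \leq \sum_{i=a}^{b} f(i)$. For the upper bound, use instead $f(i) \leq f(s)$ for $s \in [i, i+1]$, so $f(i) \leq \int_{i}^{i+1} f(s)\,\mathrm{d}s$, and summing gives $\sum_{i=a}^{b} f(i) \leq \int_{a}^{b+1} f(s)\,\mathrm{d}s$.

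The decreasing case follows by the same argument with reversed inequalities, or more cleanly by applying the increasing case to $-f$ (which is increasing when $f$ is decreasing) and negating. Concretely, for $f$ decreasing we have $f(s) \geq f(i)$ on $[i-1, i]$ and $f(s) \leq f(i)$ on $[i, i+1]$, and the two summations give exactly the stated bounds.

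Honestly, there is no real obstacle here: integrability of $f$ on $[a-1, b+1]$ (which follows from monotonicity on a bounded interval) ensures the integrals exist, and the inequalities are immediate from the pointwise monotonicity comparison. The only minor point to mention is that the hypothesis ``integrable on the summation domain'' should be read as integrability on the slightly enlarged interval $[a-1, b+1]$, which is automatic for monotone functions, so no additional care is required.
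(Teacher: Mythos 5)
Your proof is correct: the unit-interval comparison ($f(s)\leq f(i)$ on $[i-1,i]$ and $f(i)\leq f(s)$ on $[i,i+1]$ for increasing $f$, summed over $i$) is the standard argument, and your remarks about integrability being automatic for monotone functions and about reading the domain as $[a-1,b+1]$ are apt. The paper itself states this lemma in the appendix without any proof, treating it as a known fact, so there is nothing to contrast your argument with; it fills the gap correctly.
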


\addtocontents{toc}{\vspace{2em}}  
\backmatter

\label{Bibliography}
\lhead{\emph{Bibliography}}  
\bibliographystyle{alpha}  
\bibliography{Bibliography}  

\begin{thebibliography}{VCFF09}

\bibitem[BS97]{DBLP:conf/soda/BentleyS97}
Jon~Louis Bentley and Robert Sedgewick.
\newblock Fast algorithms for sorting and searching strings.
\newblock In {\em SODA}, pages 360--369, 1997.

\bibitem[CLRS09]{DBLP:books/daglib/0023376}
Thomas~H. Cormen, Charles~E. Leiserson, Ronald~L. Rivest, and Clifford Stein.
\newblock {\em Introduction to Algorithms (3. ed.)}.
\newblock MIT Press, 2009.

\bibitem[DJW12]{DBLP:journals/algorithmica/DoerrJW12}
Benjamin Doerr, Daniel Johannsen, and Carola Winzen.
\newblock Multiplicative drift analysis.
\newblock {\em Algorithmica}, 64(4):673--697, 2012.

\bibitem[Flo64]{DBLP:journals/cacm/Floyd64a}
Robert~W. Floyd.
\newblock Algorithm 245: Treesort.
\newblock {\em Commun. ACM}, 7(12):701, 1964.

\bibitem[Fri56]{bubblesort}
E.~Friend.
\newblock Sorting on electronic computer systems.
\newblock {\em J. ACM}, 3:134--168, 1956.

\bibitem[GKP94]{DBLP:books/daglib/0076724}
Ronald~L. Graham, Donald~E. Knuth, and Oren Patashnik.
\newblock {\em Concrete mathematics - a foundation for computer science (2.
  ed.)}.
\newblock Addison-Wesley, 1994.

\bibitem[Han02]{DBLP:conf/stoc/Han02}
Yijie Han.
\newblock Deterministic sorting in o(nlog log n) time and linear space.
\newblock In {\em STOC}, pages 602--608, 2002.

\bibitem[HT02]{DBLP:conf/focs/HanT02}
Yijie Han and Mikkel Thorup.
\newblock Integer sorting in 0(n sqrt (log log n)) expected time and linear
  space.
\newblock In {\em FOCS}, pages 135--144, 2002.

\bibitem[Knu73]{DBLP:books/aw/Knuth73}
Donald~E. Knuth.
\newblock {\em The Art of Computer Programming, Volume III: Sorting and
  Searching}.
\newblock Addison-Wesley, 1973.

\bibitem[Sei10]{DBLP:conf/soda/Seidel10}
Raimund Seidel.
\newblock Data-specific analysis of string sorting.
\newblock In {\em SODA}, pages 1278--1286, 2010.

\bibitem[SF96]{DBLP:books/daglib/0090945}
Robert Sedgewick and Philippe Flajolet.
\newblock {\em An introduction to the analysis of algorithms}.
\newblock Addison-Wesley-Longman, 1996.

\bibitem[SS93]{DBLP:journals/jal/SchafferS93}
Russel Schaffer and Robert Sedgewick.
\newblock The analysis of heapsort.
\newblock {\em J. Algorithms}, 15(1):76--100, 1993.

\bibitem[VCFF09]{DBLP:conf/icalp/ValleeCFF09}
Brigitte Vall{\'e}e, Julien Cl{\'e}ment, James~Allen Fill, and Philippe
  Flajolet.
\newblock The number of symbol comparisons in quicksort and quickselect.
\newblock In {\em ICALP (1)}, pages 750--763, 2009.

\bibitem[Vui78]{Vuillemin:1978:DSM:359460.359478}
Jean Vuillemin.
\newblock A data structure for manipulating priority queues.
\newblock {\em Commun. ACM}, 21(4):309--315, April 1978.

\bibitem[Wik]{wikisortingalgorithm}
Wikipedia.
\newblock Sorting algorithm.

\bibitem[Wil64]{williams1964algorithm}
J.~W.~J. Williams.
\newblock Algorithm 232: Heapsort.
\newblock {\em Communications of the ACM}, 7(6):347--348, 1964.

\end{thebibliography}

\end{document}